\newtheorem{theorem}{Theorem}
\newtheorem{atheorem}{Theorem}[section]
\newtheorem{lemma}[theorem]{Lemma}
\newtheorem{aproposition}[atheorem]{Proposition}
\theoremstyle{definition}
\newtheorem{defn}[theorem]{Definition}
\theoremstyle{remark}
\newtheorem{rem}[theorem]{Remark}
\newcommand{\set}{\{\,1, \, \ldots\,,\,\n\,\}}
\def\RR{\mathbb{R}}
\def\CC{\mathbb{C}}
\newcommand{\has}{{{\hat{\as}}}}
\newcommand{\cx}{{\cal X}}
\newcommand{\n}{{\cal C}}
\newcommand{\pk}{{\cal P}_k}
\newcommand{\mm}{{N}}
\newcommand{\cb}[1]{{ {#1}}}
\newcommand{\bs}[1]{{{#1}}}
\newcommand{\m}{${\cal \mm}$}
\newcommand{\yc}{\left\{1,\cdots,\n\right\}}
\newcommand{\G}{{\cal G}}
\newcommand{\ff}{{f}}
\newcommand{\A}{{\Phi}}
\newcommand{\sqm}{\cb{\frac{1}{\sqrt{\mm}}}}
\newcommand{\range}{\mbox{range}}
\newcommand{\PT}{\mathbb{ P}_\lambda}
\newcommand{\nor}{\frac{1}{\sqrt{\mm}}}
\newcommand{\as}{{\alpha}}
\newcommand{\yy}{{f}}
\newcommand{\Ex}{\mathbb{E}}
\newcommand{\lab}{\ell_2/\ell_1}
\newcommand{\lbb}{\ell_1/\ell_1}
\newcommand{\icd}{}
\newcommand{\Xoni}{X_{1\rightarrow i}}
\newcommand{\Xonim}{X_{1\rightarrow (i-1)}}
\newcommand{\Xim}{X_{(i+1)\rightarrow m}}
\newcommand{\xim}{x_{(i+1)\rightarrow m}}
\newcommand{\xoni}{x_{1\rightarrow i}}
\newcommand{\xonim}{x_{1\rightarrow (i-1)}}
\newcommand{\cxoni}{\cx_{1\rightarrow i}}
\newcommand{\cxonim}{\cx_{1\rightarrow (i-1)}}
\newcommand{\cxim}{\cx_{(i+1)\rightarrow m}}
\newcommand{\ignore}[1]{}           
\title{Construction of a Large Class of Deterministic
Sensing Matrices that Satisfy a Statistical Isometry
Property}
\author{Robert Calderbank\thanks{The work of R. Calderbank and S. Jafarpour is supported in part by NSF under grant DMS 0701226, by ONR under grant N00173-06-1-G006, and by AFOSR under grant FA9550-05-1-0443},~\IEEEmembership{Fellow,~IEEE},
Stephen Howard,~\IEEEmembership{Member,~IEEE}, and
Sina Jafarpour,~\IEEEmembership{Student Member,~IEEE}}
\begin{document}

\maketitle         
\begin{abstract}
Compressed Sensing aims to capture attributes of $k$-sparse signals using very few
measurements. In the standard Compressed Sensing paradigm, the $\mm\times \n$ 
measurement matrix $\A$ is required to act as a near isometry on the set
of all $k$-sparse signals (Restricted Isometry Property or RIP). 
If $\A$ satisfies the RIP, then Basis Pursuit or Matching Pursuit recovery algorithms
can be used to recover any $k$-sparse vector $\as$ from the $\mm$ measurements $\A \as$. 
Although it is known that certain probabilistic processes generate $\mm \times \n$
matrices that satisfy RIP with high probability, there is no practical algorithm for verifying whether
a given sensing matrix $\A$ has this property, crucial for the feasibility
of the standard recovery algorithms. 
In contrast this paper provides simple criteria 
that guarantee that a deterministic sensing matrix {\icd{satisfying these criteria}} acts as a near isometry on an overwhelming majority of $k$-sparse signals; in particular, most such signals have a unique representation in the
measurement domain. Probability still plays a critical role, but
it enters the signal model rather than the construction of the sensing matrix. An essential element in our construction is that we require the columns of the sensing matrix to form a group under
pointwise multiplication. The construction allows recovery methods for which the expected performance is sub-linear in $\n$, and only quadratic in $\mm$, as compared to the
 super-linear complexity in $\n$ of the Basis Pursuit or Matching Pursuit algorithms;  
the focus on expected performance is more typical of mainstream signal
processing than the worst-case analysis that prevails in standard Compressed Sensing. 
Our framework encompasses many families of
deterministic sensing matrices, including those formed from discrete chirps,
Delsarte-Goethals codes, and extended BCH codes.
\end{abstract}
\begin{keywords} Deterministic Compressed Sensing, Statistical Near Isometry, Finite Groups, Martingale Sequences, McDiarmid Inequality, Delsarte-Goethals Codes.
\end{keywords}
\section{Introduction and Notations}
The central goal of compressed sensing is to capture attributes of a signal using 
very few
measurements. In most work to date, this broader objective is exemplified by the important special
case in which a $k$-sparse vector $\as \in \mathbb{R}^{\n}$
(with $\n$ large) is to be reconstructed from a small number
$\mm$ of linear measurements with $k \,<\, \mm \,<\, \n$. 
In this problem, the measurement data constitute a vector
$f\,=\,\mm^{-1/2}\,\A\as\,$, where $\A$ is an $\mm \times \n$ matrix 
called the {\em sensing matrix}. Throughout this paper we shall use the notation
$\varphi_j$ for
the $j$-th column of the sensing matrix $\A$; its entries will be denoted by 
$\varphi_j(x)$ (with label $x$ varying from $1$ to $\mm$). In other words,
$\varphi_j(x)$ is the $x$-th row and $j$-th column element of $\A$. \\
The two fundamental questions in compressed sensing are:
how to construct suitable sensing matrices $\A$, and how to recover 
$\as$ from $f$ efficiently;
it is also of practical importance to be resilient to
measurement noise and to be able to reconstruct (approximations to) 
$k$-compressible signals, i.e. signals {\icd{that have more than $k$ nonvanishing entries, but}} where only $k$ entries are
significant and the remaining entries are close to zero.
\begin{center}
\vspace{-0.7cm}
\begin{table*}[t]
\caption{Properties of  $k$-sparse reconstruction algorithms that employ random sensing matrices with $\mm$ Rows and $\n$ Columns. The property RIP-1 is the counterpart of RIP for the $\ell_1$ metric and it provides guarantees on the performance of sparse reconstruction algorithms that employ linear programming \cite{BGIST}. Note that explicit construction of the expander graphs requires a large number of measurements, and that more practical alternatives are random sparse matrices which are expanders with high probability.\vspace{0.5cm}}
\label{ralgs}
\begin{center}

\begin{minipage}{\textwidth} 
\begin{center}
\begin{tabular}{|c|c|c|c|c|c|c|c|c|}
 \hline 
Approach & Number of & Complexity & Compressible  & Noise   & RIP\\
 &  Measurements $\mm$ & & Signals & Resilience &
\\
\hline 
Basis Pursuit& $k \log\left(\frac{\n}{k}\right)$ & ${\n}^3$ & Yes & Yes & Yes 
\\ (BP) \cite{CRT1} & & & & &   
\\
\hline
Orthogonal Matching   & $k \log^\alpha(\n)$ & $k^2 \log^\alpha(\n)$ & Yes & No & Yes 
\\ Pursuit (OMP) \cite{GSTV} & & & & &   
\\
\hline
Group Testing \cite{CM}& $k \log^\alpha(\n)$ & $k \log^\alpha(\n)$ & Yes & No & No 
\\ 
\hline
Expanders (Unique &$k \log\left(\frac{\n}{\mm}\right)$&$\n \log\left(\frac{\n}{\mm}\right)$& $\mbox{Yes}^{\circ}$ &$\mbox{Yes}^{\circ}$ &RIP-1
\\
Neighborhood) \cite{Sina} & & & & &
\\ 
\hline
Expanders (BP) \cite{BGIST} & $k \log\left(\frac{\n}{k}\right)$ & ${\n}^3$ & Yes & Yes &   RIP-1
\\
\hline
Expander Matching & $k \log\left(\frac{\n}{k}\right)$ & $\n \log\left(\frac{\n}{k}\right)$ & $\mbox{Yes}^{\circ}$ & $\mbox{Yes}^{\circ}$ &  RIP-1
\\ Pursuit (EMP) \cite{IR} & & & & &   
\\
\hline
Sparse Matching & $k \log\left(\frac{\n}{k}\right)$ & $\n \log\left(\frac{\n}{k}\right)$ & $\mbox{Yes}^{\circ}$ & $\mbox{Yes}^{\circ}$ &  RIP-1
\\ Pursuit (SMP) \cite{IR} & & & & &   
\\
\hline
CoSaMP \cite{NT}& $k \log\left(\frac{\n}{k}\right)$ & $\n k \log\left(\frac{\n}{k}\right)$ & Yes & Yes  & Yes
\\
\hline
SSMP \cite{DM} & $k \log\left(\frac{\n}{k}\right)$ & $\n k \log\left(\frac{\n}{k}\right)$ & Yes & Yes & Yes
\\
\hline 
\end{tabular}
\end{center}
$\circ$ \cite{Sina} provides an algorithm with smaller constants that is easier to implement and analyze, whereas \cite{IR} is able to handle more general noise models.
\end{minipage}
\vspace{-0.5cm}
\end{center}
\end{table*}
\end{center}

The work of Donoho \cite{Donoho} and of Cand\`{e}s, Romberg and Tao 
\cite{CT}, \cite{CRT1}, \cite{CRT2}
provides fundamental insight into the geometry of sensing matrices. 
This geometry is expressed by e.g. the  Restricted Isometry
Property (RIP), formulated by Cand\`{e}s and Tao \cite{CT}: a sensing matrix satisfies
the $k$-Restricted Isometry Property if it acts as a
near isometry on all $k$-sparse vectors; to ensure unique and stable
reconstruction of $k$-sparse vectors, it is sufficient that $\A$ satisfy 
$2k$-RIP. When $\mm/\n$
and/or $k/\mm$ are (very) small, deterministic RIP matrices 
have been constructed using methods from approximation theory \cite{Devore} and coding
theory \cite{IND}. More attention has been paid to probabilistic constructions where the entries of
the sensing matrix are generated by an i.i.d Gaussian or Bernoulli process or from random
Fourier ensembles, in which larger values of $\mm/\n$
and/or $k/\mm$ can be considered. These sensing matrices are known to satisfy the $k$-RIP with high probability
\cite{Donoho}, \cite{CT} and the number $\mm$ of measurements is 
$k \log\frac{\n}{k}\,$.
This is best possible in the
sense that approximation results of Kashin \cite{6} and Glushin \cite{5} imply that 
$\Omega(\,k\,\log\,\frac{\n}{k}\,)$
measurements are required for sparse reconstruction using 
$\ell_1$-minimization methods. 
Constructions
of random sensing matrices of similar size that have the RIP 
but require a smaller degree
of randomness, are given by several approaches including filtering \cite{Nowak}, 
\cite{Tropp} and
expander graphs \cite{GLR}, \cite{BGIST}, \cite{IR}, \cite{Sina}. 

The role of random measurement in
compressive sensing can be viewed as analogous to the role of random coding in Shannon theory. Both provide
worst case performance guarantees in the context of an adversarial signal/error model. Random
sensing matrices are easy to construct, and are $2k$-RIP with high probability.
As in coding theory, this randomness has its drawbacks, briefly described as follows:
\\
$\bullet$ First, efficiency in sampling comes at the
cost of complexity in reconstruction (see Table 1) and at the cost of error in signal
approximation (see Section 5). \\
$\bullet$ Second, storing the
entries of a random sensing matrix may require significant space, in contrast to deterministic matrices where the entries can
often be computed on the fly without requiring any storage. \\
$\bullet$ Third, there is no algorithm for
efficiently verifying whether a sampled sensing matrix satisfies RIP, a condition that is essential for the
recovery guarantees of the Basis Pursuit and Matching Pursuit algorithms on {\em any} sparse
signal. \\
These drawbacks lead us to consider constructions with deterministic sensing matrices, for which the performance is 
guaranteed in expectation only, for $k$-{\icd{sparse}} signals that are random variables, but which do not suffer from the same drawbacks.
The framework presented here provides \\
$\bullet$ easily checkable conditions on
special types of deterministic sensing matrices
guaranteeing 
successful recovery of {\em all but an exponentially small fraction} of $k$-sparse signals; \\
$\bullet$ in many examples, the entries of these matrices can
be computed on the fly without requiring any storage, and\\
$\bullet$ recovery algorithms with lower complexities than Basis Pursuit and Matching Pursuit
algorithms.\\
To make this last point more precise, we note that Basis Pursuit and Matching Pursuit
algorithms rely heavily on {\em matrix-vector} multiplication, and are super-linear with respect to $\n$,
the dimension of the data domain. The reconstruction
algorithm for the framework presented here (see Section 5) requires only 
{\em vector-vector} multiplication in the measurement domain; as a
result, its recovery time is only quadratic in the dimension $\mm$ of the measurement domain.
\begin{table*}[t]
\caption{Properties of $k$-sparse reconstruction algorithms that employ deterministic sensing matrices with $\mm$ Rows and $\n$ Columns. Note that for LDPC codes $k\ll \n$. Note also that RIP holds for random matrices where it implies existence of a low-distortion embedding from $\ell_2$ into $\ell_1$. Guruswami et al. \cite{GLR} proved that this property also holds for deterministic sensing matrices constructed from expander codes. It follows from Theorem \ref{maintheorem} in this paper that  sensing matrices based on discrete chirps and Delsarte-Goethals codes satisfy the UStRIP.
}
\label{dalgs}
\begin{center}
\begin{minipage}{\textwidth} 
\begin{center}
\begin{tabular}{|c|c|c|c|c|c|c|c|c|}
 \hline 
Approach & Number of & Complexity & Compressible  & Noise   & RIP\\
 &  Measurements $\mm$ & & Signals & Resilience &
\\
\hline 
Low Density & & & & &  
\\ Parity Check  Codes & $k \log \n$ & $\n \log \n$ & Yes & Yes & No  
\\ (LDPC) \cite{LDPC}& & & & &   
\\ 
\hline 
Low Density & & & & &  
\\ Parity Check  Codes & $k \log\left(\frac{\n}{k}\right)$ & $\n $ & Yes & Yes & No  
\\ (LDPC) \cite{apt}& & & & &   
\\
\hline
Reed-Solomon & $k$ & $k^2$ & No & No &  No 
\\ codes \cite{AT}& & & & &   
\\
\hline
Explicit Construction of & $\n$ & $\n$ & Yes & Yes &  No 
\\  Expander Graphs \cite{xh}& & & & &   
\\
\hline
Embedding  $\ell_2$ into & $k (\log \n)^{\alpha \log \log \n}$ &${\n}^3$ & Yes & No & No 
\\$\ell_1$ (BP) \cite{GLR} & & & & &   
\\
\hline
Extractors \cite{IND} & $k{\n}^{o(1)}$ & $k{\n}^{o(1)}\log(\n)$ & No & No & No
\\
\hline
Discrete chirps \cite{LHSC}  & $\sqrt{\n}$ & $k \mm$ $\log$ $\mm$  & Yes & Yes &  UStRIP
\\
\hline
\cb{Delsarte-Goethals codes}  & \cb{$k \log\n$} &  \cb{$k^2 \log^{2+o(1)}\n$} & Yes & Yes & UStRIP
\\ \cb{This Paper, \cite{HSC,quad}}& & & & &   
\\
\hline 
\end{tabular}
\end{center}
\end{minipage}
\end{center}
\end{table*}

We suggest that the role of the deterministic measurement matrices presented here for compressive sensing is
analogous to the role of structured codes in communications practice:  in both cases 
fast encoding and decoding algorithms are emphasized, and typical rather than worst
case performance is optimized. We are not the only ones seeking inspiration in coding theory 
to construct deterministic matrics for compressed sensing; Table 2 gives an overview
of approaches in the literature that employ deterministic sensing matrices, several of
which are based on linear codes (cf. \cite{LDPC} and \cite{AT}) and provide
expected-case rather than worst-case performance guarantees.
It is important to note (see Table 2) that although the use of linear codes makes fast algorithms possible for sparse reconstruction, these are not always resilient to noise.
Such non-resilience ma{\icd{n}}ifests itself in e.g. Reed-Solomon (RS) constructions \cite{AT}; 
the RS reconstruction algorithm (the roots of which go back to 1795! -- see \cite{prony},
\cite{Wolf}) uses the input data to construct
an error-locator polynomial; the roots of this polynomial identify the signals appearing in the
sparse superposition. Because the correspondence between the coefficients of a polynomial and
its roots is not well conditioned, it is very difficult to deal with compressible signals and
noisy measurements in RS-based approaches.

Because we will be interested in expected-case performance only, we need not impose
RIP; we shall instead work with the weaker Statistical Restricted Isometry Property.
More precisely, we define
\begin{defn} {\bf{(}}$\boldsymbol{(k,\epsilon,\delta)}${\bf{-StRIP matrix)}}\\
An $\,\mm \times \n$ (sensing) matrix $\A$ is said to be a
$(k,\epsilon,\delta)$-Statistical Restricted Isometry Property matrix
[abbreviated $(k,\epsilon,\delta)$-StRIP matrix] if, for $k$-sparse
vectors $\as \in \RR^{\n}$, the inequalities
\begin{equation}
(1-\epsilon)\,\|\as\|^2 \,\leq \,\left|\!\left|\,\frac{1}{\sqrt{\mm}}\A\as \, \right|\!\right|^2 \,\leq\,(1+\epsilon)\,\|\as\|^2 \,,
\label{strip}
\end{equation}
hold with probability exceeding $1-\delta\,$ (with respect to a uniform distribution of the vectors
$\as$ among all $k$-sparse vectors in $\RR^{\n}$ of the same norm).\footnote{Throughout the paper norms without subscript denote $\ell_2$-norms}.
\end{defn}
There is a slight wrinkle in that, unlike the simple RIP case, StRIP does not automatically imply unique reconstruction, not even with high probability. 
If an $\,\mm \times \n$ matrix $\A$ is $(2k,\epsilon,\delta)-$StRIP, then, given a $k$-sparse vector $\as$, it does follow that $\A$ maps any other randomly picked $k$-sparse signal $\beta$ to a 
different image, i.e. $\A\,\as \,\neq \, \A \,\beta$, with probability exceeding $1-\delta\,$ (with respect to the random choice of $\beta$). This does not mean,
however, that uniqueness is guaranteed with high probability: requiring that 
the measure of 
$\{\,\as \in \RR^{\n} \,;\,\as\, \mbox{ is } k\mbox{-sparse and there is a different } k-\mbox{sparse } \,\beta \in \RR^{\n} \,\mbox{ for which }\, \A\,\as\,=\,\A\,\beta\, \}$
 be small, is a more st{\icd{r}}ingent requirement  than  that the measure of
$ \{\,\beta \in \RR^{\n} \,;\,\beta \,\neq\, \as
\,\mbox{ and }\, \A\,\as\,=\,\A\,\beta\, \}\,$ be small for 
all $k$-sparse $\as$. For this reason, we also introduce the following definition:
\begin{defn} {\bf{(}}$\boldsymbol{(k,\epsilon,\delta)}${\bf{-UStRIP matrix)}}\\
An $\,\mm \times \n$ (sensing) matrix $\A$ is said to be a
$(k,\epsilon,\delta)$-Uniqueness-guaranteed Statistical Restricted Isometry Property matrix
[abbreviated $(k,\epsilon,\delta)-$UStRIP matrix] if $\A$ is a 
$(k,\epsilon,\delta)$-StRIP matrix, and 
\[
\{\,\beta \in \RR^{\n} \,;\,\, \A\,\as\,=\,\A\,\beta \} \,=\,
\{ \, \as \,\}
\]
with probability exceeding $1-\delta\,$ (with respect to a uniform distribution of the vectors
$\as$ among all $k$-sparse vectors in $\RR^{\n}$ of the same norm).
\end{defn}

Again, we are not the first to propose a weaker version of RIP that permits 
the construction of deterministic sensing matrices. {\icd{The construction by
Guruswami et al. in \cite{GLR} can be viewed as another instance of a weakening of RIP, in the following different direction.}}  RIP implies that
$\A$ defines a low-distortion $\ell_2$-$\ell_1$-embedding that plays a 
crucial role
in the proofs of \cite{Donoho},
\cite{CT}, \cite{CRT1}, \cite{CRT2}. 
In \cite{GLR}, Guruswami et al. prove that this $\ell_2$-$\ell_1$-embedding property 
also holds
for deterministic sensing matrices constructed from expander codes. These matrices satisfy an ``almost Euclidean null space property'' property, that is for any $\alpha$ in the null space of $\Phi$,  $\frac{\sqrt{N}\|\alpha\|_2}{\|\alpha\|_1}$ is bounded by a constant ; this is their main 
tool to obtain the results reported in Table 2.

In this paper we formulate simple design rules, imposing that the columns
of the sensing matrix form a group under pointwise multiplication, that all row sums vanish, that different rows are orthogonal, 
and requiring a simple upper bound on the absolute value of any column sum (other than the multiplicative identity). 
The properties we require
are satified by a large class of matrices constructed by exponentiating codewords from a linear
code; several examples are given in Section 2. In Sections 3, we show that our relatively weak design rules are suficient to guarantee that $\A$ is UStRIP, provided the parameters satisfy certain constraints. 
The group property makes it possible to avoid intricate
combinatorial reasoning about coherence of collections of mutually unbiased bases (cf. \cite{GH}). Section 4 applies our results to the case where the sensing matrix is formed by taking random rows of the FFT matrix. In Section 5 we
emphasize a particular family of constructions involving subcodes of the second order Reed-Muller code; in this case 
codewords correspond to multivariable quadratic functions defined over the binary field or the
integers modulo 4. Section~\ref{sec:noise} provides a discussion regarding the noise resilience.

\section{StRIP-able: Basic Definitions, with Several Examples}
\label{sec:matrices}
In this section we formulate three basic 
conditions and give examples of deterministic sensing matrices $\A$ with $\mm$ rows and $\n$ columns
that satisfy these conditions. Note that throughout the paper, we shall assume (without stating this again explicitly) that $\A$ has no repeated columns.

\begin{defn}
An $\mm \times \n -$matrix $\A$ is said to be $\eta-${\bf{StRIP-able}}, where
$\eta$ satisfies $0\,<\,\eta\,\leq\,1$, if the following three conditions are satisfied:
\begin{itemize}
\item {\bf{(St1)}} The rows of $\A$ are orthogonal, and all the row sums are zero. i.e.
\begin{eqnarray}
&&\sum_{j=1}^{\n}\, \varphi_j(x)\,\overline{\varphi_j(y)} \,=\, 0 \mbox{ if } x\,\neq\,y \label{st1_a}\\
&&\sum_{j=1}^{\n}\, \varphi_j(x)\,=\,0 ~, \mbox{ for all } x ~.\label{st1_b}
\end{eqnarray}
\item {\bf{(St2)}} The columns of $\A$ form a group under ``pointwise multiplication'', defined as follows 
\begin{eqnarray}
&&\mbox{for all } j,\,j' \in \{1,\ldots,\n\},\nonumber\\&&\mbox{there exists a }
j'' \in \{1,\ldots,\n\} \mbox{ such that}\nonumber\\
&&\quad\mbox{for all } 
x\,:~ \varphi_j(x) \,\varphi_{j'}(x)\,=\,\varphi_{j''}(x)~.
\label{st2}
\end{eqnarray}
In particular, there is one column of $\A$ for which all the entries are 1, and that acts as a unit for this group operation; this column will be denoted by $\mathbf{1}$.
Without loss of generality, we will assume the columns of $\A$ are ordered so that
$\varphi_1\,=\,\mathbf{1}$, i.e. $\varphi_1(x)\,=\,1$ for all $x$.
\item {\bf{(St3)}} 
For all $j \in \{2,\ldots,\n\}$, 
\begin{equation}
\left|\sum_x\varphi_j(x) \right|^2 \, \leq \, \mm^{2-\eta}~.
\label{st3}
\end{equation}
\end{itemize}
\end{defn}

{\em Remarks}\\
1. Condition (\ref{st3}) applies to all columns {\em except} the first column (i.e. the column  which consists of all ones).\\
2. The justification of the name {\em{StRIP-able}} will be given in the next section.\\
3. When the value of $\eta$ in (\ref{st3}) does not play a special role, we just don't spell it out explicitly, and simply call $\A$ StRIP-able.

The conditions (\ref{st1_a}-\ref{st3}) have the following immediate consequences:

\begin{lemma}
If the matrix $\A$ satisfies {\rm{(\ref{st2})}}, then $\left|\varphi_j(x)\right|\,=\,1$,
for all $j$ and all $x$.
\label{ll1}
\end{lemma}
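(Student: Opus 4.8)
The plan is to extract the claim directly from the group structure imposed by \textbf{(St2)}, with no reference to the other two conditions. First I would fix an arbitrary column index $j$ and an arbitrary row index $x$, and consider the sequence of columns obtained by repeatedly multiplying $\varphi_j$ pointwise with itself. Since the columns of $\A$ form a finite group under pointwise multiplication (there are only $\n$ of them), this sequence $\varphi_j,\,\varphi_j^2,\,\varphi_j^3,\ldots$ (where powers are taken entrywise) must be eventually periodic; in fact, because the group is finite, $\varphi_j$ has finite multiplicative order, say $d$, so that $\varphi_j^d\,=\,\mathbf{1}$, the all-ones column that serves as the group identity.

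Evaluating the identity $\varphi_j^d\,=\,\mathbf{1}$ at the row $x$ gives $\varphi_j(x)^d\,=\,1$, so $\varphi_j(x)$ is a $d$-th root of unity in $\CC$ (or in $\RR$, if the matrix is real). Either way, every entry lies on the unit circle, hence $|\varphi_j(x)|\,=\,1$. Since $j$ and $x$ were arbitrary, this holds for all $j$ and all $x$, which is exactly the assertion of Lemma~\ref{ll1}.

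There is really no hard part here — the only thing to be careful about is the justification that $\varphi_j$ has finite order, which is immediate once one observes that \textbf{(St2)} makes the set of columns closed under the (clearly associative) pointwise product and that it contains a two-sided identity $\mathbf{1}$; closure of a finite set under an associative operation with identity, together with the existence of inverses asserted in \textbf{(St2)}, makes it a group, and every element of a finite group has finite order. Alternatively, one can avoid invoking inverses explicitly: the map $k \mapsto \varphi_j^k$ from $\N$ into the finite set of columns cannot be injective, so $\varphi_j^{a}\,=\,\varphi_j^{b}$ for some $a<b$; evaluating at $x$ gives $\varphi_j(x)^{a}\,=\,\varphi_j(x)^{b}$, and since a complex number satisfying $z^a\,=\,z^b$ with $a\neq b$ is either $0$ or has $|z|\,=\,1$, it remains only to rule out $\varphi_j(x)\,=\,0$. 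But if $\varphi_j(x)\,=\,0$, then $\varphi_{j''}(x)\,=\,\varphi_j(x)\varphi_{j'}(x)\,=\,0$ for every $j'$, which would force an entire row of $\A$ to vanish — in particular $\varphi_1(x)\,=\,1$ would fail, contradicting the identity column; hence $\varphi_j(x)\neq 0$ and $|\varphi_j(x)|\,=\,1$.
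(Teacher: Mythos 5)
Your proof is correct and is essentially the paper's argument: both exploit the finiteness of the group guaranteed by (St2) to conclude each entry is a root of unity (the paper phrases it as the row entries $\left(\varphi_j(x)\right)_j$ forming a finite multiplicative group of complex numbers, you phrase it as $\varphi_j$ having finite order $d$ so that $\varphi_j(x)^d=1$). Your extra care in ruling out zero entries is a nice touch that the paper's one-line proof glosses over, but it does not change the approach.
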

\begin{proof}
For every $x$,  
$\left(\varphi_j(x)\right)_{j \in \{1,\ldots,\n\}}$ is a group of complex numbers under multiplication; all finite groups of this type consist of unimodular numbers.
\end{proof}

\begin{lemma}
If the matrix $\A$ satisfies {\rm{(\ref{st2})}} , then the collection of columns of $\A$ is closed under complex conjugation, i.e.
$\mbox{for all } j \in \{1,\ldots,\n\},\nonumber$ $\mbox{there exists a }
j'  \in \{1,\ldots,\n\}$ \begin{equation}\mbox{ such that, for all } 
x, \quad\varphi_{j'}(x) \,=\,\overline{\varphi_{j}(x)}~.
\label{st4}
\end{equation}
\end{lemma}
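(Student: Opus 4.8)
The plan is to exploit the group structure from (St2) together with the unimodularity already established in Lemma~\ref{ll1}. For each fixed $x$, the tuple $\left(\varphi_j(x)\right)_{j\in\{1,\ldots,\n\}}$ ranges over a finite group $G_x$ of complex numbers under multiplication, and by Lemma~\ref{ll1} every element of $G_x$ lies on the unit circle. Hence for any $j$, the inverse of $\varphi_j(x)$ in $G_x$ is simply $\overline{\varphi_j(x)}$, since $\varphi_j(x)\,\overline{\varphi_j(x)} = |\varphi_j(x)|^2 = 1$.

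The only real content is to show that the choice of index realizing this inverse can be made \emph{independent of $x$}. First I would fix an arbitrary $j\in\{1,\ldots,\n\}$. By (St2), the map sending a column index to another column index obtained by pointwise multiplication by $\varphi_j$ is well-defined; that is, there is a function $\sigma_j\colon\{1,\ldots,\n\}\to\{1,\ldots,\n\}$ such that $\varphi_j(x)\,\varphi_{j'}(x)=\varphi_{\sigma_j(j')}(x)$ for all $x$ and all $j'$. Applying this with $j'=1$ (recall $\varphi_1=\mathbf{1}$) is not quite what we want; instead I would iterate. Consider the sequence $j,\ \sigma_j(j),\ \sigma_j^2(j),\ldots$; since the index set is finite and $\varphi_{\sigma_j^k(j)}(x) = \big(\varphi_j(x)\big)^{k+1}$ for every $x$, and since each $\varphi_j(x)$ has finite multiplicative order, there is some power $m\ge 1$ with $\big(\varphi_j(x)\big)^{m}=1$ for all $x$ simultaneously — one may take $m$ to be the order of $\varphi_j$ as an element of the (finite) group generated by the columns, or simply the least common multiple of the pointwise orders, which is finite because there are only $\n$ columns and hence only finitely many distinct pointwise values. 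Then set $j' = \sigma_j^{\,m-2}(j)$ when $m\ge 2$ (and $j'=j$ with an appropriate degenerate reading when $m=1$, where $\varphi_j\equiv 1$), so that $\varphi_j(x)\,\varphi_{j'}(x) = \big(\varphi_j(x)\big)^{m} = 1 = \varphi_1(x)$ for all $x$. This forces $\varphi_{j'}(x) = \big(\varphi_j(x)\big)^{-1} = \overline{\varphi_j(x)}$ for all $x$, which is exactly \eqref{st4}.

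A cleaner route, which I would actually write up, avoids the power-counting: since the columns form a \emph{group} under pointwise multiplication (not merely a monoid), the column $\varphi_j$ has a group inverse, i.e.\ there exists $j'\in\{1,\ldots,\n\}$ with $\varphi_j(x)\,\varphi_{j'}(x) = \varphi_1(x) = 1$ for all $x$ — this is immediate from (St2) plus the identification of the all-ones column as the unit. Combining $\varphi_j(x)\,\varphi_{j'}(x)=1$ with $|\varphi_j(x)|=1$ from Lemma~\ref{ll1} gives $\varphi_{j'}(x) = 1/\varphi_j(x) = \overline{\varphi_j(x)}$, as desired, and $j'$ does not depend on $x$ by construction.

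There is no serious obstacle here; the main point to be careful about is logical rather than computational — namely that (St2) as stated only asserts closure under pointwise multiplication, so one must either invoke the stated hypothesis that the columns form a \emph{group} (hence inverses exist as columns) or derive inverses from finiteness as in the second paragraph. Either way the unimodularity of Lemma~\ref{ll1} is what upgrades ``multiplicative inverse'' to ``complex conjugate.''
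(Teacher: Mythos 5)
Your ``cleaner route'' is exactly the paper's proof: (St2) gives the columns a group structure, so the inverse of $\varphi_j$ exists as a column $\varphi_{j'}$, and Lemma~\ref{ll1} upgrades the pointwise inverse to the complex conjugate. The preliminary power-counting argument is correct but unnecessary, since (St2) is stated as a group property and inverses come for free; the paper simply invokes this directly.
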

\begin{proof}
Pick $j \in \{1,\ldots,\n\}$. Since the columns of $\A$ form a group under pointwise multiplication, there is some $j'  \in \{1,\ldots,\n\}$ such that 
$\varphi_{j'}$ is the inverse of $\varphi_{j}$ for this group operation. Using Lemma 
\ref{ll1}, we have then,
for all $x$,
$\varphi_{j'}(x)\,=\,\left[\varphi_{j}(x)\right]^{-1}\,=\,\overline{\varphi_{j}(x)}$.
\end{proof}

\begin{lemma}
If the matrix $\A$ satisfies {\rm{(\ref{st1_a})}} , {\rm{(\ref{st1_b})}} and {\rm{(\ref{st2})}} , then the normalized columns 
$\left(\,\mm^{-1/2}\,\varphi_j\right)_{j \in \{1,\ldots,\n\}}\,$ form a tight frame in 
$\CC^{\mm}$, with redundancy $\,\n/\mm$.
\label{ll2}
\end{lemma}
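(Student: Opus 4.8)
The plan is to show directly that the frame operator of the normalized columns is a scalar multiple of the identity, and that the scalar is forced to be $\n/\mm$. Recall that a family of vectors $(v_j)_{j=1}^{\n}$ in $\CC^{\mm}$ is a tight frame precisely when its frame operator $S=\sum_{j=1}^{\n} v_j v_j^{*}$ satisfies $S = A\, I_{\mm}$ for some constant $A>0$, in which case $\sum_{j=1}^{\n}|\langle u,v_j\rangle|^2 = A\|u\|^2$ for all $u\in\CC^{\mm}$.

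First I would record the normalization. By \lemref{ll1} (which only invokes (\ref{st2})), every entry $\varphi_j(x)$ is unimodular, so each normalized column is a genuine unit vector: $\|\mm^{-1/2}\varphi_j\|^2 = \mm^{-1}\sum_{x=1}^{\mm}|\varphi_j(x)|^2 = 1$. Consequently, once we know the family is a tight frame, the constant $A$ is determined by taking the trace of $S = A\,I_{\mm}$: $A\mm = \mathrm{tr}\,S = \sum_{j=1}^{\n}\|\mm^{-1/2}\varphi_j\|^2 = \n$, so $A = \n/\mm$, which is exactly the claimed redundancy. Thus it suffices to prove $S = (\n/\mm)\,I_{\mm}$.

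Next I would compute $S$ entrywise. The $(x,y)$ entry of $S=\sum_{j=1}^{\n}(\mm^{-1/2}\varphi_j)(\mm^{-1/2}\varphi_j)^{*}$ is $\mm^{-1}\sum_{j=1}^{\n}\varphi_j(x)\overline{\varphi_j(y)}$. For $x\neq y$ this vanishes by the row-orthogonality hypothesis (\ref{st1_a}); for $x=y$ it equals $\mm^{-1}\sum_{j=1}^{\n}|\varphi_j(x)|^2 = \n/\mm$ by \lemref{ll1} again. Hence $S = (\n/\mm)\,I_{\mm}$, which is the assertion that the normalized columns form a tight frame with redundancy $\n/\mm$; equivalently, $\sum_{j=1}^{\n}\bigl|\langle u,\mm^{-1/2}\varphi_j\rangle\bigr|^2 = (\n/\mm)\|u\|^2$ for every $u\in\CC^{\mm}$, or simply $\A\A^{*} = \n\,I_{\mm}$.

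I do not expect any real obstacle: the whole argument collapses to the single matrix identity $\A\A^{*}=\n\,I_{\mm}$, read off from (\ref{st1_a}) together with the unimodularity of the entries. The only points worth spelling out are (i) why the tight-frame constant must equal $\n/\mm$ rather than an unspecified positive number (the trace computation above), and (ii) that the vanishing-row-sum hypothesis (\ref{st1_b}) is not in fact used for this particular conclusion — it is listed here only because it is part of the package (St1) and is needed elsewhere.
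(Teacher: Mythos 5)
Your proof is correct and is essentially the paper's own argument: both reduce the claim to the identity $\A\A^{\dagger}=\n\,I_{\mm}$, obtained from the row-orthogonality condition (\ref{st1_a}) for the off-diagonal entries and the unimodularity of the entries (Lemma \ref{ll1}, via (\ref{st2})) for the diagonal. Your extra remarks — the trace computation pinning the frame constant to $\n/\mm$ and the observation that (\ref{st1_b}) is not actually used — are accurate but only embellish the same computation.
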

\begin{proof}
By Lemma \ref{ll1} and (\ref{st1_a}), we have 
\[
\left(\A\, \A^{\dagger}\,\right)_{x,y}\,=\,\sum_{j=1}^{\n}\,\varphi_j(x)\,\overline{\varphi_j(y)}\,=\,\n\, \delta_{x,y}\]$\mbox{ i.e. } 
\A\, \A^{\dagger}\,=\,\n\, \mbox{I}_\mm,
$
~so that, for any vector $v \in \CC^\mm$, 
\[
\sum_{j=1}^{\n}\,\left| \langle\,v,\,\varphi_j \rangle\right|^2\,=\,
v\,\A\, \A^{\dagger}\,v^{\dagger}\,=\,\n\,\|v\|^2.
\]
\end{proof}

\begin{lemma}\label{inner}
If the matrix $\A$ satisfies {\rm{(\ref{st2})}} , then the inner product of two columns 
$\varphi_j$ and $\varphi_{j'}$, defined as 
$\varphi_j\cdot\varphi_{j'}\,:=\,\sum_{x}\,\varphi_j(x)\,\overline{\varphi_{j'}(x)}$ ,
equals $\mm$ if and only if $j\,=\,j'$.
\end{lemma}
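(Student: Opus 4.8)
The plan is to show that $\varphi_j \cdot \varphi_{j'} = m$ forces $j = j'$; the converse is trivial since $\varphi_j \cdot \varphi_j = \sum_x |\varphi_j(x)|^2 = \sum_x 1 = m$ by Lemma~\ref{ll1}. For the forward direction, the key idea is to reduce to a single column via the group structure. By (\ref{st2}) together with Lemma~\ref{ll1} (so that $\overline{\varphi_{j'}(x)} = \varphi_{j'}(x)^{-1}$), the pointwise product $\varphi_j(x)\,\overline{\varphi_{j'}(x)} = \varphi_j(x)\,\varphi_{j'}(x)^{-1}$ is again the $x$-th entry of some column of $\A$: there exists an index $j''$ with $\varphi_{j''}(x) = \varphi_j(x)\,\overline{\varphi_{j'}(x)}$ for all $x$. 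Hence $\varphi_j \cdot \varphi_{j'} = \sum_x \varphi_{j''}(x)$.

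Now I would split on whether $j'' = 1$, i.e. whether $\varphi_{j''} = \mathbf{1}$. If $\varphi_{j''} \neq \mathbf{1}$, then by (\ref{st1_b}) (row sums vanish — wait, more precisely one uses here that $\varphi_{j''}$ is a non-identity column, so) one invokes... actually the cleaner route: $\sum_x \varphi_{j''}(x)$ is a sum of $m$ unimodular numbers, so $|\sum_x \varphi_{j''}(x)| \le m$ with equality iff all the $\varphi_{j''}(x)$ are equal; since the columns of $\A$ are distinct and $\varphi_1 = \mathbf{1}$ has all entries $1$, the only column all of whose entries are equal is $\mathbf{1}$ itself (equal unimodular entries $c$ would give the column $c\cdot\mathbf{1}$, which must be a column of the group and, being a scalar multiple of the identity, must actually be $\mathbf{1}$ — this uses that the group is a \emph{finite} group of functions, so no nontrivial constant columns occur, or alternatively that distinct columns cannot be scalar multiples; I should state this carefully). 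Therefore equality $\varphi_j \cdot \varphi_{j'} = m$ can only hold when $\varphi_{j''} = \mathbf{1}$, i.e. $\varphi_j(x)\,\overline{\varphi_{j'}(x)} = 1$ for all $x$, i.e. $\varphi_j(x) = \varphi_{j'}(x)$ for all $x$, i.e. $j = j'$ since $\A$ has no repeated columns.

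The only slightly delicate point — the main obstacle — is ruling out a non-identity column whose entries are all equal to a common unimodular constant $c \neq 1$. One resolution: such a column $\psi$ satisfies $\psi = c\,\mathbf{1}$, and then $\psi$ and $\mathbf{1}$ are two columns with $\psi \cdot \mathbf{1} = \sum_x \bar c = m\bar c$; but more to the point, within the multiplicative group of columns, $\psi = c\mathbf{1}$ has finite order $r$, so $\psi^r = c^r \mathbf{1} = \mathbf{1}$, giving $c^r = 1$; this alone does not yet force $c=1$. The genuinely clean argument instead notes that if all entries of a column are the constant $c$, then that column is $c\,\mathbf 1$; since $\A$ has no repeated columns and $\mathbf 1$ is already a column, we would need $c \neq 1$, but the map sending each column to its (constant) value would then have to be injective on a single-element set — here I instead simply invoke (\ref{st1_b}): if $\varphi_{j''} \ne \mathbf 1$, I want $|\sum_x \varphi_{j''}(x)| < m$. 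Actually the safest path, which I would take in the writeup, is: if $j'' \ne 1$, apply (\ref{st3}) to get $|\sum_x \varphi_{j''}(x)|^2 \le m^{2-\eta} < m^2$ (for $\eta > 0$, $m \ge 2$), hence $|\varphi_j \cdot \varphi_{j'}| < m$, contradicting equality; and if $j'' = 1$ we conclude $j = j'$ as above. This uses only (\ref{st2}), Lemma~\ref{ll1}, (\ref{st3}), and the no-repeated-columns standing assumption, and sidesteps the constant-column subtlety entirely.
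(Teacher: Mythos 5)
Your reduction via the group structure --- writing $\varphi_j(x)\,\overline{\varphi_{j'}(x)}=\varphi_{j''}(x)$ and turning the inner product into a column sum --- is a genuinely different route from the paper's, which uses (\ref{st2}) only through unimodularity (Lemma~\ref{ll1}): there, the equality case of Cauchy--Schwarz forces $\varphi_{j'}$ to be a scalar multiple of $\varphi_j$, the value $\mm$ forces the scalar to be $1$, and the no-repeated-columns assumption finishes. Your route would be equally clean, except that the version you finally commit to has a hypothesis mismatch: the lemma assumes only (\ref{st2}), but you invoke (\ref{st3}) to rule out $j''\neq 1$, so as written you prove the statement only for $\eta$-StRIP-able matrices rather than under the stated hypothesis. (In the paper's applications $\A$ does satisfy (\ref{st3}), so nothing downstream would break, but the lemma as stated is strictly weaker in its assumptions and the paper proves exactly it.)

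The ``delicate point'' that pushed you toward (\ref{st3}) is in fact illusory, and your first route closes without it: the hypothesis is $\varphi_j\cdot\varphi_{j'}=\mm$ exactly --- equality with the positive real number $\mm$ --- not merely $\left|\varphi_j\cdot\varphi_{j'}\right|=\mm$. Since each $\varphi_{j''}(x)$ is unimodular, $\mathrm{Re}\,\varphi_{j''}(x)\le 1$, so $\sum_x \mathrm{Re}\,\varphi_{j''}(x)=\mm$ forces $\varphi_{j''}(x)=1$ for every $x$; hence $\varphi_j=\varphi_{j'}$ pointwise, and $j=j'$ because $\A$ has no repeated columns. There is then no case split on $j''$, no worry about a constant column $c\,\mathbf{1}$ (all entries equal to $c$ would give sum $\mm c$, which equals $\mm$ only for $c=1$), and no appeal to (\ref{st3}) or (\ref{st1_b}). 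With that repair your argument is correct and uses only (\ref{st2}), Lemma~\ref{ll1}, and the no-repeated-columns assumption --- the same ingredients as the paper --- while trading Cauchy--Schwarz for the group reduction.
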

\begin{proof} $~$ \\
If $j\,=\,j'$, we obviously have $\varphi_j\cdot\varphi_{j'}\,=\,\mm$, by Lemma \ref{ll1}.\\
If $\varphi_j\cdot\varphi_{j'}\,=\,\mm$, then we have, by Cauchy-Schwarz,
\[
\mm\,=\,\varphi_j\cdot\varphi_{j'}\,\leq\,
\left|\,\varphi_j\cdot\varphi_{j'}\,\right|\,\leq\,\|\varphi_j\|\,\|\varphi_{j'}\|\,=\,\mm~,
\]
implying that in this instance the Cauchy-Schwarz inequality must be an equality, so that
$\varphi_{j'}$ must be some multiple of $\varphi_{j}$. Since 
$\mm\,=\,\varphi_j\cdot\varphi_{j'}$, the multiplication factor must equal 1, so that
$\varphi_j\,=\,\varphi_{j'}\,$. Since $\A$ has no repeated columns, $j\,=\,j'$ follows.
\end{proof}

We shall prove that StRIP-able matrices have (as their name already announces) a Restricted Isometry Property in a Statistical sense, provided the different parameters satisfy certain constraints, which will be made clear and explicit in the next section. Before we embark on that mathematical analysis, we show
that there are many examples of StRIP-able matrices.

\subsection{Discrete Chirp Sensing Matrices}
Let $p$ be a prime and let $\omega$ be a primitive (complex) $p^{\mbox{th}}$ root of unity. A length $p$ chirp signal takes the form
$$\varphi_{mp+r}(x)=\omega^r \omega^{mx+rx^2}\mbox{   where  }x=0,1,\cdots,p-1.$$
Here $m$ is the \textit{base frequency} and $r$ is the \textit{chirp rate}. 
{\icd{Consider now the family of c}}hirp signals 
{\icd{$\left(\varphi_{mp+r}\right)$}} where ${r,m=0,1,\ldots,p-1}$; the ``extra'' phase factor (usually not present in chirps) ensures that the row sums $\sum_{\ell=0}^{p^2-1}\varphi_{\ell}(x)$ vanish for all $x$. It is easy to check that this family
satisf{\icd{ies}}~(St1),~(St2), and~(St3) \cite{LHSC}. {\icd{For the corresponding sensing matrix $\Phi$,}} Applebaum et al. \cite{LHSC} have analyzed an algorithm for sparse reconstruction that exploits the efficiency of the FFT in each of two steps: the first to recover the chirp rate and the second to recover the base frequency. The Gerschgorin Circle Theorem \cite{hj} is used to prove that the RIP holds for sets of $\frac{(\sqrt{p}+1)}{2}$ columns. Numerical experiments reported in \cite{LHSC} compare the eigenvalues of deterministic chirp sensing matrices with those of random Gaussian sensing matrices{\icd{. The singular values of restrictions to $k$-dimensional subspaces of $N \times \n$ random Gaussian sensing matrices have a gaussian distribution, with mean $\mu_{N,\n,k}$ and standard deviation 
$\sigma_{N,\n,k}$; the experiments show that, for the same values of $N$, $\n$ and $k$, the singular values of restrictions of deterministic chirp sensing matrices have a similar spread}}{ 
{\icd{around a central value $\mu \in (\mu_{N,\n,k},1)$ that is closer to 1; in fact, the experiments suggest that  $\mu-\mu_{N,\n,k}>\sigma_{N,\n,k}$.}}
 
\subsection{Kerdock, Delsarte-Goethals and Second Order Reed Muller Sensing Matrices}
{\icd{In our construction of deterministic sensing matrices based on Kerdock, Delsarte-Goethals and second order Reed Muller codes, we start by picking an odd number $m$. The {\icd{$2^m$}} rows of the sensing matrix $\A$ are indexed by {\icd{the}} binary $m$-tuples $x$, and the $2^{(r+2)m}$ columns are indexed by {\icd{the}} pairs $P,b$, where $P$ is an $m\times m$ binary symmetric matrix in the Delsarte-Goethals set $DG(m,r)$, and $b$ is a binary $m$-tuple. The entry $\varphi_{P,b}(x)$ is given by 
\begin{equation}
\label{kerdock}
\varphi_{P,b}(x)=i^{wt(d_P)+2wt(b)}i^{xPx^\top+2bx^\top}
\end{equation}
where $d_p$ denotes the main diagonal of $P$, and $wt$ denotes the {\it Hamming weight} (the number of $1$s in the binary vector). Note that all arithmetic in the exp{\icd{ressions}} $xPx^\top+2bx^\top$ {\icd{and $wt(d_P)+2wt(b)$ takes}} place in the ring of integers modulo $4${\icd{, since they appear only as exponents for $i$. Given $P,b$ the vector $xPx^\top +2bx^\top$ is a codeword in the Delsarte-Goethals code (defined over the ring of integers modulo $4$) For a fixed}} matrix $P$, the $2^m$ columns $\varphi_{P,b}~,~b\in \mathbb{F}_2^m$ form an orthonormal basis $\Gamma_P$ that {\icd{ can also be}} obtained by postmultiplying the Walsh-Hadamard basis by the unitary transformation $\mbox{diag}\left[ i^{xPx^\top}\right]$.

The Delsarte-Goethals set $DG(m,r)$ is a binary vector space containing $2^{(r+1)m}$ binary symmetric matrices with the property that the difference of any two distinct matrices has rank at least $m-2r$ (See \cite{H}). The Delsarte-Goethals sets are nested
$$DG(m,0)\subset DG(m,1) \subset \cdots \subset DG(m,\nicefrac{(m-1)}{2}).$$

The first set $DG(m,0)$ is the classical Kerdock set, and the last set $DG(m,\nicefrac{(m-1)}{2})$ is the set of all binary symmetric matrices. The $r^{th}$ Delsarte-Goethals sensing matrix is determined by $DG(m,r)$ and has $\mm=2^m$ rows and $\n=2^{(r+2)m}$ columns. The initial phase in~(\ref{kerdock}) is chosen so that the Delsarte-Goethals sensing matrices satisfy~(St1) and (St2). (See Appendix A). 

Coherence between orthonormal bases $\Gamma_P$ and $\Gamma_Q$ indexed by binary symmetric matrices $P$ and $Q$ is determined by the rank $R$ of the binary matrix $P\oplus Q$ (See Appendix A). Any vector in one of the orthonormal bases has inner product of absolute value $2^{-\nicefrac{R}{2}}$ with $2^R$ vectors in the other basis and is orthogonal to the remaining basis vectors. {\icd{T}}he column sums in th{\icd{is}} $r^{th}$ Delsarte-Goethals sensing matrix satisfy
$$\left| \sum_x \varphi_{P,b}(x)\right|^2= 0~\mbox{or}~\mm^{2-\nicefrac{r}{m}} ,$$
{\icd{so that condition~(St3) is trivially satisfied.}}
Details are provided in Appendix A{\icd{;}} we refer the interested reader to \cite{K,DG,H} and Chapter $15$ of \cite{MS} for more information about subcodes of the second order Reed-Muller code.
\subsection{BCH Sensing Matrices}
The Carlitz- Uchiyama Bounds (See Chapter $9$ of \cite{MS}) impl{\icd{y}} that the interval $$\left[ 2^{m-1}-(t-1)2^{\nicefrac{m}{2}}, 2^{m-1}+(t-1)2^{\nicefrac{m}{2}}\right]$$ contains all non-zero weights in the dual of the extended binary BCH code $BCH(m,t)$  of length $\mm=2^m$ and designed distance $e=2t+1$, with the exception of $wt(\boldsymbol{1})=\mm$. Setting $BCH(m,t)^\bot=\langle \boldsymbol{1} \rangle \oplus C_{m,t}$, the columns of the $t^{th}$ BCH sensing matrix are obtained by exponentiating the codewords in $C_{m,t}.$ The column determined by the codeword $c=(c_j)$ is given by
$$ \varphi_c(j)=(-1)^{bc^\top} (-1)^{c_j}~,~\mbox{where }j=0,1,\cdots,2^m-1,$$
and where $b$ is any vector not orthogonal to $C_{m,t}$. Conditions (St1) and (St2) hold by construction and 
\begin{eqnarray}\nonumber \left| (-1)^{bc^\top} \sum_{j=0}^{2^m-1} (-1)^{c_j}\right| ^2 &=& \left|\mm-2wt_H(c)\right|^2\\ \nonumber &\leq& \left [ 2 (t-1)2^{\nicefrac{m}{2}}\right]^2\end{eqnarray}
so that (St3) holds. These sensing matrices have been analyzed by Ailon and Liberty \cite{AL}.

In the binary case, the column sums take the form $N-2w$ where $w$ is the Hamming weight of the exponentiated codeword, and a similar interpretation is possible for codes that are linear over the ring of integers modulo 4 (see \cite{H}). Property (St3) connects the Hamming geometry of the code domain, as captured by the weight enumerator of the code, with the geometry of the complex domain.

\section{Implications for Deterministic {\icd{StRIP-able}} Sensing Matrices: Main Result}

In this section we prove our main result, namely 
that if $\A$ satisfies (St1), (St2) and (St3), then $\A$ is
UStRIP, under certain {\icd{fairly weak}} conditions on the parameters.  More precisely,
\begin{theorem}\label{maintheorem}
Suppose the $\mm \times \n$ matrix  $\A$ is $\eta$-StRIP-able, and 
suppose $k \,<\,1\,+\, (\n\,-\,1)\,\epsilon\,$
{\icd{ and $\eta>1/2$}}. 
Then there exists a constant $c$ such that,
if 
$\mm \,\geq\,\left(c\,\frac{k \, \log\n}{\epsilon^2}\right)^{\frac{1}{\eta}}$, then $\A$ is $(k,\epsilon,\delta)$-UStRIP with 
$\delta\,:=\,2\exp\left[\,-\,\frac{{\icd{[\epsilon-(k-1)/(\n-1)]}}^2\,\mm^{\eta}}{8\, k}\,\right]$.
\label{mainthm}
\end{theorem}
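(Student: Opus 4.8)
The plan is to analyze the quantity $\|\mm^{-1/2}\A\as\|^2$ directly for a uniformly random $k$-sparse $\as$, using the group structure (St2) together with (St1) and (St3) to control its mean and its concentration, and then to handle the uniqueness claim separately. Write $\as = \sum_{\ell} \as_\ell e_{j_\ell}$ where the support $\{j_1,\ldots,j_k\}$ is a uniformly random $k$-subset of $\{1,\ldots,\n\}$ (we may also randomize over norm-preserving sign/phase patterns, but the support randomness is what matters). Expanding,
\[
\left\|\frac{1}{\sqrt{\mm}}\A\as\right\|^2 = \frac{1}{\mm}\sum_x \left|\sum_\ell \as_\ell \varphi_{j_\ell}(x)\right|^2 = \|\as\|^2 + \frac{1}{\mm}\sum_{\ell \neq \ell'} \as_\ell \overline{\as_{\ell'}} \sum_x \varphi_{j_\ell}(x)\overline{\varphi_{j_{\ell'}}(x)},
\]
where the diagonal terms give $\|\as\|^2$ by Lemma~\ref{ll1}. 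By Lemma~\ref{inner} and the group property, $\sum_x \varphi_{j_\ell}(x)\overline{\varphi_{j_{\ell'}}(x)} = \sum_x \varphi_{j''}(x)$ for the index $j''$ with $\varphi_{j''} = \varphi_{j_\ell}\overline{\varphi_{j_{\ell'}}}$, and since $j_\ell \neq j_{\ell'}$ we have $j'' \neq 1$, so (St3) gives $|\sum_x \varphi_{j''}(x)| \le \mm^{1-\eta/2}$. Thus the error term is a sum of $k(k-1)$ terms each bounded in the relevant sense by $\mm^{-\eta/2}$.

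The first key step is the mean computation: taking expectation over the random support, the cross term $\sum_x \varphi_{j_\ell}(x)\overline{\varphi_{j_{\ell'}}(x)}$ averages (over the choice of the pair $j_\ell \neq j_{\ell'}$ among $\n$ columns) to $\frac{1}{\n-1}\sum_{j \neq 1}\sum_x \varphi_j(x)$-type expression; using (St1b), $\sum_{j=1}^{\n}\sum_x \varphi_j(x) = 0$, so $\sum_{j\neq 1}\sum_x \varphi_j(x) = -\mm$, and the average cross-correlation is $-\mm/(\n-1)$. Hence $\Ex\|\mm^{-1/2}\A\as\|^2 = \|\as\|^2\bigl(1 - \frac{k-1}{\n-1}\bigr)$ roughly, which is within $\epsilon$ of $\|\as\|^2$ precisely under the hypothesis $k < 1 + (\n-1)\epsilon$ — this is where that condition is used. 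The second and main step is concentration: I would expose the random support elements $j_1,\ldots,j_k$ one at a time and apply a bounded-difference (McDiarmid / Azuma) inequality to the martingale. Changing one $j_\ell$ alters at most $2(k-1)$ of the cross terms, each of magnitude at most $\mm^{-\eta/2}\|\as\|^2$ (after normalization), so the bounded-difference constants are $O(k\,\mm^{-\eta/2}\|\as\|^2)$, and McDiarmid gives a deviation bound of the form $2\exp(-t^2 \mm^{\eta}/(ck^2) \cdot \text{something})$; tuning constants to match the claimed $\delta = 2\exp[-(\epsilon - (k-1)/(\n-1))^2 \mm^\eta/(8k)]$ and requiring this $\delta$-exponent to beat a union bound over the $\binom{\n}{k}$ supports for the uniqueness claim is what forces $\mm \ge (ck\log\n/\epsilon^2)^{1/\eta}$.

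For the UStRIP (uniqueness) upgrade, I would argue that if $\A$ is $(k,\epsilon,\delta)$-StRIP for a suitable small $\epsilon$, then for a fixed random $\as$, a collision $\A\as = \A\beta$ with $\beta \neq \as$, $\beta$ $k$-sparse, forces $\A(\as - \beta) = 0$ with $\as-\beta$ being $2k$-sparse and nonzero; running the StRIP estimate at sparsity level $2k$ (or, more carefully, bounding the probability that the random support of $\as$ meets a "bad" set via a union bound over the at-most-$\n^k$ candidate supports of $\beta$) shows this happens with probability at most $\delta$. The cleanest route is: condition on $\as$, union-bound over the $\le \binom{\n-k}{k} \le \n^k$ choices of $\mathrm{supp}(\beta)\setminus\mathrm{supp}(\as)$, and for each apply the lower-isometry bound to $\as-\beta$ on its fixed $\le 2k$-element support; the factor $\n^k = \exp(k\log\n)$ is exactly absorbed by the exponent $\mm^\eta/(8k) \ge c\log\n/\epsilon^2 \cdot \ldots$, which is why the theorem couples the UStRIP conclusion to the same $\delta$.

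I expect the main obstacle to be the concentration step done with the right constants: getting the bounded-difference increments to scale as $\mm^{-\eta/2}$ (rather than something weaker) genuinely requires (St3) with $\eta > 1/2$ so that $\mm^{1-\eta/2} = o(\mm^{1/2})$, i.e. the off-diagonal energy is subdominant; and the careful bookkeeping of how many cross terms a single support swap can disturb — together with the dependence structure of sampling without replacement, which one either handles via a Doob martingale on the ordered sample or by comparison with sampling with replacement — is the delicate part. The mean computation and the uniqueness union bound are comparatively routine once the concentration inequality is in hand.
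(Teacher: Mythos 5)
Your StRIP half follows essentially the paper's own route: the expectation computation via the group property and the vanishing row sums (Lemma \ref{lemma3_2}), then bounded-difference concentration over the without-replacement random support through a Doob martingale (the paper's Self-Avoiding McDiarmid inequality, Theorem \ref{McDthm}), with increments of order $\mm^{-\eta/2}$ coming from (St3). That part is sound, though note that the concentration step needs only $\eta>0$; the hypothesis $\eta>1/2$ is not used there but in the uniqueness part.

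The genuine gap is in your uniqueness step. First, you cannot ``apply the lower-isometry bound to $\as-\beta$ on its fixed $\leq 2k$-element support'': StRIP controls $\|\mm^{-1/2}\A\gamma\|^2$ only for a vector whose support is random (and whose values are fixed independently of that support); it gives nothing for a fixed, adversarially chosen support carrying adversarial values, and for these deterministic matrices no fixed-support isometry is available once $k\gg \mm^{\eta/2}$ (the normalized coherence is of order $\mm^{-\eta/2}$, so Gershgorin-type bounds fail in the regime $k\sim \epsilon^2\mm^{\eta}/\log\n$ covered by the theorem). Second, the union-bound arithmetic does not close: with $\mm^{\eta}\asymp k\log\n/\epsilon^2$, the available failure exponent $[\epsilon-(k-1)/(\n-1)]^2\mm^{\eta}/(8k)$ is of order $\log\n$, which can absorb a union over the $\n$ columns but not over the $\binom{\n-k}{k}\approx \n^k$ candidate supports of $\beta$; absorbing $\exp(k\log\n)$ would force $\mm^{\eta}\gtrsim k^2\log\n$, a strictly stronger hypothesis than the theorem's. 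The paper avoids both problems by a different route: it applies the Self-Avoiding McDiarmid inequality to $\left\|\frac{1}{\mm}\A_\lambda^\dag\varphi_w\right\|^2$ and union-bounds over only the $\n$ single columns $w$ (Lemma \ref{ul1} and Theorem \ref{t1}; this is exactly where $\eta>1/2$ enters, so that the deviation term $\sqrt{2k\log(\n/\delta)}\,\mm^{-\eta}$ stays below $(1-\epsilon)^2$), then invokes Tropp's argument (Lemma \ref{mahnaz}) to conclude that, with probability $1-\delta$ over $\lambda$ alone, $\dim\left(\range(\A_\lambda)\cap\range(\A_S)\right)<k$ simultaneously for every competing support $S$, and finally disposes of the values of $\as$ by a measure-zero argument (Theorem \ref{ueq6}): an evil twin would force $\as$ into an at most $(k-1)$-dimensional subspace, a null set for any absolutely continuous value distribution. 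You would need to replace your union-bound-over-supports step by an argument of this kind, or else strengthen the hypothesis on $\mm$.
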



{\icd{The proof of Theorem \ref{mainthm} has two parts: we shall first, in Section 3.1, prove that $\A$ is StRIP;
when this is established we turn our attention to proving UStRIP in Section 3.2.

\subsection{Proving StRIP}}}

{\icd{\subsection*{3.1.1 $~$ Setting up the Framework} }}
It will be convenient to decompose the random process generating the vectors $\as$ as follows: first pick (randomly) the {\em indices} of the nonzero {\icd{entries}} of $\as$,
and then the
{\icd{\em{values}}}
of those entries. 
For the first step, we pick a random permutation $\pi\,\doteq\,$
{\icd{$\left(\pi_j\right)$}}{\icd{$_{j\in \set}$ }} of 
$\set$;{\icd{ the $k$ numbers $\pi_1,\ldots,\pi_k$ will then be the indices of the non-vanishing}} entries of $\as$. Next, we pick $k$ random values $\as_1,\ldots,\as_k$; these will be the non-zero {\icd{values of the}} entries of the vector $\as$. 
Computing expectations with respect to $\as$ can be decomposed likewise; when we average over all possible choices of $\pi$, but not yet over the values of the random variables $\as_1,\ldots,\as_k$, we shall denote such expectations by $\mathbb{E}_{\pi}$, adding a subscript. We start by proving the following 

\begin{lemma}\label{rachel}
For $\pi$, $\A$, $\as$ as described above and $f\,:=\,\mm^{-1/2}\,\A\,\as$, we have
\begin{eqnarray}\nonumber
\left(\,1\,-\,\frac{k-1}{\n-1} \,\right)\,\, \|\,\as\,\|^2
\,&\leq&\,\mathbb{E}_{\pi}\left[\,\|f\|^2\,\right]\,\\\nonumber&\leq&\,
\left(\,1\,+\,\frac{1}{\n-1} \, \right) \, \|\,\as\,\|^2~.
\end{eqnarray}
\label{lemma3_2}
\end{lemma}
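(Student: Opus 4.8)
The plan is to expand $\|f\|^2$ explicitly, peel off the diagonal contribution, and evaluate the expectation of the remaining off-diagonal part by a short second-moment computation that uses only condition (St1) and the unimodularity consequence of (St2) recorded in Lemma~\ref{ll1}.

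First I would write $f = \mm^{-1/2}\sum_{\ell=1}^{k}\as_\ell\,\varphi_{\pi_\ell}$, so that, in the inner-product notation of Lemma~\ref{inner},
\[
\|f\|^2 \;=\; \frac{1}{\mm}\sum_{\ell=1}^{k}\sum_{\ell'=1}^{k}\as_\ell\,\overline{\as_{\ell'}}\,\bigl(\varphi_{\pi_\ell}\cdot\varphi_{\pi_{\ell'}}\bigr).
\]
By Lemma~\ref{ll1} every entry of every column is unimodular, so $\varphi_{\pi_\ell}\cdot\varphi_{\pi_\ell}=\mm$ and the diagonal terms $\ell=\ell'$ contribute exactly $\sum_\ell|\as_\ell|^2=\|\as\|^2$. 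Hence
\[
\|f\|^2 \;=\; \|\as\|^2 \;+\; \frac{1}{\mm}\sum_{\ell\neq\ell'}\as_\ell\,\overline{\as_{\ell'}}\,\bigl(\varphi_{\pi_\ell}\cdot\varphi_{\pi_{\ell'}}\bigr).
\]

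Next I would take $\mathbb{E}_{\pi}$. Since $\pi$ is a uniformly random permutation of $\set$, for each fixed pair $\ell\neq\ell'$ the ordered pair $(\pi_\ell,\pi_{\ell'})$ is uniformly distributed over ordered pairs of distinct elements of $\set$, so
\[
\mathbb{E}_{\pi}\bigl[\varphi_{\pi_\ell}\cdot\varphi_{\pi_{\ell'}}\bigr]\;=\;\frac{1}{\n(\n-1)}\sum_{j\neq j'}\varphi_j\cdot\varphi_{j'}.
\]
The key observation is that the full double sum vanishes against the row sums: $\sum_{j,j'}\varphi_j\cdot\varphi_{j'}=\sum_x\bigl|\sum_j\varphi_j(x)\bigr|^2=0$ by (\ref{st1_b}). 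Subtracting the diagonal $\sum_j\varphi_j\cdot\varphi_j=\mm\n$ gives $\sum_{j\neq j'}\varphi_j\cdot\varphi_{j'}=-\mm\n$, so $\mathbb{E}_{\pi}[\varphi_{\pi_\ell}\cdot\varphi_{\pi_{\ell'}}]=-\mm/(\n-1)$. Substituting this back, and using $\sum_{\ell\neq\ell'}\as_\ell\overline{\as_{\ell'}}=\bigl|\sum_{\ell}\as_\ell\bigr|^2-\|\as\|^2$, I obtain
\[
\mathbb{E}_{\pi}\bigl[\|f\|^2\bigr]\;=\;\Bigl(1+\tfrac{1}{\n-1}\Bigr)\,\|\as\|^2\;-\;\frac{1}{\n-1}\,\Bigl|\sum_{\ell=1}^{k}\as_\ell\Bigr|^2.
\]
The upper bound of the lemma is then immediate by discarding the nonpositive last term, and the lower bound follows from the Cauchy--Schwarz estimate $\bigl|\sum_\ell\as_\ell\bigr|^2\le k\,\|\as\|^2$, which turns the coefficient $1+\frac{1}{\n-1}$ into $1-\frac{k-1}{\n-1}$.

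I do not expect a real obstacle here: the whole argument is a one-line variance computation. The only points needing care are that the random permutation makes $(\pi_\ell,\pi_{\ell'})$ exchangeable, hence uniform over distinct ordered pairs (so no independence is invoked), and the bookkeeping separating diagonal from off-diagonal terms both in the index $\ell$ and in the column index $j$. It is worth flagging that only (St1) and (St2) (the latter solely through Lemma~\ref{ll1}) are used at this stage; neither (St3) nor the full group structure plays any role yet.
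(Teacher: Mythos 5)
Your proof is correct, and it reaches the paper's exact intermediate value $\mathbb{E}_{\pi}\bigl[\varphi_{\pi_\ell}\cdot\varphi_{\pi_{\ell'}}\bigr]=-\mm/(\n-1)$ by a different route. The paper evaluates this expectation through the group property (St2): it rewrites each cross inner product $\sum_x\varphi_{\pi_j}(x)\overline{\varphi_{\pi_i}(x)}$ as a column sum $\sum_x\varphi_{m(\pi_i,\pi_j)}(x)$ with $m(\pi_i,\pi_j)\neq 1$, argues by a counting argument that $m(\pi_i,\pi_j)$ is uniform over $\{2,\ldots,\n\}$, and then uses the vanishing row sums in the form $\sum_{\ell\neq 1}\varphi_\ell(x)=-1$. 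You instead average directly over ordered pairs of distinct column indices and compute $\sum_{j,j'}\varphi_j\cdot\varphi_{j'}=\sum_x\bigl|\sum_j\varphi_j(x)\bigr|^2=0$, subtracting the diagonal $\mm\n$; this uses only the zero row sums from (St1) and unimodularity (Lemma~\ref{ll1}), and bypasses the group structure and the uniformity-of-$m$ counting entirely. Your route is more elementary and slightly more general (it would apply to any unimodular matrix with vanishing row sums, ordered columns or not), and your exact identity $\mathbb{E}_{\pi}[\|f\|^2]=(1+\tfrac{1}{\n-1})\|\as\|^2-\tfrac{1}{\n-1}\bigl|\sum_\ell\as_\ell\bigr|^2$ cleanly yields both bounds at once, with the same final Cauchy--Schwarz step as the paper; it also makes Remark~\ref{rem_ell}'s sharper $\ell_1$-dependent bound transparent. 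What the paper's heavier route buys is the machinery (the notation $\varphi_{m(\ell,\ell')}$ and the uniform ranging of $m(\pi_i,\pi_j)$ over nonidentity columns) that is reused later in the concentration step of the StRIP proof and in Lemma~\ref{ul1}, where the group property genuinely is needed via (St3).
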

\begin{proof}
With the notations introduced above, the entries of 
$f\,:=\,\mm^{-1/2}\,\A\,\as$ are given by \\
$f(x)\,:=\,\mm^{-1/2}\,\sum_{j=1}^k\,\as_j\, \varphi_{\pi_j}(x)\,$. We have then
\begin{eqnarray}
\|\,f\,\|^2\,&=&\,\sum_{x=1}^\mm\,|\,f(x)|^2\,\nonumber\\ &=&\,\frac{1}{\mm}\,\sum_{x=1}^\mm\,
\left(\,\sum_{j=1}^k\,|\,\as_j\,|^2\,+\,\Psi(x)\, \right)
\label{eq5}
\end{eqnarray}
where $\Psi(x)\,=\,\sum_{i,j,\,j \neq i} \as_j\,\overline{\as_i}\, \varphi_{\pi_j}(x)\, 
\overline{\varphi_{\pi_i}(x)}\,.$

The first term in (\ref{eq5}) is independent of $\pi$ ; it just equals $\sum_{j=1}^k\,|\as_j\,|^2\,=\,\|\,\as\,\|^2\,$.

For the second term, we have
\begin{eqnarray}\label{eqn6}
& &\mathbb{E}_{\pi} \, 
\left[ \, \sum_x \, \sum_{i,j \mbox{\footnotesize{with} } j\neq i}\, \as_j\, \overline{\as_i}\, \varphi_{\pi_{j}}(x) \, \overline{\varphi_{\pi_{i}}(x)} \,\right]
\\ \,&=&\,\sum_{i,j \mbox{\footnotesize{ with} } j\neq i} \,\as_j \, \overline{\as_i}\, \sum  \mathbb{E}_{\pi}\left[
\sum_x\,\varphi_{\pi_j}(x)\, 
\overline{\varphi_{\pi_i}(x)}\, \right]~.
\nonumber
\end{eqnarray}
By (\ref{st2}) and Lemma \ref{st4}, we have 
$\sum_x\,\varphi_{\ell}(x)\, 
\overline{\varphi_{\ell'}(x)}\,=\,\sum_x\,\varphi_{m}(x)\,$ for some appropriate $m\,:=\,m(\ell,\ell')$;
if $\ell\,\neq \ell'$, then $\overline{\varphi_{\ell'}}\,=\, \left(\varphi_{\ell'}\right)^{-1}\,\neq\, \left(\varphi_{\ell}\right)^{-1}$, so that 
$m(\ell,\ell')\,\neq\,1$. \\
As $\pi$ ranges over all possible permutations of 
$\{1,\ldots,\n\}$, the index $m(\pi_i,\pi_j)$ (with $j\,\neq \,i$) will range (uniformly) over all possible values $2,\ldots,\n$ (i.e. excluding $1$). 
It follows that, for $j\,\neq\,i$,  
\begin{eqnarray}
& &\mathbb{E}_{\pi}\left[
\sum_x\,\varphi_{\pi_j}(x)\, 
\overline{\varphi_{\pi_i}(x)}\, \right]\,\nonumber\\&=&\,(\n-1)^{-1} \,\sum_{\ell \neq 1}\,\sum_x\,\varphi_{\ell}(x)\,\nonumber\\
&=&\,(\n-1)^{-1} \,\sum_x\,(-1)\,=\, -\,\frac{\mm}{\n-1}\,,
\label{eqn7}
\end{eqnarray}
where we have made use of a counting argument in the first equality, and of (\ref{st1_b}) in the second.
It then follows that 
\begin{eqnarray}
& &\mathbb{E}_{\pi}\left[
\,\sum_x \,
\sum_{i,j \mbox{\footnotesize{with}} j\neq i} \,\as_j \, \overline{\as_i}\,\varphi_{\pi_j}(x)\, 
\overline{\varphi_{\pi_i}(x)}\, \right]\,\nonumber\\\nonumber&=&\,-\,\frac{\mm}{\n-1}\,{\sum^k_{i,j; \mbox{\footnotesize{ with }} j\neq i}}\,\as_j \, \overline{\as_i}\,.
\end{eqnarray}
Applying the Cauchy-Schwarz inequality, we obtain
\begin{eqnarray}
0 \, &\leq& \, \sum_{i,j \mbox{ \footnotesize{with} } j\neq i} \,\as_j\, \overline{\as_i}\,
+\, \sum_{j=1}^k\, |\as_j|^2 \,\nonumber\\\nonumber&=&\, \left| \,\sum_{j=1}^k\,\as_j \,\right| ^2 \, \leq \,k\, \sum_{j=1}^k\, |\as_j|^2. 
\end{eqnarray}
\vskip-0.2cm
Combining this with the previous equality gives
\begin{eqnarray}
\nonumber&&-\,\frac{\mm\,(k-1)}{\n-1}\,\|\,\as\,\|^2
\\\nonumber\,&\leq&\, \mathbb{E}_{\pi} \, 
\left[ \, \sum_x \, \sum_{i,j \mbox{\footnotesize{  with }} j\neq i}\, \as_j\, \overline{\as_i}\, \varphi_{\pi_{j}}(x) \, \overline{\varphi_{\pi_{i}}(x)} \,\right]
\,\\\nonumber&\leq&\,\frac{\mm}{\n-1}\,\|\,\as\,\|^2
\end{eqnarray}
It then suffices to substitute this into (\ref{eq5}) to prove the Lemma.
\end{proof}
{\icd{
\begin{rem}
\label{rem_ell}
By using the Cauchy-Schwarz inequality in the last step of the proof of \ref{lemma3_2} we may have sacrificed quite a bit, especially if the non vanishing entries in $\as$ differ appreciably in order of magnitude. Without this step, the final inequality would be
\begin{eqnarray}
\nonumber& &-\,\frac{\mm}{\n-1}\,
\left(\,\|\as\|_{\ell_1}^2\,-\,\|\,\as\,\|^2\,\right)
\,\\\nonumber&\leq&\, \mathbb{E}_{\pi} \, 
\left[ \, \sum_x \, \sum_{i,j \mbox{ \footnotesize{with} } j\neq i}\, \as_j\, \overline{\as_i}\, \varphi_{\pi_{j}}(x) \, \overline{\varphi_{\pi_{i}}(x)} \,\right]
\,\\&\leq&\,\frac{\mm}{\n-1}\,\|\,\as\,\|^2
\label{better}
\end{eqnarray}
\end{rem}
}}
To prove the concentration of $\|{f}\|^2$ around its expected value, we will 
make use of a {\icd{version}} of {\icd{the}} McDiarmid inequality \cite{mcdiarmid} based on concentration of martingale difference random variables with {\icd{\em distinct}} values{\icd{ (as opposed to {\em independent} values for the standard McDiarmid inequality). In what follows, upper case}} letters denote random variables, {\icd{lower case}} letters denote values {\icd{taken on by these random variables.}}
\\
\begin{theorem}[{\icd{Self-Avoiding}} McDiarmid inequality]\label{mac}Let $\cx_1,\cdots,\cx_m$ be probability spaces and define $\cx$ as the probability space of all distinct $m$-tuples\footnote{{\icd{We follow a widespread custom, and denote by the same letter both the set carrying the probability measure, and the probability space [i.e. the triplet (set,$\sigma$-algebra of measurable sets,measure)]. We shall specify which is meant when confusion could be possible.}}}. {\icd{In other words,
the set $\cx$ is the subset of the product set $\mathbb{X}\doteq\cx_1\times\cdots,\times\cx_m$ given by}}
\begin{equation}\label{x2}\cx\doteq\{(t_1,\cdots,t_m)\in \Pi_{i=1}^m \cx_i\mbox{ s.th. $\forall$ }i\neq j~:t_i\neq t_j\}{\icd{;}}\end{equation}
the probability measure on  $\cx$ is just the renormalization (so as to be a probability measure) of the restriction to $\cx$ of the standard product measure on $\mathbb{X}$.\\
Let $h(t_1,\cdots,t_m)$ be a function from {\icd{the set}} $\cx$ to $\mathbb{R}$, such that for any coordinate $i$, given $t_1,\cdots,t_{i-1}$:
\vskip-0.5cm
\begin{eqnarray}
&&\hskip-0.7cm\left|\sup_{u{\icd{\in \cx_i; u \neq t_{n}, n=1\rightarrow i}}}\Ex[h(t_1,\cdots,t_{i-1},u,T_{i+1},\cdots,T_m)]\right.\quad\quad\quad\quad\nonumber\\
&&\hskip-0.7cm-\left.\inf_{l{\icd{\in \cx_i; l \neq t_{n}, n=1\rightarrow i}}}\Ex[h(t_1,\cdots,t_{i-1},l,T_{i+1},\cdots,T_m){\icd{]}}\right|\leq c_i~,\label{concentration2}
\end{eqnarray}
where the expectations are taken over the random variables $T_{i+1}, .., T_m $
(conditioned on taking values that are all different
from each other and from $t_1, \ldots ,t_{i-1}$ as well as $u$ (first expectation) or $l$
(second expectation).
Then for any positive $\gamma$,
\begin{eqnarray}
\nonumber
&&\hskip-1cm\Pr
\left[ 
\left|
h(T_1,\cdots,T_m)- \Ex[h(T_1,\cdots,T_m)]
\right|
\geq\gamma\right]\\\label{result2}
\hskip-1cm&\leq& 2\exp
\left(
\frac{-2\gamma^2}{\sum c_i^2}
\right).
\end{eqnarray}
\label{McDthm}
\end{theorem}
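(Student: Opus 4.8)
The plan is to reduce the statement to the classical Azuma--Hoeffding martingale inequality, exactly as in the standard proof of the bounded-differences inequality; the only genuinely new ingredient is a careful bookkeeping of the conditional laws forced by the self-avoiding constraint (\ref{x2}). Concretely, I would introduce the Doob martingale of $h$: for $0\le i\le m$ put
\[
Z_i\;:=\;\Ex\!\left[\,h(T_1,\cdots,T_m)\ \big|\ T_1,\cdots,T_i\,\right],
\]
all expectations being taken with respect to the renormalized product measure on $\cx$. Then $Z_0=\Ex[h(T_1,\cdots,T_m)]$, $Z_m=h(T_1,\cdots,T_m)$, and $(Z_i)_{i=0}^m$ is a martingale for the filtration $\mathcal F_i=\sigma(T_1,\cdots,T_i)$. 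Writing $Z_i=g_i(T_1,\cdots,T_i)$ with
\[
g_i(t_1,\cdots,t_i)\;=\;\Ex\!\left[\,h(t_1,\cdots,t_i,T_{i+1},\cdots,T_m)\,\right],
\]
the expectation taken over $T_{i+1},\cdots,T_m$ conditioned on being mutually distinct and distinct from $t_1,\cdots,t_i$ --- precisely the conditioning that appears in (\ref{concentration2}).

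The crucial step is bounding the increments $D_i:=Z_i-Z_{i-1}$. Fix $(t_1,\cdots,t_{i-1})$; conditioned on $T_1=t_1,\cdots,T_{i-1}=t_{i-1}$, the value $Z_{i-1}=g_{i-1}(t_1,\cdots,t_{i-1})$ is a fixed number, while $Z_i=g_i(t_1,\cdots,t_{i-1},T_i)$ ranges, as $T_i$ varies over its admissible values $u\in\cx_i$ with $u\neq t_1,\cdots,t_{i-1}$, over a set contained in an interval of length at most
\[
\sup_{u}g_i(t_1,\cdots,t_{i-1},u)\;-\;\inf_{\ell}g_i(t_1,\cdots,t_{i-1},\ell)\;\le\;c_i
\]
by hypothesis (\ref{concentration2}). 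Hence, conditioned on $\mathcal F_{i-1}$, the increment $D_i$ is supported in an interval of length $\le c_i$. This is exactly where the self-avoiding form of (\ref{concentration2}) is used: replacing the $i$-th coordinate $\ell$ by $u$ simultaneously re-weights the conditional law of $T_{i+1},\cdots,T_m$, but (\ref{concentration2}) already compares the two \emph{fully averaged} quantities, so no additional term enters the increment bound.

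Since $\Ex[D_i\mid\mathcal F_{i-1}]=0$ and $D_i$ lies, given $\mathcal F_{i-1}$, in an interval of length $c_i$, Hoeffding's lemma gives $\Ex[e^{\lambda D_i}\mid\mathcal F_{i-1}]\le e^{\lambda^2 c_i^2/8}$ for all real $\lambda$; iterating the tower property yields $\Ex\!\big[e^{\lambda(Z_m-Z_0)}\big]\le\exp\!\big(\lambda^2\textstyle\sum_i c_i^2/8\big)$, and a Chernoff bound optimized at $\lambda=4\gamma/\sum_i c_i^2$ produces $\Pr[Z_m-Z_0\ge\gamma]\le\exp(-2\gamma^2/\sum_i c_i^2)$; running the same argument with $-h$ and taking a union bound gives the two-sided estimate (\ref{result2}). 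The main obstacle is not any single estimate but rather verifying that everything is well posed: that each conditioning event ``the remaining coordinates are distinct and avoid the ones already chosen'' has positive probability (automatic in the non-atomic case, and true in the finite case provided each $\cx_i$ has at least $m$ points, without which $\cx=\emptyset$), and that renormalizing the product measure onto $\cx$ leaves the martingale/tower structure intact --- routine, but worth spelling out carefully.
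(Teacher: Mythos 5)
Your proposal is correct and follows essentially the same route as the paper's Appendix B: the Doob martingale $Z_i=\Ex[h\,|\,T_1,\cdots,T_i]$ on the renormalized self-avoiding space, the increment bound extracted from the sup--inf hypothesis, Hoeffding's lemma applied conditionally on the past, a Chernoff bound optimized at $t=4\gamma/\sum_i c_i^2$, and a union bound for the two-sided statement. Your remark that, given $T_1,\cdots,T_{i-1}$, the increment lies in an interval of length at most $c_i$ (not merely that its absolute value is at most $c_i$) is precisely the form needed to get the factor $c_i^2/8$ from Hoeffding's lemma and hence the constant $2$ in the exponent, so nothing is missing.
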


\begin{proof}
 {\icd{See}} Appendix B.
 \end{proof}

{\icd{\subsection*{3.1.2 $~$ Proof of StRIP }}}
We are now ready to start the
\begin{proof} ({\em of the $(k,\epsilon,\delta)$-StRIP property, claimed in Theorem} \ref{mainthm})\\
%
%
\label{strip_proof}
Let $\pk$ denote the set of all $k$-tuples $(\pi_1\,,\cdots\, , \pi_k)$ where $(\pi_1\,, \cdots\, , \pi_\n)$ is a permutation of 
$\{1, 2, \cdots ,\n\}$. It {\icd{follows from the}} definition that all entries of each {\icd{element}} of $\pk$ are distinct.
{\icd{The set $\pk$ is finite; equipped with the counting measure, renormalized so as to have total mass 1, $\pk$ is}} the probability space of the $k$ non-zero entries of the random signal $\as${\icd{: the $(\pi_1,\cdots,\pi_k)$, corresponding to (uniformly) randomly picked permutations $\pi$ of $\set$, are random variables distributed uniformly in $\pk$. For $1\leq i < j \leq k$, we d}}enote {\icd{by $\pi_{i\rightarrow j}$}} the 
{\icd{$(j-i+1)$-tuple}} of random variables ${\icd{(}}\pi_i,\pi_{i+1},\cdots,\pi_j{\icd{)}}$. \\
{\icd{Given values $\alpha_1$, $\alpha_2$, $\ldots$, $\alpha_k$, l}}et ${f}:\pk\rightarrow \mathbb{C}^{\mm}$ {\icd{be}} defined {\icd{by}} ${f}(\pi_1,\cdots,\pi_k)=\nor\sum_{i=1}^k \alpha_i \varphi_{\pi_i}$, and $h:\pk \rightarrow \mathbb{R}$ b{\icd{y}} $h(\pi_1,\cdots,\pi_k)=\|{f}(\pi_1,\cdots,\pi_k)\|^2$. Clearly
\begin{equation}
h(\pi_1,\cdots,\pi_k)=\frac{1}{\mm}\sum_{i,j=1}^k \alpha_i \overline{\alpha_j}\left(\varphi_{\pi_i}\right)^\top \overline{\varphi_{\pi_j}}.
\label{h_eq}
\end{equation}

{\icd{
Our strategy of proof will be the following. We want to upper bound 
$\mbox{Pr}_{\pi}[|\|\ff\|^2 - \|\alpha\|^2| \geq \epsilon \|\alpha\|^2]$. From Lemma \ref{lemma3_2} we know that 
$\Ex_{\pi}[\|\ff\|^2]$ is close to $\|\alpha\|^2$. This suggests that we investigate, for
$\beta >0$, the function $G(\beta)$ defined by
$G(\beta)\doteq\mbox{Pr}_{\pi}[|\|\ff\|^2 -  \Ex_{\pi}[\|\ff\|^2]|]\geq \beta\|\alpha\|^2]=$
$\mbox{Pr}_{\pi}[|h -  \Ex_{\pi}[h]\geq \beta\|\alpha\|^2]$. This last expression is exactly of the type for which the Self-Avoiding McDiarmid Inequality gives upper bounds, provided we can
establish first that $h$ satisfies the required conditions of the Self-Avoiding McDiarmid inequality. Deriving such a bound is thus our first step.
\vskip-0.25cm
From (\ref{h_eq}) and Lemma 2.2 we get
\begin{eqnarray}
\nonumber&&h(\pi_1,\cdots,\pi_k)\,=\,\sum_{j=1}^k\,|\alpha_j\,|^2\,+\,\frac{1}{N}\,
\sum_{i,j \mbox{ with } i\neq j}\,\alpha_j\,\overline{\alpha_i}\,
\varphi_{\pi_j}^{\top}\,\overline{\varphi_{\pi_i}}\,.
\end{eqnarray}
We have then
\begin{eqnarray}
&&\!\!\!\!\!\!\!\!\!h(\pi_1,\,\ldots\,,\pi_{\ell},\,\ldots,\,\pi_k)\,-\,h(\pi_1,\,\ldots\,,\pi'_{\ell},\,\ldots\,,\,\pi_k)\nonumber\\
&&\!\!\!\hskip-1cm=\,\frac{1}{N}\,\sum_{j \mbox{ with } j\neq \ell}\,\left[\,\alpha_{\ell}\,\overline{\alpha_{j}}\,[\varphi_{\pi_{\ell}}\,-\,\varphi_{\pi'_{\ell}}]^{\top}\,\overline{\varphi_{\pi_j}}\,\right]
\nonumber\\&&\,\!\!\!\hskip-1cm+\,\frac{1}{N}\,\sum_{j \mbox{ with } j\neq \ell}\,\left[\alpha_{j}\,\overline{\alpha_{\ell}}\,\varphi_{\pi_j}^{\top}\,\overline{[\varphi_{\pi_{\ell}}\,-\,\varphi_{\pi'_{\ell}}]}\,\right]\nonumber\\
&&\hskip-1cm=\,\frac{1}{N}\,\sum_{}\,\left[\,\alpha_{\ell}\,\overline{\alpha_{j}}\,\sum_x\,\left(\,\varphi_{m(\pi_{\ell},\pi_j)}(x)\,-\,\varphi_{m(\pi'_{\ell},\pi_j)}(x)\,\right)\right]\,\nonumber\\
&&\hskip-1cm+\,\frac{1}{N}\,\sum_{}\,\left[\alpha_{j}\,\overline{\alpha_{\ell}}\,\sum_x\,\left(\,\varphi_{m(\pi_j,\pi_{\ell})}(x)\,-\,\varphi_{\pi(t_j,\pi'_{\ell})}(x) \,\right)\,\right]\,,\nonumber
\end{eqnarray}
where we have used the same notation as in the proof of Lemma \ref{lemma3_2}, i.e.
$\varphi_{m(i,j)}(x):=\varphi_{i}(x)\,\overline{\varphi_{{j}}}(x)$. \\
Because $(\pi_1,\,\ldots\,,\pi_{\ell},\,\ldots,\,\pi_k)$ and $(\pi_1,\,\ldots\,,\pi'_{\ell},\,\ldots,\,\pi_k)$ are both in $\pk$, the indices $\pi_1,\,\ldots\,,\pi_{\ell},\,\ldots,\,\pi_k$ and $\pi'_{\ell}$ are all different. It then follows from
(\ref{st3}) that
\begin{eqnarray}
&&\hskip-0.5cm\!\!\!\left|\,h(\pi_1,\,\ldots\,,\pi_{\ell},\,\ldots,\,\pi_k)\right.-\left.h(\pi_1,\,\ldots\,,\pi'_{\ell},\,\ldots\,,\,\pi_k)\right|\quad\quad\,\nonumber\\
&&\hskip-1.5cm\quad\quad\leq\,\frac{2}{\mm}\,|\as_{\ell}| \,\sum_{}\,|\as_j|
\,\left|\,\sum_x\,\varphi_{\pi_{m({\ell},j)}}(x)\,-\,\varphi_{\pi_{m({\ell'},j)}}(x)\,\right|\nonumber\\
&&\hskip-1.5cm\quad\quad\leq\,\frac{2}{\mm}\,|\as_{\ell}| \,\sum_{j \mbox{\footnotesize{ with }} j\neq \ell}\,|\as_j|\, 2 \,\mm^{1-\eta/2}\,
\nonumber\\&=&\,
\frac{4}{\mm^{\eta/2}}\,|\as_{\ell}| \,\sum_{j \mbox{\footnotesize{ with }} j\neq \ell}\,|\as_j|
\label{eq12}
\end{eqnarray}
where we have used that $m(\pi_{\ell},\pi_j)\,\neq\,1$ if $\pi_{\ell}\,\neq\,\pi_j\,$, i.e.
if $\ell\neq j$.
Because this bound is uniform over the $\pi_{\ell}$ in $\cx_{\ell}$, it is now clear that this implies the sufficient condition of the Self-Avoiding McDiarmid inequality,
with $c_{\ell}$ given by the right hand side of (\ref{eq12}). We can thus conclude from 
Theorem \ref{McDthm} that 
\begin{eqnarray}\label{yess}
&&\hskip-1cm\Pr{\!}_{\pi}\left[|h - \Ex[h]| \geq \beta\,\|\as\|^2\,\right]\\\nonumber&&\hskip-1cm\leq 2\,\exp\left(-\,\frac{2\beta^2\,\mm^{\eta}\,\|\as\|^4}
{16\,\sum_{\ell=1}^k \,|\as_{\ell}|^2\,\left[\,\sum_{j \mbox{\footnotesize{ with }} j\neq \ell}|\as_j|\right]^2}\,\right).
\end{eqnarray}
Since
\begin{eqnarray}\nonumber
&&\sum_{\ell=1}^k \,
|\alpha_{\ell}|^2\,
\left[\,
\sum_{j \mbox{\footnotesize{ with }} j\neq \ell}
|\alpha_j|\right]^2 
\\\nonumber&\leq& \sum_{\ell=1}^k \,
|\alpha_{\ell}|^2\,
\left[\,\sum_{j=1}^k \,|\alpha_j|\right]^2 \\\nonumber&\leq& \|\alpha\|^2 \, k \, \|\alpha\|^2 = k \|\alpha\|^4~,
\end{eqnarray}
where we have used the Cauchy-Schwarz inequality in the penultimate step, it follows that
\[
\Pr{\!}_{\pi}\left[|h - \Ex[h]| \geq \beta\,\|\as\|^2\,\right]\leq 2\,\exp\left(-\,\frac{\beta^2\,\mm^{\eta}}{8k}\right)~.
\]
After substituting $\|\ff\|^2$ for $h$, and applying Lemma \ref{lemma3_2}, we finally obtain
that
\begin{eqnarray}\nonumber&&
\Pr{\!}_{\pi}\left[|\|\ff\|^2 - \|\as\|^2| \geq \left(\beta+\frac{k-1}{\n-1}\right)\|\as\|^2\,\right]\\\nonumber&\leq& 2\,\exp\left(-\,\frac{\beta^2\,\mm^{\eta}}{8k}\right)~.
\end{eqnarray}
For $\epsilon > (k-1)/(\n-1)$, we can set $\beta=\epsilon -(k-1)/(\n-1)$, thus recovering the StRIP-bound claimed in the statement of Theorem \ref{mainthm} for this case:
}}
with probability at least $1-2\exp\left(\frac{-\left[\epsilon-(k-1)(\n-1)^{-1}\right]^2 \mm^\eta}{8k}\right)$, we have the following near-isometry for $k$-sparse vectors $\as$:
\begin{equation}\label{distor}
(1-\epsilon)\|\as\|^2 \leq \|{f}\|^2\leq (1+\epsilon) \|\as\|^2.
\end{equation}
\end{proof}

\begin{rem}\label{firstfinish}
Equation~(\ref{distor}) implies that as long as 
$\sqrt{\frac{k}{\mm^\eta}} +\frac{k-1}{\n-1}
{\icd{<}} \epsilon\,$, the probability of failure 
{\icd{(i.e. the probability that the near-isometry inequality fails to hold)}} 
drops to zero 
{\icd{as $\n \rightarrow \infty$}}. 
{\icd{In particular,}} if $\eta$ equals 1, $k \leq \mu (\n-1)\epsilon+1$ for some constant $\mu$ less than one, and 
$\mm=O\left(\frac{k \log\n}{\epsilon^2}\right)$
then the probability of failure approaches zero at the rate $\n^{-1}$. 
\end{rem}
\begin{rem}
Figure~\ref{fig:exp} shows the distribution of condition numbers for the singular values of restrictions of the sensing matrix to sets of K columns. Two cases are considered; the Reed Muller matrices constructed in Section 2.2 and random Gaussian matrices of the same size. The figure suggests that the decay of $$\Pr\left[\left| ||{f}||^2- ||{\alpha}||^2 \right| \geq \epsilon ||{\alpha}||^2 \right]$$ is similar for both types of compressive sensing matrices.
 \begin{figure*}
\begin{center}
\includegraphics[bb=0 0 360 252,width=0.99\textwidth,height=360pt]{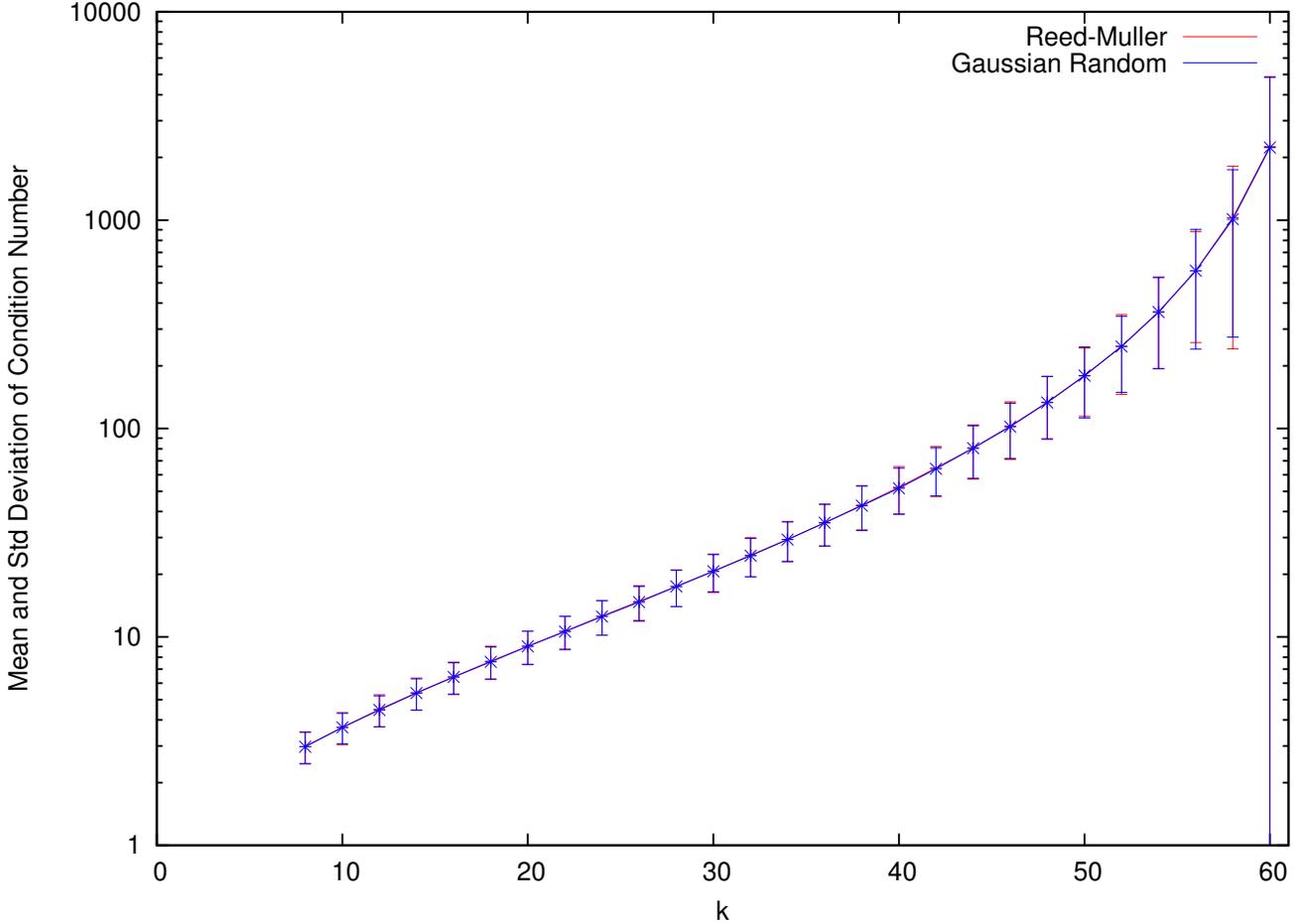}
\end{center}
\caption{Mean and standard deviation for the condition number of $k$-Gram matrices for $\Phi_{RM}$, with $m=6$, compared to
that of a Gaussian random matrix of the same size.}
\label{fig:exp}
\end{figure*}
\end{rem}

\begin{rem} Note that similar to {\icd{the case of}} random and expander matrices, the number of measurements $\mm$ {\icd{grows as the inverse square of}} the distortion parameter $\epsilon$, $\mm \propto \frac{1}{\epsilon^2}${\icd{, as $\epsilon\rightarrow 0$}}.
\end{rem}
\begin{rem}
By avoiding the use of the Cauchy-Schwarz inequality at the end of the proof, and making use 
of Remark \ref{rem_ell}, one can sharpen the bounds. From (\ref{yess}) it follows that with probability at least $ 1 - 2\exp\left(-\frac{\beta^2 \mm^\eta\|\as\|^2}{8\|\as\|^2_{\ell_1}}\right)$\vskip-0.4cm
$$
-\left(\beta -\frac{1}{\n-1}\right)\|\as\|^2 -
\frac{1}{\n-1}\,\|\as\|^2_{\ell_1}\leq\|\ff\|^2 - \|\as\|^2\,$$and
 $$
\|\ff\|^2 - \|\as\|^2\leq 
\left(\beta+\frac{1}{\n-1}\right)\|\as\|^2
$$
This implies (set $\beta = \gamma \rho$, $\rho=\|\as\|_{\ell_1} \|\as\|^{-1}\,$)
\begin{eqnarray}\nonumber&&\hskip-0.5cm
\Pr{\!}_{\pi}\left[\left|\|\ff\|^2 - \|\as\|^2 \right| \geq 
\left( \gamma\rho + \frac{1}{\n-1}\max(1,\rho^2-1)\right)\|\as\|^2 
\right]\\\nonumber&&\hskip-0.5cm\leq 2\,e^{-\,\gamma^2\, \mm^\eta/8}~,
\end{eqnarray}
or, equivalently,
\begin{eqnarray}
\nonumber&\,&\Pr{\!}_{\pi}\left[\left|\|\ff\|^2 - \|\as\|^2 \right| \geq \epsilon \|\as\|^2 
\right]\\\nonumber &\leq& 2\,\exp \left(- \frac{1}{8} \mm^\eta \left[\frac{\epsilon -(\n-1)^{-1}}{\rho}-\chi_{(\rho>\sqrt{2})}\frac{\rho^2-2}{\rho(\n-1)}\right]^2\right)~,
\end{eqnarray}
with $\rho=\|\as\|_{\ell_1} \|\as\|^{-1}\,$, as above, and $\chi_{(a>0)}=1$ if $a> 0$,
$\chi_{(a>0)}=0$ otherwise. The worst case for this bound
is when $\rho=\sqrt{k}$, in which case we recover the bound in Theorem \ref{mainthm}; if one is restricted, for whatever reason, to $k$-sparse vectors that are known to have some 
entries that are much larger than other non vanishing entries, then the more complicated 
bound given here is tighter.
\end{rem}

{
\begin{rem} \label{r1misha}If the sparsity level $k$ is greater than $\sqrt{\n}$, then $\n\leq k^2\leq \mm^2$. However, since some deterministic sensing matrices of section~\ref{sec:matrices} structurally require the condition $\mm^2 \leq \n$, a deterministic matrix with $\mm'=O\left( \frac{k\log \n}{\epsilon^2}\right)$ rows and $\mm'^2\leq\n'$ columns is required. 
In this case, the $\mm'\times\n$ sensing matrix $\A$ is constructed by choosing $\n$ random columns from the $\mm'\times \n'$ deterministic matrix. 
\end{rem}
}
}
\ignore{
\subsection{StRIP for Partial Fourier Ensembles}
 A Partial Fourier ensemble $\frac{1}{\sqrt{\mm}}\A$, where $\A$ is a uniformly random set of $\mm$ rows drawn from the $\n \times \n$  discrete Fourier transform  is a random sensing matrices widely used in compressed sensing, as it requires $O(\n\log m )$ memory cost in contrast to $O(\m \n)$ cost of storing Gaussian and Bernoulli matrices. Moreover, It is known that if $\mm \geq k \log^5 \n$ the partial Fourier matrix satisfies the RIP property. However, it follows from Theorem~\ref{maintheorem}  that even theoretically $k \log \n$ measurements are enough statistically when using a random Fourier matrix.

It is easy to verify that $\A$ satisfies the Conditions~(St1), and~(St2). We now show that with overwhelming probability over the choice of random rows, a partial Fourier matrix satisfies Condition~(St3). 

First fix a column $\varphi_i$ other than the identity column, and define the random variable $Z_x$ to be the value of the entry $\varphi_i(x)$. The randomness is with respect to choices of the rows of $\A$. Since the rows are chosen uniformly at random, and the column sums in the discrete Fourier transform are all  zero. We have
\begin{equation}
\label{rev2}
\mathbb{E}\left[\frac{\sum_x Z_x}{\mm}\right]=\frac{\sum_x \mathbb{E}[Z_x] }{\mm}=0.
\end{equation}
Now considering the fact that all entries are unimodular, and using Hoeffding's inequality for the real and imaginary parts of the radnom variable $\frac{\sum_x Z_x}{\mm}$ independently and then applying the union bounds, for all $\epsilon>0$ we get
\begin{eqnarray}
\nonumber&&\Pr\left[\left| \frac{\sum_x Z_x}{\mm} \geq \epsilon  \right| \right]\\\nonumber&leq& 4\exp\left\{ -2\mm\epsilon^2\right\} .
\end{eqnarray}
Applying union bounds to all $\n-1$ admissible columns we get
\begin{equation}
\label{rev3}
\Pr\left[\mbox{exists a column average greater than } \epsilon\right]\leq 4\n\exp\left\{ -2\mm\epsilon^2\right\} .
\end{equation}
Hence, with overwhelming probability all column average are $O\left(\sqrt{ \frac{\log\n}{\mm}}\right)$, and all column sums are less than $\sqrt{\mm\log \n}$ (except the identity column).
It then follows from Theorem~\ref{maintheorem} that a partial Fourier matrix satisfies StRIP with $k \log \n$ measurements, which provides another justification why partial Fourier matrices work well in practice.
}


\subsection{Proving UStRIP: Uniqueness of Sparse Representation}

Although we have established the desired near-isometry bounds, we still have to address the Uniqueness guarantee; unlike the standard RIP case, this does not follow automatically from a StRIP bound, as pointed out in the Introduction. More precisely, we need to estimate the probability that a randomly picked $k$-sparse vector $\as$ has an ``evil twin''
$\as'\neq \as$ that maps to the same image under $\A$, i.e. 
$\A \as=\A \as' $, and prove that this probability is very small. 

If $S\subset \set$ is the 
union of possible support sets 
of a two 
$k$-sparse vectors, that is, if
$s\doteq |S|\leq 2k$, then we define $\A _S$ to be the $N\times s$ matrix obtained by 
picking out only the columns indexed by labels in $S$. In other words, the matrix elements of
$\A _S$ are those $\varphi_j(x)$ for which $j \in S$, with $x$ varying over its full range.
There will be two different $k$-sparse vectors $\as'\neq \as$, the supports of which are both contained in $S$, if and only if the $s \times s$ matrix $\A^{\dag}\A$ is rank-deficient (where $\A^\dag$ denotes conjugate transpose of $\A$). Note that this property concerns the support set $S$ only -- the values of the entries of $\as$ are not important. This is similar to the discussion of sparse reconstruction when $\A$ satisfies a deterministic Null Space Property \cite{Devore}.
Once uniqueness is found to be overwhelmingly likely, we can derive from it the probability that decoding algorithms (such as the quadratic decoding algorithms described in 
Section \ref{sec:quad})
succeed in constructing, from $\A \as$, a faithfully exact or close copy (depending on the application) of the $k$-sparse source vector $\as$.

In fact, it turns out that we won't even have to consider matrices $\A_S$ with 
$|S|=2k$; as we shall see below, it suffices to consider $\A_S$ for sets $S$ of 
cardinality up to $k$.

Once again, condition (St3) will play a crucial role. For the StRIP analysis, in the previous subsection, it sufficed to to take $\eta >0$, where $\eta$ is the
parameter that measures the closeness of column sums in (St3).
In this subsection, we will impose a non-zero lower bound on $\eta$; we shall see that $\eta{\icd{>}}0.5$  suffices for our analysis. 

We recall here the formulation of (St3): for any column $\varphi_j $ 
of the sensing matrix, with  $j \geq 2$,
\[
\left|  \sum_x  \varphi_j(x) \right| \leq \mm^{1-\eta/2} .
\] 
We 
introduced the notation 
$\A_S$ 
at the start of this subsection. We shall
also use the special case where we wish to restrict
the sensing matrix $\A$ 
to a single column indexed by $w$; in that case, we denote 
the restriction by $\varphi_w$. Finally we denote the conjugate transpose of a matrix $\A_S$ by $\A_S^\dag$. We shall use Tropp's argument (see Section 7 of [Tro08b]) to prove uniqueness of sparse representation; to apply this argument we first 
need to prove that a random submatrix $\varphi_\kappa$ has small coherence with the remaining columns of the sensing matrix.
\begin{lemma}
\label{ul1}
Let $\Phi$ be $\eta-$StRIP-able with $\eta >1/2$, and assume that the conditions $k< \epsilon(C-1)+1$, and
$N=O\left(\left(\nicefrac{k \log\n}{\epsilon^2}\right)^{1/\eta}\right)$ hold, and $\delta$ is as defined in Theorem \ref{maintheorem}, i.e $$\delta\,:=\,2\exp\left[\,-\,\frac{{\icd{[\epsilon-(k-1)/(\n-1)]}}^2\,\mm^{\eta}}{8\, k}\,\right].$$ Let $w$ be a fixed column of $\A$, and let $\kappa=\left\{ \kappa_1,\cdots,\kappa_{k}\right\}$ be the positions of the first $k$ elements of a random permutation of 
{\cb{ 
$\set \setminus \{w \} .$
}}
Then
\begin{equation}
\label{ue1}
\Ex\left[\left\|\frac{1}{\sqrt{\mm}}\A^{\dag}_\kappa \frac{1}{\sqrt{\mm}}\varphi_w\right\|^2\right]=
\frac{k}{\mm}\frac{\n-\mm}{(\n-1)},
\end{equation}
where the expectation is with respect to the choice of the set $\kappa$.
\end{lemma}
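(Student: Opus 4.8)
The plan is to compute the expectation directly by expanding the squared norm and exploiting conditions (St1) and (St2). First I would write
\[
\left\|\tfrac{1}{\sqrt{\mm}}\A^{\dag}_\kappa\,\tfrac{1}{\sqrt{\mm}}\varphi_w\right\|^2
=\frac{1}{\mm^2}\sum_{i=1}^{k}\left|\langle \varphi_{\kappa_i},\varphi_w\rangle\right|^2
=\frac{1}{\mm^2}\sum_{i=1}^{k}\left|\sum_x \varphi_{\kappa_i}(x)\overline{\varphi_w(x)}\right|^2 .
\]
By (St2) and Lemma~\ref{inner} (more precisely the argument used in Lemma~\ref{lemma3_2}), for each $i$ there is an index $m(\kappa_i,w)$ with $\varphi_{\kappa_i}(x)\overline{\varphi_w(x)}=\varphi_{m(\kappa_i,w)}(x)$, and since $\kappa_i\neq w$ we have $m(\kappa_i,w)\neq 1$. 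So each summand equals $\left|\sum_x \varphi_{m(\kappa_i,w)}(x)\right|^2$, i.e. the squared modulus of a column sum of a \emph{non-identity} column.

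Next I would take the expectation over the random set $\kappa$. By symmetry each $\kappa_i$ is uniformly distributed over $\set\setminus\{w\}$, and as $\kappa_i$ ranges over that set the index $m(\kappa_i,w)$ ranges bijectively over $\set\setminus\{1\}$ (the map $v\mapsto m(v,w)$ is a translation in the group of columns, hence a bijection, and it sends $v=w$ — which is excluded — to the identity $1$). Therefore
\[
\Ex\!\left[\left|\sum_x \varphi_{m(\kappa_i,w)}(x)\right|^2\right]
=\frac{1}{\n-1}\sum_{\ell=2}^{\n}\left|\sum_x\varphi_\ell(x)\right|^2 .
\]
The remaining task is to evaluate $\sum_{\ell=1}^{\n}\left|\sum_x\varphi_\ell(x)\right|^2$, which is the squared Frobenius norm of $\A^{\dagger}\mathbf{1}$, or equivalently $\mathbf{1}^{\top}\A\A^{\dagger}\mathbf{1}$. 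By Lemma~\ref{ll2} (its proof shows $\A\A^{\dagger}=\n\,\mathrm{I}_\mm$), this equals $\n\,\|\mathbf 1\|^2=\n\mm$. The $\ell=1$ term contributes $\left|\sum_x\varphi_1(x)\right|^2=\mm^2$, so $\sum_{\ell=2}^{\n}\left|\sum_x\varphi_\ell(x)\right|^2=\n\mm-\mm^2=\mm(\n-\mm)$. Hence $\Ex\big[|\langle\varphi_{\kappa_i},\varphi_w\rangle|^2\big]=\mm(\n-\mm)/(\n-1)$, and summing over $i=1,\dots,k$ and dividing by $\mm^2$ gives exactly $\frac{k}{\mm}\cdot\frac{\n-\mm}{\n-1}$, as claimed.

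I do not expect a serious obstacle here — this is essentially a bookkeeping computation. The one point that needs care is the counting/bijection argument: one must verify that for fixed $w$ the map $v\mapsto m(v,w)$ on $\set$ is a bijection and identifies the preimage of the identity column as $v=w$ (so excluding $w$ from $\kappa$ exactly excludes the identity from the range of $m(\cdot,w)$), which is where Lemma~\ref{st4} (closure under conjugation) and the group structure (St2) are used. The other mild subtlety is simply noting that the hypotheses on $k$, $\mm$, $\n$, $\epsilon$, $\delta$ listed in the statement play no role in \eqref{ue1} itself — they are recorded there only because they will be needed when this lemma is fed into Tropp's argument in the next step — so the proof of this lemma uses only (St1), (St2), and the derived Lemmas~\ref{ll1}–\ref{ll2}.
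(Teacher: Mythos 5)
Your proposal is correct and follows essentially the same route as the paper: linearity of expectation, the group/conjugation structure turning each inner product into a column sum of a non-identity column, the uniform-bijection argument giving an average over $\{2,\ldots,\n\}$, and the evaluation of the total via row orthogonality. Your packaging of the last step as $\mathbf{1}^{\top}\A\A^{\dagger}\mathbf{1}=\n\mm$ (minus the $\mm^2$ contribution of $\varphi_1$) is just a cosmetic rephrasing of the paper's direct computation $\sum_{j\geq 2}\sum_{x,y}\varphi_j(x)\overline{\varphi_j(y)}=\sum_{x,y}[\n\delta_{x,y}-1]=\mm\n-\mm^2$, both resting on (St1) and unimodularity.
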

\begin{proof}
By linearity of expectation we have
\begin{equation}\label{ue2}
\Ex_\kappa\left[\left\|\frac{1}{\sqrt{\mm}}\A^{\dag}_\kappa \frac{1}{\sqrt{\mm}}\varphi_w\right\|^2\right]=\frac{1}{\mm^2}\sum_{i=1}^{k}
\Ex_{\kappa_i}\left[\left|\left(\overline{\varphi_{\kappa_i}}\right)^\top \varphi_w\right|^2\right].
\end{equation}
Since the set of columns of $\A$  is invariant under complex conjugation, and forms a group under pointwise multiplication, we have
\begin{eqnarray}
\nonumber \left(\overline{\varphi_{\kappa_i}}\right)^\top \varphi_w&=&\sum_{x} \overline{\varphi_{\kappa_i}(x)}\varphi_w(x)= \sum_{x} \varphi_{m(w,\kappa_i)}(x),
\end{eqnarray}
where we use again the notation introduced just below (\ref{eqn6}): 
$\varphi_{\ell}(x)\overline{\varphi_{\ell'}(x)}\doteq \varphi_{m(\ell,\ell')}(x)$.
As $\kappa$ ranges over all the possible permutations that do not move $w$, $\kappa_i$ ranges uniformly 
over $\set\setminus\{w\}$, and the different $z_i:=m(w,\kappa_i)$ range uniformly
over $\{2,\ldots,\n\}$.

Hence:
{\cb{
\begin{eqnarray}
\nonumber&&\sum_{i=1}^{k}
\frac{1}{\mm^2}
\Ex_{\kappa_i}\left[\left|\left(\overline{\varphi_{\kappa_i}}\right)^\top \varphi_w\right|^2\right]
\\\nonumber&=&\sum_{i=1}^{k}
\frac{1}{\mm^2} 
\Ex_{z_i}\left[\left|\sum_x\varphi_{z_i}(x)\right|^2\right]
\\\nonumber&=&\sum_{i=1}^k\frac{1}{\mm^2} 
\Ex_{z_i}\left[\sum_{x,y=1}^N \varphi_{z_i}(x)\overline{\varphi_{z_i}(y)}\right]
\nonumber\\
&=&
\frac{k}{\mm^2}
\frac{1}{(\n-1)} \sum_{j=2}^{\n} \sum_{x,y=1}^N \varphi_j(x)\overline{\varphi_j(y)}\\\nonumber&=&
\frac{k}{\mm^2}
\frac{1}{(\n-1)} \left( \sum_{x,y=1}^N \left[\n\delta_{x,y}-1\right] \right)\nonumber\\
&=& 
\frac{k}{\mm^2} \frac{1}{(\n-1)}(\mm\n-\mm^2)= \frac{k}{\mm}\frac{\n-\mm}{(\n-1)},\nonumber
\end{eqnarray}
}}
where we have used (St1) .
\end{proof}

Next, we use the Self-Avoiding McDiarmid  inequality, together with property~(St3) to derive a uniform bound for the random variable $\left\|\A_\kappa^\dag \varphi_w\right\|^2$:

\begin{theorem}
\label{t1}
Let $\Phi$ be $\eta-$StRIP-able with $\eta >1/2$, and assume that the conditions $k< \epsilon(C-1)+1$, and
$N=O\left(\left(\frac{k \log\n}{\epsilon^2}\right)^{1/\eta}\right)$ hold, define $\delta$ as in Theorem~\ref{maintheorem}, and let 
$\lambda$ be a set of $k$ random columns of $\A$. 
Then with probability at least $1-\delta$, there exists no $w$ such that
\begin{equation}
\label{ue5}
\left\|\frac{1}{\sqrt{\mm}}\A_\lambda^\dag \frac{1}{\sqrt{\mm}}\varphi_w\right\|^2
\geq \frac{k}{\mm}\frac{\n-\mm}{\n-1}+
\frac{\sqrt{2k\log \nicefrac{\n}{\delta}}}{\mm^\eta}
\end{equation}
\end{theorem}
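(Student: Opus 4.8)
The plan is to fix one column $w$ of $\A$, build a self-avoiding martingale in the random positions $\kappa=(\kappa_1,\dots,\kappa_k)$ of Lemma~\ref{ul1} (the first $k$ entries of a random permutation of $\set\setminus\{w\}$), control the relevant random variable by the Self-Avoiding McDiarmid inequality (Theorem~\ref{McDthm}), and then union-bound over the $\n$ choices of $w$. Write
$$g(\kappa_1,\dots,\kappa_k)\doteq\left\|\tfrac{1}{\sqrt{\mm}}\A^{\dag}_\kappa\tfrac{1}{\sqrt{\mm}}\varphi_w\right\|^2=\frac{1}{\mm^2}\sum_{i=1}^{k}\left|\sum_x\varphi_{m(w,\kappa_i)}(x)\right|^2,$$
using the notation $\varphi_\ell(x)\overline{\varphi_{\ell'}(x)}=\varphi_{m(\ell,\ell')}(x)$ from the proof of Lemma~\ref{lemma3_2}. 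Lemma~\ref{ul1} already gives $\Ex_\kappa[g]=\frac{k}{\mm}\,\frac{\n-\mm}{\n-1}$, which is exactly the first term on the right of~(\ref{ue5}); so it remains only to show that $g$ exceeds its mean by more than $\mm^{-\eta}\sqrt{2k\log(\n/\delta)}$ with probability at most $\delta/\n$, uniformly in $w$.

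First I would verify the bounded-difference hypothesis~(\ref{concentration2}) of Theorem~\ref{McDthm}. Replacing $\kappa_\ell$ by another admissible value $\kappa'_\ell$ alters only the $\ell$-th summand of $g$, so the difference equals $\mm^{-2}\big(|\sum_x\varphi_{m(w,\kappa_\ell)}(x)|^2-|\sum_x\varphi_{m(w,\kappa'_\ell)}(x)|^2\big)$. Since $\kappa_\ell\neq w$ and $\kappa'_\ell\neq w$, both indices $m(w,\kappa_\ell)$ and $m(w,\kappa'_\ell)$ differ from $1$, so~(\ref{st3}) bounds each squared column sum by $\mm^{2-\eta}$; hence the whole difference is at most $\mm^{-\eta}$ in absolute value, and this bound is uniform in the remaining coordinates, so it persists after the conditional expectations over the later coordinates are taken. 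Thus~(\ref{concentration2}) holds with $c_i=\mm^{-\eta}$ for every $i$, whence $\sum_i c_i^2=k\,\mm^{-2\eta}$.

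Theorem~\ref{McDthm} then yields, for every $\gamma>0$,
$$\Pr{\!}_{\kappa}\!\left[\,g-\Ex_\kappa[g]\ge\gamma\,\right]\le2\exp\!\left(-\frac{2\gamma^2\mm^{2\eta}}{k}\right).$$
Choosing $\gamma$ so that the right-hand side equals $\delta/\n$ gives $\gamma=\mm^{-\eta}\sqrt{\tfrac{k}{2}\log(2\n/\delta)}$, and since $\n\ge2$ and $0<\delta<1$ force $(\n/\delta)^3\ge2$, one checks $\log(2\n/\delta)\le4\log(\n/\delta)$, so $\gamma\le\mm^{-\eta}\sqrt{2k\log(\n/\delta)}$. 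Thus for a fixed $w$ the event that $g$ exceeds $\frac{k}{\mm}\,\frac{\n-\mm}{\n-1}+\mm^{-\eta}\sqrt{2k\log(\n/\delta)}$ has probability at most $\delta/\n$. Since, conditionally on $w\notin\lambda$, the random $k$-set $\lambda$ is distributed exactly as the $\kappa$ of Lemma~\ref{ul1}, a union bound over the $\n$ columns $w$ shows that, with probability at least $1-\delta$, no $w$ outside $\lambda$ satisfies~(\ref{ue5}) --- which is the content of the theorem, the columns lying in $\lambda$ being of no interest for the coherence estimate.

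The main obstacle is really the bookkeeping around the coincidence $\kappa_i=w$: if the moving index were allowed to equal $w$, the summand $\mm^{-2}|\varphi_w^{\top}\varphi_w|^2=1$ would appear, destroying both the Lipschitz constant $\mm^{-\eta}$ and the mean computed in Lemma~\ref{ul1}. This is why one must restrict to columns $w\notin\lambda$ (exactly what the downstream uniqueness argument, bounding the coherence of the random subdictionary $\A_\lambda$ with the external atoms, needs) and condition on $w\notin\lambda$ before invoking the self-avoiding McDiarmid inequality, arranging the union bound so that the per-column bound $\delta/\n$ already absorbs the factor $2$ in Theorem~\ref{McDthm}. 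Beyond that, the only inputs are condition~(St3) (for the bounded differences) and Lemma~\ref{ul1} (for the mean); the hypothesis $\eta>1/2$ plays no role here, only later in making the bound in~(\ref{ue5}) small enough to force $\A_\lambda^{\dag}\A_\lambda$ to be nonsingular.
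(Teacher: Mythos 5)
Your proof is correct and follows essentially the same route as the paper: fix $w$, use Lemma~\ref{ul1} for the mean, verify the bounded-difference condition via (St3) with $c_i=\mm^{-\eta}$, apply the Self-Avoiding McDiarmid inequality, and union-bound over the $\n$ columns, choosing $\gamma$ to match the stated threshold. Your explicit handling of the case $w\in\lambda$ (conditioning on $w\notin\lambda$, which is all the downstream uniqueness argument needs) is a point the paper leaves implicit, and your constant bookkeeping is slightly tighter but otherwise equivalent.
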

\begin{proof}
{\cb{
The proof is in several steps. In the first step, we pick any $w \in \set$, and keep it fixed (for the time being). Let $$f(t_1,\cdots,t_k)=
\frac{1}{\mm^2}
\sum_{i=1}^{k} \left|\left(\overline{\varphi_{t_i}}\right)^\top \varphi_w \right|^2,$$ 
where we assume that $t_1, \cdots, t_k$ are k different elements of $\set\setminus\{w\}$, picked at random. Note that if $\lambda$ is a random permutation of $\set\setminus\{w\}$,
then $f(\lambda_1,\ldots,\lambda_k)=\left\|\frac{1}{\sqrt{\mm}}
\A_\lambda^\dag 
\frac{1}{\sqrt{\mm}}
\varphi_w\right\|^2$The function $f$, as defined above  from 
$$\{(t_1,t_2,\ldots,t_k)\,; \,t_i\in \set\setminus\{w\}\mbox{ $\forall$ }i, \, t_i\neq t_j,
\mbox{ $\forall$ } i\neq j\}$$ to $\RR$, is information-theoretically indistinguishable from the function $F$ from
the  permutations of  $\set\setminus\{w\}$ to $\RR$ defined by $$F(\lambda)=\left\|\frac{1}{\sqrt{\mm}}
\A_\lambda^\dag 
\frac{1}{\sqrt{\mm}}
\varphi_w\right\|^2.$$
We have computed $\Ex[f]=\Ex[F]$
in Lemma \ref{ul1}; in order to apply the Self-Avoiding McDiarmid Inequality to $f$,
we need verify only that a necessary condition of the Self-Avoiding McDiarmid inequality
holds. 
}}
\\When we subtract $f(t_1,\cdots,t_{i-1},t'_{i},t_{i+1},\cdots,t_{k})$ from 
$f(t_1,\cdots,t_{i-1},t_i,t_{i+1},\cdots,t_{k})$, only the $i$-th term survives; we have 
\begin{eqnarray}
\nonumber&&\Ex[f(t_1,\cdots,t_{i-1},t_i,t_{i+1},\cdots,t_{k})]\\\nonumber&-&\Ex[f(t_1,\cdots,t_{i-1},t'_{i},t_{i+1},\cdots,t_{k})\\
&=&\left|
\frac{1}{\mm^2} 
\left|\left(\overline{\varphi_{t_i}}\right)^\top \varphi_w \right|^2
-\frac{1}{\mm^2} 
\left|\left(\overline{\varphi_{t'_i}}\right)^\top \varphi_w \right|^2\right|
\nonumber\\
&=& \frac{1}{\mm^2}
\left|\left|\sum_{x=1}^\mm\varphi_{m(w,t_i)}(x)\right|^2 - 
\left|\sum_{x=1}^\mm\varphi_{m(w,t'_i)}(x)\right|^2\right|\nonumber\\
&\leq& 2\mm^{-\eta}~,
\end{eqnarray}
by (St3), 
since $m(w,t_i)\neq 1\neq m(w,t'_i)$. It immediately follows that the concentration condition holds for $f$, with 
$c_i = \mm^{-\eta}$. Therefore the Self-Avoiding McDiarmid Inequality holds for $f$, which means it also holds for $F$:
for any positive $\gamma$,
\begin{eqnarray} 
&&\Pr{\!}_\lambda\left[\left\|\frac{1}{\sqrt{\mm}}\A_\lambda^\dag \frac{1}{\sqrt{\mm}}
\varphi_w\right\|^2\geq \frac{k}{\mm}+\gamma\right] \nonumber\\
&&\hskip-0.7cm\quad\quad\leq\Pr{\!}_\lambda\left[\left\|\frac{1}{\sqrt{\mm}}\A_\lambda^\dag
\frac{1}{\sqrt{\mm}}\varphi_w\right\|^2\geq \frac{k}{\mm}\frac{\n-\mm}{\n-1}+\gamma\right]\nonumber \\\nonumber&&\leq \exp\left(\frac{-\gamma^2\mm^{2\eta}}{2k}\right).\nonumber
\end{eqnarray}
All this was for one fixed choice of $w$; note that the bound does not depend on the
identity of $w$.
This implies that by applying union bounds over the $\n$ possible choices for the column
$w$ of $\A$, we get that the probability that there exists a $w$ such that 
\[
\left\|\frac{1}{\sqrt{\mm}}\A_\lambda^\dag \frac{1}{\sqrt{\mm}}\varphi_w\right\|^2\geq 
\frac{k}{\mm}+\gamma, \]
is at most $ \n\exp\left(\frac{-\gamma^2\mm^{2\eta}}{2k}\right).
$ Writing $\gamma$ in terms of $\delta$ completes the proof. 
\end{proof}

If $\mm=O\left(\left( \frac{k \log \n}{\epsilon^2}\right)^{1/\eta} \right)$, 
the right hand side of (\ref{ue5}) reduces to
\[
O \left(\frac{\epsilon^2}{N}\right) + 
O \left (  \epsilon^{2\eta} \left[1+\frac{|\log \delta|}{\log \n}\right]^{1/2}(k\log \n)^{-(\eta-1/2)}    \right)
\]
Thus, if $\eta > 1/2$, then (for sufficiently small $\epsilon$, and sufficiently large $\n$)
a choice of $k$ random columns of $\A$
has a very high probability of having small coherence with {\em any} other column of the matrix; in particular, 
{\cb{
we have, with probability exceeding $1-\delta$, that
\begin{equation}\label{rob}\left\|\frac{1}{\sqrt{\mm}}\A_\lambda^\dag \frac{1}{\sqrt{\mm}}\varphi_w \right\|^2 < (1-\epsilon)^2.\end{equation}
}}

This establishes incoherence between the random submatrix $\A_\lambda$ and the remaining columns of the sensing matrix. 

We can now complete the UStRIP proof by following an argument
of Tropp \cite{joel1}; for completeness we include the argument here:
\begin{lemma}
\label{mahnaz} Let $\lambda=\{\lambda_1,\cdots,\lambda_{k}\}$ be a set of $k$ indices sampled uniformly from $\yc$. Assume that $\A$ {\icd{is}} $(k,\epsilon{\icd{,\delta}})$-StRIP. Let $S$ be any other subset of $\yc$ of size less than or equal to $k$. Then{\icd{, 
with probability at least $(1-\delta)$ (with respect to the randomness in the choice 
of $\lambda$) }}
\begin{equation}
\label{maheq}
\dim\left(\range(\A_\lambda)\cap \range(\A_S)\right)<k.
\end{equation}
\end{lemma}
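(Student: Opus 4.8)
The plan is to run Tropp's incoherence argument by contradiction. Fix the set $S$ with $|S|\le k$ and $S\ne\lambda$, and condition on the event $\mathcal E$ on which (i) $\A_\lambda$ satisfies the near-isometry~(\ref{distor}), equivalently $\tfrac1\mm\A_\lambda^\dag\A_\lambda$ has all its eigenvalues in $[1-\epsilon,1+\epsilon]$, so that $\A_\lambda$ has full column rank $k$; and (ii) the incoherence bound~(\ref{rob}) holds for every column $\varphi_w$ of $\A$ with $w\notin\lambda$. By the $(k,\epsilon,\delta)$-StRIP hypothesis together with Theorem~\ref{t1}, $\Pr[\mathcal E]\ge 1-\delta$. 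On $\mathcal E$ the orthogonal projector $P_\lambda:=\A_\lambda(\A_\lambda^\dag\A_\lambda)^{-1}\A_\lambda^\dag$ onto $\range(\A_\lambda)$ is well defined, and $\dim\range(\A_\lambda)=k$.

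Now suppose~(\ref{maheq}) fails, i.e. $\dim\bigl(\range(\A_\lambda)\cap\range(\A_S)\bigr)=k$ (it cannot exceed $k$, being a subspace of $\range(\A_\lambda)$). Then the intersection \emph{equals} $\range(\A_\lambda)$, so $\range(\A_\lambda)\subseteq\range(\A_S)$; since $\dim\range(\A_S)\le|S|\le k=\dim\range(\A_\lambda)$, this forces $|S|=k$ and $\range(\A_S)=\range(\A_\lambda)$. As $S\ne\lambda$ and $|S|=|\lambda|=k$, pick any $w\in S\setminus\lambda$; then $\varphi_w\in\range(\A_S)=\range(\A_\lambda)$, so $P_\lambda\varphi_w=\varphi_w$, and since $|\varphi_w(x)|=1$ by Lemma~\ref{ll1}, $\|P_\lambda\varphi_w\|^2=\|\varphi_w\|^2=\mm$.

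To reach a contradiction I bound $\|P_\lambda\varphi_w\|^2$ from above. Writing $y:=\A_\lambda^\dag\varphi_w$, self-adjointness and idempotence of $P_\lambda$ give $\|P_\lambda\varphi_w\|^2 = y^\dag(\A_\lambda^\dag\A_\lambda)^{-1}y \le \|(\A_\lambda^\dag\A_\lambda)^{-1}\|\,\|\A_\lambda^\dag\varphi_w\|^2$. From (i) the smallest eigenvalue of $\A_\lambda^\dag\A_\lambda$ is at least $\mm(1-\epsilon)$, so $\|(\A_\lambda^\dag\A_\lambda)^{-1}\|\le\bigl(\mm(1-\epsilon)\bigr)^{-1}$; and since $w\notin\lambda$, the incoherence bound~(\ref{rob}) gives $\|\A_\lambda^\dag\varphi_w\|^2 = \mm^2\bigl\|\tfrac1{\sqrt\mm}\A_\lambda^\dag\tfrac1{\sqrt\mm}\varphi_w\bigr\|^2 < \mm^2(1-\epsilon)^2$. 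Hence $\|P_\lambda\varphi_w\|^2 < \mm(1-\epsilon) < \mm$, contradicting $\|P_\lambda\varphi_w\|^2=\mm$. Therefore~(\ref{maheq}) holds on $\mathcal E$, proving the lemma.

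The linear algebra above is routine; I expect the only delicate point to be the probabilistic bookkeeping — checking that the near-isometry of $\A_\lambda$ on its whole column span and the incoherence bound~(\ref{rob}) for \emph{all} outside columns $w$ can genuinely be taken to hold simultaneously with probability at least $1-\delta$ (rather than $1-2\delta$), and, relatedly, that the $(k,\epsilon,\delta)$-StRIP hypothesis — which a priori controls only a random \emph{direction} inside a random support — does deliver the spectral bound on $\tfrac1\mm\A_\lambda^\dag\A_\lambda$ used in step (i). A minor but necessary supporting point is the case analysis: if $|S|<k$ or $S\subseteq\lambda$ the claim is immediate or excluded, which is precisely what guarantees that a column $w\in S\setminus\lambda$ with $w\notin\lambda$ can always be extracted in the contradiction step.
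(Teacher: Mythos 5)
Your argument is essentially the paper's own: the same Tropp-style projection bound $\|\PT\varphi_w\|^2\le\|\A_\lambda^\dag\varphi_w\|^2/\sigma_{\min}(\A_\lambda)^2<(1-\epsilon)\mm<\mm=\|\varphi_w\|^2$, combining the StRIP spectral bound on $\A_\lambda$ with the incoherence estimate~(\ref{rob}) from Theorem~\ref{t1}, merely recast as a contradiction instead of the paper's direct "some column of $\A_S$ lies outside $\range(\A_\lambda)$" conclusion. The two delicate points you flag (whether the two events can be taken jointly with probability $1-\delta$ rather than $1-2\delta$, and whether StRIP yields the full spectral bound on $\tfrac1\mm\A_\lambda^\dag\A_\lambda$) are likewise passed over in the paper's proof, so your treatment matches it.
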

\begin{proof} First, note that we {\icd{need check only the case}}  $\dim\left(\range(\A_S)\right)=k$, {\icd{since}} otherwise ~(\ref{maheq}) {\icd{is immediate}}. 
{\icd{Note also that, because $\A$ is $(k,\epsilon,\delta)$-StRIP, the probability that
the randomly picked set 
$\lambda=\{\lambda_1,\cdots,\lambda_{k}\}$ satisfies
\[
(1-\epsilon)\mbox{Id}_{\lambda}\leq \frac{1}{N}\A_{\lambda}^{\dag} \A_{\lambda} \leq (1+\epsilon)\mbox{Id}_{\lambda}
\]
is at least $1-\delta$. (The notation $\mbox{Id}_{\lambda}$ stands for the identity matrix on $\lambda$;  this  just amounts to restating the 
$(k,\epsilon,\delta)$-StRIP condition in matrix form.) It follows that, with probability at least $1-\delta$,}}
\begin{equation}
\label{StRIPdelta}
\sigma_{\min}\left(\A_\lambda\right)\geq {\icd{\sqrt{(1-\epsilon)\mm}}},
\end{equation} where $\sigma_{\min}   \left(\A_\lambda\right) $ 
is the smallest singular value of $\A_\lambda$. \\
Since $S\neq \lambda$, $S$ has at least one index not in $\lambda$. Denote that index by $s$. Since the entries of the matrix are all unimodular{\icd{,}} we have
\begin{equation}
\label{ueq7}
\left\|\varphi_s\right\|^2=\sum_x \left|\varphi_s(x)\right|^2=\mm.
\end{equation}
Let $\PT$ be the orthogonal projection operator on the {\icd{range $\mathcal{R}_{\lambda}$
of
$\A_\lambda$}}. We {\icd{shall}} prove~(\ref{maheq}) by showing that $\|\PT\varphi_s\|^2<\|\varphi_s\|^2$, which implies that there exists a vector in the range of $\A_S$ that is outside the range of $\A_\lambda$. Note that 
\begin{equation}
\label{ueq8}
\PT={\icd{\A_\lambda\left(\A_\lambda^\dag \A_\lambda\right)^{-1}}} \A_\lambda^\dag{\icd{.}}
\end{equation}
Since $\A_\lambda$ {\icd{is}} $(k,\epsilon{\icd{,\delta}})-$StRIP, {\icd{we have, still with probability at least $1-\delta$,}}
\begin{eqnarray}
\nonumber
\left\|\PT \varphi_s\right\|^2 
&=&
{\icd{ 
\left(\A_{\lambda}^{\dag}\varphi_s\right)^{\dag}
\left(\A_\lambda^\dag \A_\lambda\right)^{-1}
\left(\A_{\lambda}^{\dag}\varphi_s\right)
}}
\\\nonumber
&\leq& \frac{\left\|\A_{\lambda}^\dag \varphi_s\right\|^2}{\left(\sigma_{\min}(\A_\lambda)\right)^2}
\leq \frac{\left\|\A_{\lambda}^\dag \varphi_s\right\|^2 }
{\mm (1-{\icd{\epsilon }}) }
\\\nonumber&\leq& (1-\epsilon) \mm
< \mm,
\end{eqnarray}
where the {\icd{penultimate}} inequality is by Equation~(\ref{rob}).
\end{proof}

\begin{theorem}
\label{ueq6}
Let $\Phi$ be $\eta-$StRIP-able with $\eta >1/2$, and assume that the conditions $k< \epsilon(C-1)+1$, and
$N=O\left(\left(\frac{k \log\n}{\epsilon^2}\right)^{1/\eta}\right)$ hold, define $\delta$ as in Theorem~\ref{maintheorem}, 
and let $\as$ be {\icd{a randomly picked}} $k$-sparse signal.
{\icd{Then with probability at least $1-\delta$ (with respect to the random choice of $\as$)}}, $\as$ is the only $k$-sparse vector that satisfies the equation ${f}=\nor\A\as$.
\end{theorem}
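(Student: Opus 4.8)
The plan is to run a conditioning argument that follows the two-stage generation of the random signal $\as$: first its support $\lambda\doteq\mathrm{supp}(\as)$, a uniformly chosen $k$-subset of $\yc$, and then its nonzero entries $\as_\lambda$, which (the norm being fixed) are drawn continuously from a sphere in $\RR^{k}$. First I would condition on the favourable event behind Lemma~\ref{mahnaz} --- equivalently, the event supplied by Theorem~\ref{t1} together with the $(k,\epsilon,\delta)$-StRIP property --- which has probability at least $1-\delta$ over the draw of $\lambda$. On this event two things hold at once: $\sigma_{\min}(\A_\lambda)\ge\sqrt{(1-\epsilon)\mm}>0$, so the $\mm\times k$ matrix $\A_\lambda$ is injective; and $\dim\bigl(\range(\A_\lambda)\cap\range(\A_S)\bigr)<k$ for every $S\subseteq\yc$ with $|S|\le k$ and $S\not\subseteq\lambda$.

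Next I would exclude an ``evil twin''. Suppose, on that event, that some $k$-sparse $\as'\ne\as$ obeys $f=\nor\A\as'$, i.e.\ $\A\as'=\A\as$; put $S\doteq\mathrm{supp}(\as')$, so $|S|\le k$. If $S\subseteq\lambda$ then $\as-\as'$ is a nonzero vector supported on $\lambda$ in the kernel of $\A_\lambda$, contradicting injectivity; hence $S\not\subseteq\lambda$. Then $\A\as=\A_\lambda\as_\lambda$ lies in $\range(\A_\lambda)$ and, being equal to $\A_S\as'_S$, also in $\range(\A_S)$; thus $\A_\lambda\as_\lambda\in W_S\doteq\range(\A_\lambda)\cap\range(\A_S)$, a subspace of dimension $<k$. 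Since $\A_\lambda$ is injective, $V_S\doteq\{v\in\RR^{k}:\A_\lambda v\in W_S\}$ is a \emph{proper} subspace of $\RR^{k}$, and the relation above forces $\as_\lambda\in V_S$.

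Finally I would appeal to genericity of the nonzero values. There are only finitely many candidate sets $S$, so $\bigcup_S V_S$ is a finite union of proper subspaces of $\RR^{k}$; its intersection with the fixed-radius sphere carrying $\as_\lambda$ has zero surface measure. Hence, conditioned on the event of the first paragraph, the probability (over the nonzero values) that $\as_\lambda$ lands in some $V_S$ is $0$, so no evil twin exists. Averaging over $\lambda$, the probability that $\as$ is \emph{not} the unique $k$-sparse solution of $f=\nor\A\as$ is at most $\delta$, which is the assertion.

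I expect the one genuinely non-routine point to be this last step: one must check that each $V_S$ is indeed a \emph{proper} subspace --- which is precisely where both the bound $\dim W_S<k$ from Lemma~\ref{mahnaz} and the injectivity of $\A_\lambda$ coming from the StRIP estimate are used --- that the collection of such subspaces is finite, and that the prescribed uniform-on-a-sphere law of the nonzero entries puts no mass on any proper subspace. A secondary, bookkeeping concern is to verify that a single event of probability at least $1-\delta$ simultaneously delivers the full column rank of $\A_\lambda$ and the dimension bounds for all $S$, so that no extra union bound inflates $\delta$.
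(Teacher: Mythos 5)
Your proposal is correct and follows essentially the same route as the paper: condition on the probability-$(1-\delta)$ event over the support $\lambda$ furnished by the StRIP bound and Theorem~\ref{t1} (i.e.\ Lemma~\ref{mahnaz}), then rule out an evil twin by noting that its existence would force the nonzero values to lie in a lower-dimensional subspace, which carries zero measure under the continuous law of the entries. Your treatment is in fact slightly more explicit than the paper's on two points the paper leaves implicit --- taking the finite union of the exceptional subspaces $V_S$ over all candidate supports $S$, and checking that the uniform-on-a-sphere law (rather than a Lebesgue-absolutely-continuous one) assigns them zero mass --- but these are refinements of the same argument, not a different one.
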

\begin{proof}
{\cb{
We have already proved in Section 3.1.2 that $\Phi$ is $(k,\epsilon,\delta)$-StRIP. We start by recalling that the random choice of $\as$ can be viewed as first choosing
its support, a uniformly distributed subset of size $k$ within $\{1,\cdots,\n\}$, and then,
once the support is fixed, choosing a random vector within the corresponding
$k$-dimensional vector space. For this last choice no distribution has been specified;
we shall just assume that it is absolutely continuous with respect to the Lebesgue
measure on $\RR^k$ or $\CC^k$. \\
Since 
$\A$ is $(k,\epsilon,\delta)$-StRIP, $\A_{\lambda}$ is non-singular with
probability exceeding $(1-\delta)$, so that
$$\dim\left(\range(\A_\lambda)\right)=k$$ with probability exceeding 
$1-\delta$. The near-isometry property of $\A_\lambda$ implies that no two signals with support 
$\lambda$ can have the same value in the measurement domain. 
If there nevertheless were a vector $\as'$ such that $\A \as'=\A\as$, the support $S$
of $\as'$ would 
therefore necessarily be different from $\lambda$. By Lemma~\ref{mahnaz}, we know that 
$V \doteq \mbox{range} (A_\lambda) \cap \mbox{range}(A_S)$
is at most $(k-1)$-dimensional.
It follows that in order to possibly have an ``evil twin'' $\as'$, the
vector $\as$ must itself lie in the at most $(k-1)$-dimensional
space
that is the inverse image of $V$ under $\A_\lambda$. This set, however, has measure
zero with respect to  any measure that is absolutely continuous with respect
to the $k$-dimensional Lebesgue measure. Thus, for each $k$-set $\lambda$ for which $\A_\lambda$ is a near-isometry, 
the vectors that are not uniquely determined by their image $\A\as$, constitute
a set of measure zero. Since randomly chosen $k$-sets $\lambda$ produce restrictions
$\A_\lambda$ that are near-isometric with probability exceeding $1-\delta$, the
theorem is proved.
}}
\end{proof}
Combining Remark~\ref{firstfinish} with Theorem~\ref{ueq6} completes the proof of Theorem~\ref{maintheorem}.

\label{sec:null}
\section{Partial Fourier Ensembles}
\label{sec:fourier}
\cb{
 In Partial Fourier ensembles the matrix $\A$ is formed by uniform random selection of $\mm$ rows from the $\n\times\n$ discrete Fourier Transform matrix. The resulting random sensing matrices are widely used in compressed sensing, because the corresponding memory cost is only $O(\mm\log \n )$, in contrast to the $O(\mm \n)$ cost of storing Gaussian and Bernoulli matrices. Moreover, it is known \cite{CT,NT} that if $\mm \geq k \log^5 \n$, then with overwhelming probability, the partial Fourier matrix satisfies the RIP property. It is easy to verify that  such$\A$ satisfies the Conditions~(St1), and~(St2). We now show that it also satisfies Condition~(St3) almost surely. 

Note that here in contrast to the proof of Theorem~\ref{maintheorem}, the randomness is with respect to the choice of the $\mm$ rows from the Discrete Fourier Transform matrix. We show that with overwhelming probability, the condition~(St3) is satisfied for every column of this randomly sampled matrix. First fix a column $\varphi_i$ other than the identity column, and define the random variable $Z_x$ to be the value of the entry $\varphi_i(x)$, where the randomness is with respect to the choice of the rows of $\A$ (that is with respect to the choice of $x$). Since the rows are chosen uniformly at random, and the column sums (for all but the first column) in the discrete Fourier transform are zero, we have
\begin{equation}
\label{rev2}
\mathbb{E}\left[\frac{\sum_x Z_x}{\mm}\right]=\frac{\sum_x \mathbb{E}[Z_x] }{\mm}=0.
\end{equation}
Since all entries are unimodular, we may apply Hoeffding's inequality to both the real and the imaginary part of the random variable $\frac{\sum_x Z_x}{\mm}$, then apply union bounds to conclude that for all $\epsilon > 0$
$$
\Pr\left[\left| \frac{\sum_x Z_x}{\mm} \geq \epsilon  \right| \right]\leq 4\exp\left\{ -2\mm\epsilon^2\right\} .
$$ 
Applying union bounds to all $\n-1$ admissible columns we get
\begin{equation}
\label{rev3}
\Pr\left[\mbox{there exists a column average greater than } \epsilon\right]
\end{equation}
is at most $ 4\n\exp\left\{ -2\mm\epsilon^2\right\} .$
Hence, with probability at least $1-\delta$ all column averages are $O\left(\sqrt{ \frac{\log\n}{\mm}}\right)$, and all column sums are less than $\sqrt{\mm\log \n}$, so that condition~(St3) is indeed satisfied. Applying Theorem~\ref{maintheorem} we see that a partial Fourier matrix satisfies StRIP with only $k ~\log\n$ measurements. This improves upon the best previous upper bound of $k \log^5 \n$ obtained in \cite{CT} and helps explain why partial Fourier matrices work well in practice.
}
\section{Quadratic Reconstruction Algorithm}
\cb{
\begin{algorithm}[ht]
\nonumber
\caption{ Quadratic Reconstruction Algorithm}
   Input: $\mm$ dimensional vector ${f}=\sqm \A \bs{\alpha}+\nu$\\
   Output: An approximation $\has$ to the signal $\bs{\alpha}$\\
     \label{alg1}
   \begin{algorithmic}[1]
   \STATE Set $\yy_1=\yy$, $\Theta=\{\}$, $\has=0_\mm$.
   \FOR{$t=1,\cdots,k$ or while $\|{f_t}\|_2 \geq \epsilon$}
   \FOR{ each entry $x=1$ to $\mm$}\label{a1}
   \STATE \label{mul} pointwise multiply $\yy_t$ with a shift (offset) of itself as in~(\ref{walsh}).
   \ENDFOR 
   \STATE Compute the fast Walsh-Hadamard transform of the pointwise product: Equation~(\ref{fourier}) \label{a2}
   \STATE Find the position $p_t$ of the next peak in the Hadamard domain: Equation~(\ref{chirp}) implies that the chirp-like cross terms appear as a constant background signal. \label{a3}
   \IF{$p_t \in \mbox{Keys}(\Theta)$} \label{a4}
   \STATE Restore ${f_t}\leftarrow {f_t}+\Theta(p_t) \varphi_{{p_t}}$.
   \ENDIF
   \STATE  \label{opt2} Update  ${\beta_t}\doteq\sqm \yy^\top \varphi_{p_t} $ which minimizes $\|{f_t}-\sqm\beta_t \varphi_{p_t}\|_2$.  
   \STATE Add $\beta^t$ to entry $p_t$ of $\has$.
   \STATE Set $\Theta(p_t)= \beta_t.$
   \STATE Set ${f_{t+1}}\leftarrow {f_t}- \beta_t \varphi_{p_t}$. \label{a5}
   \ENDFOR
   \end{algorithmic}
   \end{algorithm}
}
The Quadratic Reconstruction Algorithm  \cite{LHSC,HSC,quad}, described in detail above, takes advantage of the multivariable quadratic functions that appear as exponents in Delsarte-Goethals sensing matrices. It is this structure that enables the algorithm to avoid the matrix-vector multiplication required when Basis and Matching Pursuit algorithms are applied to random sensing matrices. Because our algorithm requires only vector-vector multiplication in the measurement domain, the reconstruction complexity is sublinear in the dimension of the data domain. \cb{The Delsarte-Goethals sensing matrix was introduced in Section 2.2: there are $2^m$ rows indexed by binary $m$-tuples $x$, and $2^{(r+2)m}$ columns $\varphi_{P_i,b_i}$ indexed by pairs $P_i, b_i$ where $P_i$ is a binary symmetric matrix and $b_i$ is a binary $m$-tuple. The first step in our algorithm is pointwise multiplication of a sparse superposition 
$$f(x)=\nor \sum_{i=1}^k \alpha_i\varphi_{P_i,b_i}(x)$$
 with a shifted copy of itself. The sensing matrix is obtained by exponentiating multivariable quadratic functions so the first step produces a sparse superposition of pure frequencies (in the example below, these are Walsh functions in the binary domain) against a background of chirp-like cross terms.
\begin{eqnarray}
\label{walsh}
&~&f(x+a)\overline{f(x)}=\frac{1}{\mm}\sum_{j=1}^k |\alpha_j|^2(-1)^{a^\top P_jx}\\\nonumber&+&\frac{1}{\mm} \sum_{j\neq t} \alpha_j \overline{\alpha_t} \varphi_{P_j,b_j}(x+a) \overline{\varphi_{P_t,b_t}(x)}.
\end{eqnarray}

Then the (fast) Hadamard transform concentrates the energy of the first term $\frac{1}{\mm}\sum_{j=1}^k |\alpha_j|^2(-1)^{a^\top P_jx}$ at (no more than) $k$ Walsh-Hadamard tones, while the second term distributes energy uniformly across all $\mm$ tones. The $l^{th}$ Fourier coefficient is
\begin{equation}
\label{fourier}
\Gamma_a^l=\frac{1}{\mm^{\nicefrac{3}{2}}} \sum_{j\neq t} \alpha_j \overline{\alpha_t} \sum_x (-1)^{l^\top x} \varphi_{P_j,b_j}(x+a) \overline{\varphi_{P_t,b_t}(x)},
\end{equation}
and it can be shown (see \cite{quad}) that the energy of the chirp-like cross terms is distributed uniformly in the Walsh-Hadamard domain. That is for any coefficient $l$
\begin{equation}
\label{chirp}
\mbox{lim}_{\mm\rightarrow \infty} \Ex\left[\mm^2 \left|\Gamma_a^l\right|^2\right]=\sum_{j \neq t} |\alpha_j|^2 |\alpha_t|^2.
\end{equation}
Equation~(\ref{chirp}) is related to the variance of ${f}$ and may be viewed as a fine-grained concentration estimate. In fact the proof of (\ref{chirp}) mirrors the proof of the UStRIP property given in Section 3; first we show that the expected value of any Walsh-Hadamard coefficient is zero, and then we use the Self-Avoiding McDiarmid Inequality to prove concentration about this expected value.  The Walsh-Hadamard tones appear as spikes above a constant background signal and the quadratic algorithm learns the terms in the sparse superposition by varying the offset $a$. These terms can be peeled off in decreasing order of signal strength or processed in a list. The quadratic algorithm is a repurposing of the chirp detection algorithm commonly used in navigation radars which is known to work extremely well in the presence of noise. Experimental results show close approach to the information theoretic lower bound on the required number of measurements. For example, numerical experiments show that the quadratic decoding algorithm is able to reconstruct greater than $40$-sparse superpositions when applied to deterministic Kerdock sensing matrices with $N = 2^9$ and ${\mathcal C} = 2^{18}$. In this case, the information theoretic lower bound is $k\log_2(1 + {\mathcal C}/k) = 507$ \cite{HSC}.
 \begin{figure}[ht]
\begin{center}
\includegraphics[bb= 0 0 504 416,width=0.5\textwidth,height=0.35\textwidth]{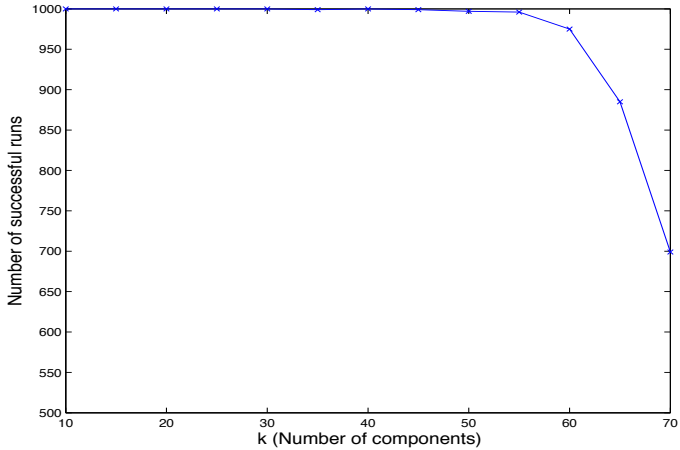}
\end{center}
\caption{ The number of successful reconstructions in $1000$ trials versus the sparsity factor $k$ for the deterministic Kerdock sensing matrix
corresponding to $m=9$}
\label{fig:info}
\end{figure}

We now explain how the StRIP property provides performance guarantees for the Quadratic Reconstruction Algorithm. At each iteration the algorithm returns the location of one of the $k$ significant entries and an estimate for the value of that entry. The StRIP property guarantees that the estimate is within $\epsilon$ of the true value. These errors compound as the algorithm iterates, but since the chirp cross-terms and noise are uniformly distributed in the Walsh-Hadamard domain, the error in recovery is bounded by the difference between the true signal  $\alpha$ and its best $k$-term approximation $\alpha_k$. More precisely, if $\A$ is $(k,\epsilon,\delta)$-StRIP, if the position of the $k$ significant entries are chosen uniformly at random, if the near-zero entries and the measurement noise $\nu$ come from a Gaussian distribution, and if the Quadratic Recovery Algorithm is used to recover an approximation $\has$ for $\as$, then
\begin{equation}
 \|\as-\has\|_2\leq \frac{5+\epsilon}{1-\epsilon} \|\as-\as_k\|_2+\frac{2}{1-\epsilon}\|\nu\|_2.\end{equation}
 The role of the StRIP property is to bound the error of approximation in Step 11 of the Quadratic Reconstruction Algorithm. Note that if it were somehow possible to identify the support of $\alpha$ beforehand, then the UStRIP property would guarantee that we would be able to recover the signal values by linear regression. However identifying the support of a $k$-sparse signal is known to be almost as hard as full reconstruction, and that is why our algorithm finds location and estimates signal value simultaneously, and does so one location at a time.

Note that the error bound is of the form $\ell_2/\ell_2$:
\begin{equation}\label{delta}
{\|\as-\has}\|_2 \leq C \|\as-\as_k\|_2.
\end{equation}
  This bound is tighter than $\lab$ bounds of random ensembles \cite{CRT1} , and $\lbb$ of expander-based methods \cite{IR}. 
}
\label{sec:quad}
\section{Resilience to Noise}
\label{sec:noise}
\subsection{Noisy Measurements}
In this Section, we consider deterministic sensing matrices satisfying the hypothesis of Theorem \ref{mainthm}, and show resilience to independent identically distributed (iid) Gaussian noise that is uncorrelated with the measured signal. Note we have introduced the square of $(1\pm\epsilon')$ in \eqref{bnded} merely to simplify the 
{\icd{notation in the}} proof. {\icd{(This $\epsilon'$ could be, for instance, picked so that $\epsilon'(2-\epsilon')\geq \epsilon$, where $\epsilon$ has the same meaning as in 
Theorem \ref{mainthm}.)}}

\begin{theorem}Let $\A$ and $\as$ be such that 
\begin{equation}\label{bnded}
 (1-\epsilon')^2 ||\as||^2 \leq ||\nor \A \as||^2 \leq (1+\epsilon')^2 ||\as||^2{\icd{,}}
\end{equation}
{\icd{with probability exceeding $\delta>0$}}, and let $f=\nor \A\as + \nu$, where the noise samples $\nu(x)$ are iid
complex Gaussian random variables with zero mean and variance $2\sigma^2$. Then, for  
$\gamma\geq 0$,
\begin{equation}\label{noisebnd}
 (1-\epsilon'-\gamma)^2 ||\as||^2 \leq ||f||^2 \leq (1+\epsilon'+\gamma)^2 ||\as||^2,
\end{equation}
with probability greater than
{\icd{$1-2\left(\delta + \mathcal{S}\left[\frac{\gamma \|\as\|}{ \sigma}\right]\right) $,
where 
\[
\mathcal{S}(r)\doteq\left(\int_r^{\infty}e^{-y^2/2} y^{N-1} dy \right) \left(\int_0^{\infty}e^{-y^2/2} y^{N-1} dy \right)^{-1} 
\]
}}
\end{theorem}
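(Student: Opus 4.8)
The plan is to separate the randomness of the signal from that of the noise and glue two independent tail estimates together with the triangle inequality. Write $f=\nor\A\as+\nu$ and introduce the events $E_{1}=\{(1-\epsilon')\|\as\|\le\|\nor\A\as\|\le(1+\epsilon')\|\as\|\}$, which by hypothesis holds with probability at least $1-\delta$ (this is the content of \eqref{bnded}, and is consistent with Theorem~\ref{maintheorem} after a suitable choice of $\epsilon'$), and $E_{2}=\{\|\nu\|\le\gamma\|\as\|\}$. On $E_{1}\cap E_{2}$ two applications of the triangle inequality give $\|f\|\le\|\nor\A\as\|+\|\nu\|\le(1+\epsilon'+\gamma)\|\as\|$ and $\|f\|\ge\|\nor\A\as\|-\|\nu\|\ge(1-\epsilon'-\gamma)\|\as\|$; squaring both (legitimate when $\epsilon'+\gamma<1$, so the lower bound is nonnegative and squaring is monotone) produces exactly \eqref{noisebnd}. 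Everything therefore reduces to estimating $\Pr[E_{2}^{c}]$ and applying a union bound.

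The second step is to identify $\Pr[E_{2}^{c}]=\Pr[\|\nu\|>\gamma\|\as\|]$ with the quantity $\mathcal{S}(\gamma\|\as\|/\sigma)$ in the statement. Since the $\nu(x)$ are i.i.d.\ complex Gaussian of variance $2\sigma^{2}$, the vector of their real and imaginary parts is an isotropic, rotation-invariant Gaussian, so $\|\nu\|/\sigma$ has the chi distribution with density proportional to $e^{-y^{2}/2}y^{N-1}$, $N$ being the real dimension of the measurement space. Writing the corresponding Gaussian integral in polar coordinates, the tail probability $\Pr[\|\nu\|/\sigma>r]$ is precisely the ratio of $\int_{r}^{\infty}e^{-y^{2}/2}y^{N-1}\,dy$ to $\int_{0}^{\infty}e^{-y^{2}/2}y^{N-1}\,dy$, i.e.\ $\mathcal{S}(r)$, so $\Pr[E_{2}^{c}]=\mathcal{S}(\gamma\|\as\|/\sigma)$. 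Since $\mathcal{S}$ is strictly decreasing, this makes transparent the usual SNR requirement: the bound only becomes informative once $\gamma$ exceeds roughly $\sigma\sqrt{N}/\|\as\|$.

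Finally I would assemble the pieces. Because \eqref{bnded} is posed for $k$-sparse vectors of a common fixed norm (as in the StRIP definition), $\|\as\|$ is deterministic, so $E_{2}$ involves only $\nu$ and is independent of $E_{1}$. On $E_{2}$, failure of either inequality in \eqref{noisebnd} forces failure of the matching inequality in \eqref{bnded}, so the event that \eqref{noisebnd} fails is contained in $E_{1}^{c}\cup E_{2}^{c}$, and a union bound gives failure probability at most $\delta+\mathcal{S}(\gamma\|\as\|/\sigma)$. Estimating the upper-deviation and the lower-deviation failures separately, each by this same union bound, and adding, yields the (deliberately crude) factor $2$ in the stated probability $1-2(\delta+\mathcal{S}[\gamma\|\as\|/\sigma])$.

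There is no deep obstacle here; the only points needing care are bookkeeping ones. First, one must ensure the threshold defining $E_{2}$ does not depend on the random realisation of $\as$ in a way that would break the union bound — which is exactly why it is clean to condition on $\|\as\|$ (and, when $\as$ itself varies, to exploit monotonicity of $\mathcal{S}$ and use a worst-case norm). Second, the square in \eqref{noisebnd} is cosmetic: if one wanted the sharper constant $1-(\delta+\mathcal{S})$ one could expand $\|f\|^{2}=\|\nor\A\as\|^{2}+2\,\mathrm{Re}\langle\nor\A\as,\nu\rangle+\|\nu\|^{2}$ and control the cross term directly, as a centred Gaussian of variance $2\sigma^{2}\|\nor\A\as\|^{2}$, but the triangle-inequality route already gives the claimed estimate.
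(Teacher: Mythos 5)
Your proposal is correct and follows essentially the same route as the paper: bound each one-sided deviation by the triangle inequality, split it via a union bound into the event that $\|\nor\A\as\|$ violates \eqref{bnded} (probability at most $\delta$) and the event $\|\nu\|>\gamma\|\as\|$, whose probability is exactly the Gaussian tail written in polar coordinates, i.e.\ $\mathcal{S}(\gamma\|\as\|/\sigma)$, and then add the two one-sided bounds to get the factor $2$. The only differences are cosmetic (you make the chi-distribution identification and the squaring/monotonicity step explicit, and note that independence of the two events is not actually needed), so nothing further is required.
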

\begin{proof}
First consider the probability that $||f||$ exceeds the upper bound in \eqref{noisebnd}.
{\icd{
Setting $g=\nor\Phi\alpha$, we have 
\begin{equation}
\begin{split}\nonumber
&~\quad \text{Pr}\left[\|f\| \geq (1+ \epsilon' +\gamma )\|\as\|\right] \\ &\leq
\text{Pr}\left[\|g\|+ \|\nu\|\geq (1+ \epsilon' +\gamma )\|\as\|\right] \\
&\leq \text{Pr}\left[\|g\|\geq (1+ \epsilon')\|\as\|\right] + 
\text{Pr}\left[\|\nu\|\geq \gamma \|\as\|\right] \\
&\leq \delta + \frac{1}{(2 \pi \sigma^2)^{N/2}} \int_{\|y\|>\gamma\|\as\|}
\exp\left(-\frac{1}{2 \sigma^2} \|y\|^2 \right) d^Ny\\
&=\delta + \frac{1}{(2 \pi)^{N/2}}\int_{\|u\|>\gamma\|\as\|/\sigma}
e^{-\|u\|^2/2} d^Nu
\end{split}
\end{equation}
The estimate for $\text{Pr}\left[\|f\| \leq (1- \epsilon' -\gamma )\|\as\|\right]$
is similar, and the desired bound then follows from the union bounds.
}}
\end{proof}
\subsection{Noisy Signals}
If the signal $\as$  is contaminated by {\icd{white gaussian}} noise then the
measurements are given by
\begin{equation}
 {y} =\nor \left( \Phi {\as} + \Phi {\mu}\right),
\end{equation}
where $\mu$ is complex multivariate Gaussian distributed, with zero mean and covariance 
\begin{equation}
E(\mu\mu^\dagger)=2\sigma^2 I_{\mathcal C\times\mathcal C}.
\end{equation}
The reconstruction algorithm {\icd{ thus}} needs to recover the signal from the noisy measurements
\begin{equation}
 {y} =  {f} + { \nu},
\end{equation}
where ${\nu} = \nor \A {\mu}$ is complex multivariate Gaussian distributed with mean zero and covariance 
\begin{equation}
\mathbb{E}(\nu\nu^\dagger) = \frac{2\sigma^2}{\mm} \Phi\Phi^\dagger.
\end{equation} 
The deterministic compressive sensing schemes considered in this paper have some advantage over 
random compressive sensing schemes in that $\nor\Phi\left(\nor \Phi^\dagger\right) = \frac{\mathcal C}{N} I_{N\times N}$
and consequently $E(\nu(x)\overline{\nu(x')}) =  \frac{2\sigma^2\mathcal C}{N} 
\delta_{x,x'}$, i.e., the noise samples on distinct measurements are independent. 
{\icd{ One can thus use the estimates of the previous subsection again. Noise
of this type is of course harder to deal with; this is illustrated here by the measurement variance being a (possibly huge) factor
$\n/\mm$ larger than the source noise variance $\sigma^2$.
}}
\ignore{
\section{Conclusions}
We have provided counterparts for deterministic sensing matrices of the two fundamental results obtained for random matrices by Donoho and by Cand\`{e}s, Romberg and Tao. First, we have introduced a method of constructing deterministic sensing matrices that are guaranteed to act as a near-isometry on $k$-sparse vectors with high probability. This is the Statistical Restricted Isometry Property (StRIP) which is the counterpart of the Restricted Isometry Property (RIP) formulated by Cand\`{e}s and Tao. The properties required of the sensing matrices are weak and they are satisfied by a large class of matrices obtained by exponentiating codewords in a linear code. Second, we have shown that these relatively weak properties are sufficient to guarantee unique representation of sparse signals leading to performance guarantees for sparse reconstruction using Basis Pursuit or Matching Pursuit. 

If computational resources were unlimited there might be little to distinguish compressive sensing schemes which satisfy RIP or StRIP in terms of statistical power of reconstruction. One advantage of deterministic matrices is that it may require significant space to store the entries of a random matrix, whereas the entries of a deterministic matrix can often be computed on the fly without requiring any storage. A second advantage, associated with coding theoretic constructions of sensing matrices, is the opportunity to design new reconstruction algorithms that can take advantage of code structure. We have described a reconstruction algorithm for deterministic sensing matrices based on chirps or Delsarte-Goethals codes with complexity that is quadratic in the dimension of the measurement domain. Here we have shown that the StRIP guarantees uniqueness of sparse representation.
\section*{Acknowledgment}
The authors would like to thank Lorne Applebaum, Richard Baraniuk, Doug Cochran, Ingrid Daubechies, Anna Gilbert, Shamgar Gurevich, Ronnie Hadani, Piotr Indyk, Robert Schapire, Joel Tropp,  Rachel Ward, and the anonymous reviewers for their insights and helpful suggestions.}

\section{Conclusions}
We have provided simple criteria, that when satisfied by a deterministic sensing matrix, guarantee successful recovery of all but an exponentially small fraction of k-sparse signals. These criteria are satisfied by many families of deterministic sensing matrices including those formed from subcodes of the second order binary Reed Muller codes. The criteria also apply to random Fourier ensembles, where they improve known bounds on the number of measurements required for sparse reconstruction. Our proof of unique reconstruction uses a version of the classical McDiarmid Inequality that may be of independent interest.

We have described a reconstruction algorithm for Reed Muller sensing matrices that takes special advantage of the code structure. Our algorithm requires only vector-vector multiplication in the measurement domain, and as a result, reconstruction complexity is only quadratic in the number of measurements. This improves upon standard reconstruction algorithms such as Basis and Matching Pursuit that require matrix-vector multiplication and have complexity that is superlinear in the dimension of the data domain.
\section*{Acknowledgment}
The authors would like to thank Lorne Applebaum, Richard Baraniuk, Doug Cochran, Ingrid Daubechies, Anna Gilbert, Shamgar Gurevich, Ronnie Hadani, Piotr Indyk, Robert Schapire, Joel Tropp,  Rachel Ward, and the anonymous reviewers for their insights and helpful suggestions.
\bibliographystyle{IEEEbib} 
\bibliography{rss_paper} 
\appendices
\section{Properties of Delsarte-Goethals Sensing Matrices}
First we prove that the columns of the $r^{th}$ Delsarte-Goethals sensing matrix form a group under pointwise multiplication.
\begin{aproposition} Let $\G=\G(m,r)$ be the set of column vectors $\varphi_{P,b}$ where 
$$\varphi_{P,b}(x)=i^{wt(d_P)+2wt(b)} i^{xPx^\top+2bx^\top}~,~\mbox{for }x\in \mathbb{F}_2^m$$
where $b\in \mathbb{F}_2^m$ and where the binary symmetric matrix $P$ varies over the Delsarte-Goethals set $DG(m,r)$. Then $\G$ is a group of order $2^{(r+2)m}$ under pointwise multiplication.
\end{aproposition}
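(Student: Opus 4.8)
The plan is to compute the pointwise product of two columns explicitly and to recognise the result as another column of the prescribed form; once this multiplication law is available, the group axioms and the order count follow quickly. Writing $d_P\ast d_Q$ for the entrywise (Hadamard) product of two binary vectors, I expect to establish
\[
\varphi_{P,b}(x)\,\varphi_{Q,c}(x)\;=\;\varphi_{P\oplus Q,\;b\oplus c\oplus(d_P\ast d_Q)}(x)\qquad\text{for every }x\in\mathbb{F}_2^m .
\]
Granting this, closure holds because $DG(m,r)$ is a binary vector space, so that $P\oplus Q\in DG(m,r)$, while $b\oplus c\oplus(d_P\ast d_Q)\in\mathbb{F}_2^m$; the identity element is the all-ones column $\varphi_{0,0}=\mathbf{1}$ (note $0\in DG(m,r)$ and $wt(d_0)=0$); associativity is inherited from pointwise multiplication of $\mathbb{C}$-valued functions; and the multiplication law exhibits the explicit inverse $\varphi_{P,b}^{-1}=\varphi_{P,\,b\oplus d_P}$. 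Hence $\G$ is a group under pointwise multiplication.

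The crux, and the step I expect to be most delicate, is the verification of the multiplication law, because all exponents live in $\mathbb{Z}/4\mathbb{Z}$ and one must track the carries incurred when integer addition of binary objects is replaced by $\oplus$. I would isolate three elementary congruences: (i) for binary symmetric $P,Q$ and $x\in\mathbb{F}_2^m$,
\[
xPx^\top+xQx^\top\;\equiv\; x(P\oplus Q)x^\top+2\,(d_P\ast d_Q)\,x^\top\pmod 4,
\]
which follows from $P_{ij}+Q_{ij}=(P\oplus Q)_{ij}+2P_{ij}Q_{ij}$, from $x_i^2=x_i$, and from the fact that a factor $2\cdot2$ kills every off-diagonal term modulo $4$; (ii) $2bx^\top+2cx^\top\equiv 2(b\oplus c)x^\top\pmod 4$, by the same carry argument; and (iii), for the phase normalisation, $wt(d_P)+wt(d_Q)=wt(d_P\oplus d_Q)+2\,wt(d_P\ast d_Q)$ together with $wt(u\oplus v)\equiv wt(u)+wt(v)\pmod 2$, both instances of inclusion--exclusion on supports. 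Feeding these into the product, using $i^{2k}=(-1)^k$, and gathering the stray $\pm1$ factors, one finds that the diagonal contributions recombine into $i^{wt(d_{P\oplus Q})}$ while the residual sign folds into the linear exponent, reproducing exactly the claimed right-hand side. This is precisely the purpose of the \emph{extra} phase $i^{wt(d_P)+2wt(b)}$ appearing in~(\ref{kerdock}): without it the product would pick up a spurious fourth root of unity and $\G$ would fail to be closed.

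Finally, for the order it remains only to check that the parametrisation $(P,b)\mapsto\varphi_{P,b}$ is injective; since $|DG(m,r)|=2^{(r+1)m}$ and there are $2^m$ choices of $b$, injectivity gives $|\G|=2^{(r+2)m}$. This is a short direct argument: if $\varphi_{P,b}(x)=\varphi_{Q,c}(x)$ for all $x$, then evaluating at $x=0$ and at the standard basis vector $x=e_k$ and comparing exponents modulo $4$ forces $P_{kk}=Q_{kk}$ and $b_k=c_k$ for every $k$ (because $(a,d)\mapsto a+2d$ is a bijection $\{0,1\}^2\to\mathbb{Z}/4\mathbb{Z}$), after which evaluating at $x=e_k+e_\ell$ forces $P_{k\ell}=Q_{k\ell}$; hence $P=Q$ and $b=c$. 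Alternatively, injectivity also follows from the coherence estimate proved later in this appendix, whereby $\big|\sum_x\varphi_{P,b}(x)\overline{\varphi_{Q,c}(x)}\big|$ equals $2^m$ only when $P=Q$, together with the fact that for fixed $P$ the columns $\{\varphi_{P,b}\}_{b\in\mathbb{F}_2^m}$ form an orthonormal basis. This completes the argument.
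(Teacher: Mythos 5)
Your proposal is correct and follows essentially the same route as the paper's proof: both compute the pointwise product explicitly, track the mod-$4$ carries coming from replacing integer addition by $\oplus$, and arrive at the same multiplication law $\varphi_{P,b}\,\varphi_{P',b'}=\varphi_{P\oplus P',\,b\oplus b'\oplus (d_P\ast d_{P'})}$. The only difference is that you additionally spell out the identity, the inverse $\varphi_{P,b\oplus d_P}$, and the injectivity of $(P,b)\mapsto\varphi_{P,b}$ needed for the order count, details the paper leaves implicit after establishing closure.
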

\begin{proof}
 We have
\begin{eqnarray} \nonumber&&\varphi_{P,b}(x)\varphi_{P',b'}(x)\\\nonumber&=&i^{wt(d_P)+wt(d_{P'})+2wt(b\oplus b')} i^{x(P+P')x\top +2(b \oplus b')x^\top}\end{eqnarray}
where $\oplus$ is used to emphasize addition in $\mathbb{F}_2^m$. Write $P+P'=(P \oplus P')+2Q~(mod~4)$ where $Q$ is a binary symmetric matrix. Observe that $2xQx^\top=2d_Qx^\top(mod~4)$, where the diagonal $d_Q=d_P*d_{P'}$ is a pointwise product of $d_P$ and $d_{P'}$. 
\\Thus
\begin{eqnarray}\nonumber &&\varphi_{P,b}(x)\varphi_{P',b'}(x)\\\nonumber&=&i^{\left( [wt(d_P)+wt(d_{P'})+2wt(d_P*d_{P'})]+2wt(b\oplus b'\oplus d_P*d_{P'})\right)}\\\nonumber&~& i^{x(P+P')x^\top +2(b\oplus b'\oplus d_P*d_{P'})x^\top}\\ \nonumber &=&\varphi_{P\oplus P', b\oplus b'\oplus d_P*d_{P'}}(x),
\end{eqnarray}
and $\G$ is closed under pointwise multiplication.
Hence the possible inner products of columns $\varphi_{P,d},\varphi_{P',d'}$ are exactly the possible column sums for columns $\varphi_{Q,b}$ where $Q=P\oplus P'$.\end{proof} Next we verify property (St3). 
\begin{aproposition}
Let $Q$ be a binary symmetric $m\times m$ matrix with rank $r$ and let $b \in \mathbb{F}_2^m$. If $$S=\sum_x i^{xQx^\top+2bx^\top}$$ then either $S=0$ or 
$$S^2=i^{z_1Qz_1^\top+2bz_1^T} 2^{2m-r}~~,~~~\mbox{where }z_1Q=d_Q.$$
\end{aproposition}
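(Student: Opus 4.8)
The plan is to evaluate $S^{2}$ head-on as a double sum over $\mathbb{F}_2^m\times\mathbb{F}_2^m$ and collapse it in two stages, the workhorse in both being a ``carry-free'' identity describing how the $\mathbb{Z}_4$-valued quadratic form $x\mapsto xQx^{\top}$ behaves under addition in $\mathbb{F}_2^m$. Writing $x\oplus a=x+a-2(x*a)$ (with $*$ the pointwise product, the right-hand side computed in $\mathbb{Z}$), one expands $(x\oplus a)Q(x\oplus a)^{\top}$ over $\mathbb{Z}$ and checks that every term carrying a factor $x*a$ becomes divisible by $4$ once the symmetry $uQw^{\top}=wQu^{\top}$ is used; hence
\[
(x\oplus a)Q(x\oplus a)^{\top}\equiv xQx^{\top}+aQa^{\top}+2\,aQx^{\top}\pmod 4,\qquad 2b(x\oplus a)^{\top}\equiv 2bx^{\top}+2ba^{\top}\pmod 4 .
\]

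First I would write $S^{2}=\sum_{a\in\mathbb{F}_2^m}\sum_{x\in\mathbb{F}_2^m}i^{xQx^{\top}+2bx^{\top}}\,i^{(x\oplus a)Q(x\oplus a)^{\top}+2b(x\oplus a)^{\top}}$, insert the two congruences, and use $i^{2xQx^{\top}}=(-1)^{d_Qx^{\top}}$ (because $2xQx^{\top}\equiv 2d_Qx^{\top}\pmod 4$, the off-diagonal terms picking up a factor $4$ and $x_j^{2}=x_j$) together with $i^{2aQx^{\top}}=(-1)^{aQx^{\top}}$. This turns the inner $x$-sum into $\sum_x(-1)^{(d_Q+aQ)x^{\top}}$, which equals $2^{m}$ when $aQ=d_Q$ in $\mathbb{F}_2^m$ and $0$ otherwise, so
\[
S^{2}=2^{m}\sum_{\substack{a\in\mathbb{F}_2^m\\ aQ=d_Q}}i^{aQa^{\top}+2ba^{\top}} .
\]

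Next I would identify this index set. For $v\in\ker_{\mathbb{F}_2}Q$ one has $v\cdot d_Q\equiv vQv^{\top}\equiv 0\pmod 2$, so $d_Q$ is orthogonal to $\ker_{\mathbb{F}_2}Q$ and therefore, $Q$ being symmetric, lies in its row space; thus some $z_1$ satisfies $z_1Q=d_Q$, and the index set is exactly the coset $z_1\oplus\ker_{\mathbb{F}_2}Q$, of size $2^{m-r}$. Writing $a=z_1\oplus v$ and applying the carry-free identity once more (the cross term $2z_1Qv^{\top}$ now vanishes mod $4$ since $Qv^{\top}\equiv 0\pmod 2$) factors out the constant $i^{z_1Qz_1^{\top}+2bz_1^{\top}}$ and leaves $\sum_{v\in\ker_{\mathbb{F}_2}Q}(-1)^{\frac12 vQv^{\top}+bv^{\top}}$. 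Here $vQv^{\top}$ is even, so $\tfrac12 vQv^{\top}\bmod 2$ is well defined, and the same identity shows $v\mapsto\tfrac12 vQv^{\top}$ is $\mathbb{F}_2$-linear on $\ker_{\mathbb{F}_2}Q$ (the cross term $2vQv'^{\top}\equiv 0\pmod 4$). Hence the exponent is a linear functional on the subspace $\ker_{\mathbb{F}_2}Q$, so the sum is $2^{m-r}$ if that functional is identically $0$ and $0$ otherwise; in the first case $S^{2}=2^{2m-r}\,i^{z_1Qz_1^{\top}+2bz_1^{\top}}$ (and the phase is independent of the choice of $z_1$, by the same linearity), while in the second $S^{2}=0$, so $S=0$.

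I expect the only genuinely delicate point --- everything else being mechanical --- to be the mod-$4$ bookkeeping behind the carry-free identity, i.e.\ confirming that after expanding $(x\oplus a)Q(x\oplus a)^{\top}$ with $x\oplus a=x+a-2(x*a)$ all cross terms involving $x*a$ drop out modulo $4$ (and the analogous check needed when $a=z_1\oplus v$). The two character-sum collapses (over $x$ and over $\ker_{\mathbb{F}_2}Q$) and the linearity of $\tfrac12 vQv^{\top}$ then follow routinely.
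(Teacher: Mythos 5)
Your argument is correct and is essentially the paper's own proof: parametrizing the double sum by the offset $a=x\oplus y$ is just the paper's change of variables $z=x\oplus y$ done up front, and both arguments then collapse the inner character sum to the condition $aQ=d_Q$, identify the solution set as the coset $z_1\oplus\ker_{\mathbb{F}_2}Q$ of size $2^{m-r}$, and use the $\bmod\,4$ additivity of $e\mapsto eQe^\top$ on the kernel to evaluate the remaining sum as $2^{m-r}$ or $0$. The only difference is that you spell out details the paper asserts without proof (the carry-free $\bmod\,4$ identity and the fact that $d_Q$ lies in the row space of $Q$), which is fine.
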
 
\begin{proof}
We have
\begin{eqnarray}
\nonumber S^2&=&\sum_{x,y} i^{xQx^\top+yQy^\top+2b(x+y)^\top}\\ \nonumber &=& \sum_{x,y}i^{(x+y)Q(x+y)^\top+2*Qy^\top+2b(x+y)^\top}
\end{eqnarray}
Changing variables to $z=x\oplus y$ and $y$ gives
$$S^2=\sum_{z} i^{zQz^\top} \sum_y(-1)^{(d_Q+zQ)y^\top}.$$
Since the diagonal $d_Q$ of a binary symmetric matrix $Q$ is contained in the row space of $Q$ there exists a solution $zQ=d_Q$. The solutions to the equation $zQ=0$ form a vector space $E$ of dimension $m-r$, and for all $e,f \in E$
$$eQe^\top+fQf^\top=(e+f)Q(e+f)^\top~~(mod~4).$$
Hence
\begin{eqnarray}
\nonumber S^2&=&2^m \sum_{e \in E} i^{(z_1+e)Q(z_1+e)^\top +2(z_1+e)b^\top} \\ \nonumber 
&=& 2^m i^{z_1Qz_1^\top+2z_1b^\top} \sum_{e \in E} i^{eQe^\top+2eb^\top}.
\end{eqnarray}
The map $e\rightarrow eQe^\top$ is a linear map from $E$ to $\mathbb{Z}_2$, so the numerator $eQe^\top+2eb^\top$ also determines a linear map from $E$ to  $\mathbb{Z}_2$ (here we identify  $\mathbb{Z}_2$ and  $2\mathbb{Z}_4$). If this linear map is the zero map then
$$S^2=2^{2m-r} i^{z_1 Qz_1^\top+2bz_1^\top},$$
and if it is not zero then $S=0$. Note that given $e \rightarrow eQe^\top$, there are $2^n$ ways to choose $b$ so that $e\rightarrow eQe^\top+2eb^\top$ is the zero map.
\end{proof}
The $0^{th}$ Delsarte-Goethals sensing matrix is a matrix with $\mm=2^m$ rows and $\mm^2$ columns. These columns are the union of $\mm$ mutually unbiased bases, where vectors in one orthogonal basis look like noise to all other orthogonal bases.

\section{Generalized McDiarmid's Inequality}
The method of {\icd{``}}independent bounded differences{\icd{''} (\cite{mcdiarmid}) gives large-deviation concentration bounds for multivariate functions in terms of the maximum 
effect {\icd{on the function value of changing just one coordinate}}. This method has been widely used in combinatorial applications, and in learning theory. In this appendix, we prove that a modification of McDiarmid's inequality also holds for {\icd{\em distinct}} (in contrast to {\icd{\em independent )}} random variables{\icd{; our proof consists again in}} forming martingale sequences.
\\We first introduce some notation. Let $\cx_1,\cdots,\cx_m$ be probability spaces and define $\cx$ as the probability space of all distinct $m$-tuples{\icd{, t}}hat is{\icd{,}} 
\begin{equation}
\label{x}
\cx\doteq\{(x_1,\cdots,x_m)\in \Pi_{i=1}^m \cx_i\mbox{ such that }\forall~i\neq j~:x_i\neq x_j\}.
\end{equation} 
{\icd{(This definition is spelled out in more detail at the end of subsection 3.1.2.)}}
Let $f(x_1,\cdots,x_m)$ be a function from $\cx$ to $\mathbb{R}${\icd{, and let 
$f(X_1,\, \ldots, \,X_m)$ be the corresponding random variable on $\cx$.}} Denote 
{\icd{by $\Xoni$}} the {\icd{$i$- tuple}} of random variables $(X_1,\cdots,X_i)$ on the probability space $\cx$. {\icd{(The ``complete'' $m$-tuple $(X_1,\ldots,X_m)$ will also be denoted by just $X$.)
Analogously, define $\Xim$ to be the $(m-i)$- tuple of random variables 
$(X_{(i+1)},\cdots,X_m)$. We shall also use the notations $\xoni \doteq (x_1, \ldots, x_i) \in \Pi_{\ell=1}^i\cx_{\ell}$, $\cxoni \doteq \{\xoni \in \Pi_{n=1}^i; \,x_{\ell}\neq x_n \mbox{ if } \ell\neq n  \}$; $\xim \in \Pi_{\ell=(i+1)}^m\cx_{\ell}$ and 
$\cxim \subset \Pi_{n=(i+1)}^m$ are
defined analogously. \\
}}
\begin{atheorem}[{\icd{Self-avoiding}} McDiarmid inequality]
Let $\cx$ be the probability space defined in Equation~(\mbox{\ref{x}}), and let $f:\cx\rightarrow\mathbb{R}$ be a function such that for any {\icd{index}} $i$, {\icd{and any}} ${\icd{\xonim 
\in \cxonim,}}$
\begin{eqnarray}
&&\hskip-0.5cm\sup_{u\in \cx_i; u \neq x_{n}, n=1\rightarrow i}\Ex[f(x_1,\cdots,x_{i-1},u,X_{i+1},\cdots,X_m)]\quad\quad\quad\quad\label{concentration}\\
&&\hskip-2.0cm\quad\quad\quad\quad-\,\inf_{l\in \cx_i; l \neq x_{n}, n=1\rightarrow i}\Ex[f(x_1,\cdots,x_{i-1},l,X_{i+1},\cdots,X_m)\leq c_i.\nonumber
\end{eqnarray}
Then for any positive $\epsilon$,
\begin{equation}
\label{result}
\hskip-0,5cm\Pr\left[\left|f(X_1,\cdots,X_m)-\Ex[f(X_1,\cdots,X_m)]\right| \geq\epsilon\right]\leq 2\exp\left(\frac{-2\epsilon^2}{\sum c_i^2}\right).
\end{equation}
\end{atheorem}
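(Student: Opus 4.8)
The plan is to run the classical Doob-martingale (coordinate-exposure) argument that underlies McDiarmid's inequality, verifying at each step that the self-avoiding constraint defining $\cx$ causes no harm. Write $\mathcal{F}_i\doteq\sigma(X_1,\ldots,X_i)$ for the filtration generated by the first $i$ coordinates, with $\mathcal{F}_0$ trivial, and set $Z_i\doteq\Ex[f(X_1,\ldots,X_m)\mid\mathcal{F}_i]$, so that $Z_0=\Ex[f(X_1,\ldots,X_m)]$ and $Z_m=f(X_1,\ldots,X_m)$. Concretely $Z_i=g_i(X_1,\ldots,X_i)$, where $g_i(x_1,\ldots,x_i)\doteq\Ex[f(x_1,\ldots,x_i,X_{i+1},\ldots,X_m)]$ is the conditional expectation of $f$ given $X_1=x_1,\ldots,X_i=x_i$, the expectation running over the remaining coordinates subject to the distinctness constraint. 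First I would record that $\{Z_i\}$ is a martingale: $\Ex[Z_i\mid\mathcal{F}_{i-1}]=Z_{i-1}$ is just the tower property of conditional expectation, which holds for any joint law of $(X_1,\ldots,X_m)$ — the constraint that the coordinates be distinct is irrelevant at this step.

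The core of the argument is the conditional bound $|Z_i-Z_{i-1}|\le c_i$. Fix admissible values $x_1,\ldots,x_{i-1}$ (that is, $(x_1,\ldots,x_{i-1})\in\cxonim$). Conditioned on $X_1=x_1,\ldots,X_{i-1}=x_{i-1}$, the coordinate $X_i$ is supported on $A_i\doteq\{u\in\cx_i:u\neq x_1,\ldots,u\neq x_{i-1}\}$, the set of still-unused values, and by the tower property
\[
g_{i-1}(x_1,\ldots,x_{i-1})=\sum_{u\in A_i}\Pr[X_i=u\mid X_1=x_1,\ldots,X_{i-1}=x_{i-1}]\,g_i(x_1,\ldots,x_{i-1},u),
\]
a convex combination of the numbers $\{g_i(x_1,\ldots,x_{i-1},u)\}_{u\in A_i}$. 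Hence both $g_{i-1}(x_1,\ldots,x_{i-1})$ and $g_i(x_1,\ldots,x_{i-1},X_i)$ lie in the interval $\big[\inf_{u\in A_i}g_i(x_1,\ldots,x_{i-1},u),\ \sup_{u\in A_i}g_i(x_1,\ldots,x_{i-1},u)\big]$, whose length is at most $c_i$ by the hypothesis~(\ref{concentration}), whose sup and inf are taken over exactly this set $A_i$. Thus, conditionally on $\mathcal{F}_{i-1}$, the martingale difference $D_i\doteq Z_i-Z_{i-1}$ is mean zero and takes values in an interval of length at most $c_i$, with both the interval and $c_i$ being $\mathcal{F}_{i-1}$-measurable.

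From here the conclusion would follow by the usual Azuma--Hoeffding step. Hoeffding's lemma applied conditionally on $\mathcal{F}_{i-1}$ gives $\Ex[e^{sD_i}\mid\mathcal{F}_{i-1}]\le e^{s^2c_i^2/8}$ for every real $s$; multiplying these estimates together through the tower property yields $\Ex[e^{s(Z_m-Z_0)}]\le\exp\big(s^2\sum_{i=1}^m c_i^2/8\big)$; Markov's inequality then gives $\Pr[Z_m-Z_0\ge\epsilon]\le\exp\big(-s\epsilon+s^2\sum_i c_i^2/8\big)$; and choosing $s=4\epsilon/\sum_i c_i^2$ gives $\Pr[Z_m-Z_0\ge\epsilon]\le\exp\big(-2\epsilon^2/\sum_i c_i^2\big)$. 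Applying the same bound to $-f$ and taking a union bound produces the factor $2$ and the two-sided estimate~(\ref{result}), since $Z_m-Z_0=f(X_1,\ldots,X_m)-\Ex[f(X_1,\ldots,X_m)]$.

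I expect the only genuine obstacle — the one thing that separates this from the textbook McDiarmid inequality — to be the bookkeeping in the middle paragraph: one must verify that after conditioning on $x_1,\ldots,x_{i-1}$ the ``live'' values of $X_i$ are precisely the set $A_i$ over which the sup and inf in~(\ref{concentration}) range, so that the hypothesis really does control the conditional oscillation of the Doob martingale (in particular $g_i(x_1,\ldots,x_{i-1},u)$ need not even be defined for $u\notin A_i$, which is exactly why the hypothesis restricts the sup and inf to $A_i$ rather than to all of $\cx_i$). Once this point is settled, the martingale property, Hoeffding's lemma, and the union bound are all insensitive to whether the coordinates are independent or merely distinct.
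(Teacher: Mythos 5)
Your proposal is correct and follows essentially the same route as the paper's Appendix B proof: the coordinate-exposure Doob martingale $Z_i=\Ex[f\mid X_1,\ldots,X_i]$, the observation that after conditioning on $x_1,\ldots,x_{i-1}$ both $Z_{i-1}$ and $Z_i$ lie in the interval of length at most $c_i$ determined by the sup/inf in the hypothesis over the still-admissible values of $X_i$, then conditional Hoeffding's lemma, Markov's inequality with the choice $t=4\epsilon/\sum_i c_i^2$, and a union bound for the two-sided statement. Your convex-combination phrasing of the bounded-difference step is just a slightly more explicit version of the paper's sandwiching of $Z_{i-1}(\xonim)$ between the inf and sup of $Z_i(\xonim,\cdot)$, so there is no substantive difference.
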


{\icd{Our proof will invoke}} Hoeffding's Lemma [\cite{mcdiarmid}]
\begin{aproposition}[Hoeffding's Lemma] Let $X$ be a random variable with $\Ex[X]=0$ and $a\leq X\leq b$ then for $t>0$
 $$\Ex\left[ e^{tX}\right] \leq\exp\left\{ \frac{t^2(b-a)^2}{8}\right\}.$$
 \end{aproposition}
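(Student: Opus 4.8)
The plan is to prove this classical moment-generating-function bound by exploiting the convexity of the exponential and then reducing the problem to a one-dimensional Taylor estimate. The key preliminary observation is that, because $\Ex[X]=0$ with $a\le X\le b$, we necessarily have $a\le 0\le b$, so that $X$ straddles the origin; this is precisely what makes the chord bound effective (and guarantees that the parameter introduced below lies in $[0,1]$).

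First I would use convexity of the map $x\mapsto e^{tx}$ on the interval $[a,b]$: for every $x\in[a,b]$,
\[
e^{tx}\le \frac{b-x}{b-a}\,e^{ta}+\frac{x-a}{b-a}\,e^{tb}.
\]
Substituting $X$ for $x$, taking expectations, and using $\Ex[X]=0$ to collapse the linear term yields the explicit bound
\[
\Ex\!\left[e^{tX}\right]\le \frac{b}{b-a}\,e^{ta}-\frac{a}{b-a}\,e^{tb}.
\]
Next I would introduce the substitution $p\doteq -a/(b-a)$ (so that $1-p=b/(b-a)$ and $p\in[0,1]$) together with $h\doteq t(b-a)$, and take logarithms. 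A short computation shows that the logarithm of the right-hand side above equals $\phi(h)$, where
\[
\phi(h)\doteq -ph+\log\!\left(1-p+pe^{h}\right).
\]
It therefore suffices to prove $\phi(h)\le h^{2}/8$, since exponentiating then recovers exactly the claimed bound $\exp\!\left(t^{2}(b-a)^{2}/8\right)$.

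The heart of the argument is a second-order Taylor estimate of $\phi$. One checks directly that $\phi(0)=0$ and $\phi'(0)=0$, and that the second derivative can be written as $\phi''(h)=q(1-q)$, where $q\doteq pe^{h}/(1-p+pe^{h})\in(0,1)$. Since $q(1-q)\le 1/4$ for every $q\in[0,1]$, we obtain the uniform bound $\phi''(h)\le 1/4$. Taylor's theorem with the Lagrange form of the remainder then gives $\phi(h)=\tfrac12\phi''(\xi)\,h^{2}\le h^{2}/8$ for some $\xi$ lying between $0$ and $h$, which completes the proof. The main obstacle, such as it is, lies in this last step: recognizing the precise algebraic form of $\phi''$ and identifying it as $q(1-q)$, so that the sharp constant $1/4$ (and hence the factor $1/8$ in the exponent) emerges cleanly rather than as a loose estimate. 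Everything else is routine algebra, and I note in passing that the argument in fact holds for all real $t$, not merely $t>0$, since the Taylor bound is insensitive to the sign of $h$.
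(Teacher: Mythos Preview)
Your proof is correct and is exactly the classical argument for Hoeffding's Lemma; the paper itself does not supply a proof of this proposition but merely states it and cites \cite{mcdiarmid}, so there is nothing in the paper to compare against beyond noting that your derivation is the standard one underlying that reference.
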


{\icd{In our proof we will also make use of the functions}}
$$
{\icd{Z_i(\xoni)\doteq \Ex[f(X)|\Xoni=\xoni] \quad\mbox{ where }~ \xoni \in \cxoni}}
$$
{\icd{As a result, for all $\xonim$ in $\cxonim\,$
\begin{eqnarray}
\hskip-1.2cm\,\left|\,\sup_{u\in \cx_i; u \neq x_{n}, n=1\rightarrow i}
Z_i(\xonim,u) \,-\, \inf_{l\in \cx_i; l \neq x_{n}, n=1\rightarrow i}Z_i(\xonim,l)\,\right|
\end{eqnarray}
is less than $c_i$. This implies, for all $\xoni \in \cxoni$, 
\begin{eqnarray} \nonumber-c_i &\leq& \inf_{l\in \cx_i; l \neq x_{n}, n=1\rightarrow i-1}Z_i(\xonim,l)\\\nonumber&-&\sup_{u\in \cx_i; u \neq x_{n}, n=1\rightarrow i-1}
Z_i(\xonim,u)\nonumber\\
&\leq& Z_(\xoni) \\\nonumber&-& \Ex[f(\Xonim,X_i,\Xim)|\xonim] \nonumber\\
&=& Z_i(\xoni) - Z_{i-1}(\xonim)\nonumber\\
&\leq& Z_i(\xoni) \\\nonumber&-& \inf_{l\in \cx_i; l \neq x_{n}, n=1\rightarrow i-1}Z_i(\xonim,l)\nonumber\\
&\leq&\sup_{u\in \cx_i; u \neq x_{n}, n=1\rightarrow i-1}
Z_i(\xonim,u) \,\\\nonumber&-&\, \inf_{l\in \cx_i; l \neq x_{n}, n=1\rightarrow i-1}Z_i(\xonim,l)\nonumber\\
&\leq& c_i~, \nonumber
\end{eqnarray}
or
\begin{equation}
 \left| Z_i(\xoni) - Z_{i-1}(\xonim)  \right| \leq c_i 
\label{z-concentration} 
\end{equation}
{\icd{Until now, we have viewed each $Z_i$ as a function on the subset $\cxoni$ of $\Pi_{\ell=1}^i \cx_{\ell}$; it is straightforward to lift the $Z_i$ to functions on all of $\cx$.
The $Z_i(\Xoni)= Z_i(X)$ can also be considered as random variables on $\cx$, depending only on the first $i$ components of $X$,}}
$$
{\icd{Z_i(\Xoni)=\Ex_{\Xim}[f(X)|\Xoni] }}
$$
{\icd{(The subscript $\Xim$ on the expectation indicates that one averages only with respect to the variables listed in the subscript, in this case the last $m-i$ variables. We adopt this subscript convention in what follows; only expectations without subscript are with respect to
the whole probability space $\cx$.) }}\\
{\icd{Viewing the $Z_i$ as random variables,  we o}}bserve that $Z_0=\Ex[f(X_1,\cdots,X_m)]$, and {\icd{that}} $Z_{m}=f({\icd{X}}_1,\cdots,{\icd{X}}_m)$. 
{\icd{Because of the restriction to $\cx$, the random variables $X_{\ell}$, $Z_{\ell}$  are not independent. However, with respect to averaging over $X_i$, the $Z_i,\,i=1,\,\ldots,\,m$ constitue a martingale in the following sense:}} 
\begin{equation}
\label{martingale}
{\icd{\Ex_{X_i}[Z_i(X)|\Xonim]=Z_{i-1}(X)~,}}
\end{equation}
\begin{proof} 
Using Markov's inequality, we see that for any positive {\icd{$t$}}
 \begin{eqnarray}
 \Pr\left[f- \Ex[f]\geq\epsilon\right] &=&\Pr\left[e^{t(f-\Ex[f])}\geq e^{t\epsilon}\right]\nonumber\\  &\leq&e^{-\epsilon t}\,\Ex\left[e^{t(f-\Ex[f])}\right]\label{markov}
 \end{eqnarray}
{\icd{Since $f-\Ex[f]=Z_m-Z_0$, we can rewrite this as}}
\begin{equation*}
\Ex\left[e^{t(f-\Ex[f])}\right]= \Ex\left[ \exp\left(t\sum_{i=1}^m {\icd{(}}Z_i-Z_{i-1}{\icd{)}}\right)\right]
\end{equation*}
By marginalization of {\icd{the}} expectation, 
\begin{eqnarray}
&&\nonumber   \Ex\left[ \exp \left(t\sum_{i=1}^m {\icd{(}}Z_i-Z_{i-1}{\icd{)}} \right)\right]\\&=&   
\Ex_{\icd{X_{1\rightarrow(m-1)}}}
\left[ \Ex_{\icd{X_m}}\left[\exp \left(t\sum_{i=1}^m {\icd{(}}Z_i-Z_{i-1}{\icd{)}}\right)\left| {\icd{X_{1\rightarrow(m-1)}}}\right.\right]\right]
 \nonumber  \\ \nonumber &=&  \Ex\left[ \exp \left(\sum_{i=1}^{m-1} {\icd{(}}Z_i-Z_{i-1}{\icd{)}}\right)\Ex_{\icd{X_m}}\left[e^{t(Z_m-Z_{m-1})}\left| {\icd{X_{1\rightarrow(m-1)}}}\right.\right]\right]{\icd{~,}}
 \end{eqnarray}
{\icd{where we have used that each $Z_i$ depends on only the first $i$ components of $X$, so that only $(Z_{m-1}-Z_m) $ is affected by the averaging over $X_m$.\\
By (\ref{z-concentration}), we have, for all $\xoni \in \cxoni$, 
$|Z_i(\xoni)-Z_{i-1}(\xonim) | \leq c_i$, which can also be rewritten as 
$-c_i \leq Z_i(X)-Z_{i-1}(X) \leq c_i$.\\
Because of the martingale property (\ref{martingale}) we have
$\Ex\left[Z_i-Z_{i-1} | X_1^{i-1}\right]=\Ex_{\Xim}\left[Z_i-Z_{i-1} | X_1^{i-1}\right]
=\Ex_{X_i}\left[Z_i-Z_{i-1} | X_1^{i-1}\right]=0$.\\
Combining these last two observations with
Hoeffding's Lemma \cite{mcdiarmid}  we conclude }}
\begin{eqnarray}
&&\!\!\!\!\!\!\Ex_{X_{1 \rightarrow m}}\left[ \exp\left(\sum_{i=1}^{m} (Z_i-Z_{i-1})\,\right)\right]\nonumber\\\nonumber &=& \hskip-0.1cm\Ex_{X_{1 \rightarrow (m-1)}}\left[\, e^{\left(\,t\,\sum_{i=1}^{m-1} (Z_i-Z_{i-1})\,\right)}
\,\Ex_{X_m}\left[ e^{(Z_m-Z_{m-1})}\left| X_{1 \rightarrow (m-1)}\right.\right]\,\right]\nonumber\\ 
&\leq& \hskip-0.1cme^{t^2c_m^2/8}~ \Ex_{X_{1 \rightarrow (m-1)}}\left[ \exp\left(\sum_{i=1}^{m-1} (Z_i-Z_{i-1})\,\right)\right]\nonumber\\\nonumber &\leq& \, \cdots \,
 \leq \exp \left(\,\frac{1}{8}\,t^2\,\sum_{i=1}^m\,c_i^2\,\right)\nonumber
 \end{eqnarray} 
Substituting this into~(\ref{markov}) we obtain
\begin{equation}\label{allt}
\Pr\left[f- \Ex[f]\geq\epsilon\right]\leq \exp \left(\,-t\,\epsilon\,+\,\frac{1}{8}\,t^2\,\sum_{i=1}^m\,c_i^2\,\right)
\end{equation}
Since equation~(\ref{allt}) is valid for any $t>0$, we can optimize over $t$. By substituting the value $t=4\epsilon\,(\,\sum c_i^2\,)^{-1}$  we get 
\begin{equation*}
\Pr\left[f- \Ex[f]\geq\epsilon\right]\leq \exp\left(\frac{-2\epsilon^2}{\sum c_i^2}\right).
\end{equation*}
Replacing the function $f$ by $\Ex[f]-f$, it follows that $\Pr\left[ f-\Ex[f] \leq-\epsilon\right]\leq \exp\left(\frac{-2\epsilon^2}{\sum c_i^2}\right)$;  union bounds therefore imply that
$$\Pr\left[| f-\Ex[f]| \geq\epsilon\right]\leq 2\exp\left(\frac{-2\epsilon^2}{\sum c_i^2}\right).$$\vskip-1cm
\end{proof}

\end{document}